\newcommand{\midsepremove}{\aboverulesep = 0mm \belowrulesep = 0mm}
\definecolor{darkpastelgreen}{rgb}{0.01, 0.75, 0.24}
\def\eqref#1{equation~\ref{#1}}
\def\1{\bm{1}}
\def\rs{{S}}
\def\rt{{T}}
\def\rz{{Z}}
\def\rvs{{\mathbf{S}}}
\def\vs{{\mathbf{s}}}
\DeclareMathAlphabet{\mathsfit}{\encodingdefault}{\sfdefault}{m}{sl}
\SetMathAlphabet{\mathsfit}{bold}{\encodingdefault}{\sfdefault}{bx}{n}
\newcommand{\E}{\mathbb{E}}
\newcommand{\R}{\mathbb{R}}
\DeclareMathOperator*{\argmax}{arg\,max}
\newcommand{\valpha}[0]{{\bm{\alpha} }}
\newcommand{\vbeta}[0]{{\bm{\beta} }}
\providecommand{\PROB}{\ensuremath{{\Pr}}\xspace}
\providecommand{\Prob}[2][]{\ensuremath{%
\ifthenelse{\equal{#1}{}}{\PROB[#2]}{\PROB_{#1}[#2]}}\xspace}
\providecommand{\Expect}[2][]{\ensuremath{%
\ifthenelse{\equal{#1}{}}{\mathbb{E}}{\mathbb{E}_{#1}}%
\left[#2\right]}\xspace}
\providecommand{\ExpectC}[3][]{\Expect[#1]{#2\;|\;#3}}
\newtheorem{theorem}{Theorem}
\newtheorem{lemma}{Lemma}[section]
\newtheorem{proposition}[lemma]{Proposition}
\newtheorem{claim}[lemma]{Claim}
\theoremstyle{definition}
\newtheorem{definition}[lemma]{Definition}
\newtheorem{example}[lemma]{Example}
\newtheorem{remark}[lemma]{Remark}
\newtheorem*{claim*}{Claim}
\newtheorem*{proposition*}{Proposition}
\newtheorem*{lemma*}{Lemma}
\newtheorem*{corollary*}{Corollary}
\newtheorem*{remark*}{Remark}
\crefname{algorithm}{Mechanism}{algorithm}
\newcommand{\kibitz}[2]{{\color{#1}{#2}}}
\newcommand{\defn}[1]{{\textbf{\textit{#1}}}}
\newcommand{\fang}[1]{\kibitz{teal}{[F: #1]}} 
\newcommand{\shi}[1]{\kibitz{blue}{[S: #1]}} 
\newcommand{\yc}[1]{\kibitz{magenta}{[YC: #1]}} 
\renewcommand\fang[1]{}
\renewcommand\shi[1]{}
\renewcommand\yc[1]{}
\newcommand{\squishlist}{
   \begin{list}{$\bullet$}
    { \setlength{\itemsep}{0pt}      \setlength{\parsep}{3pt}
      \setlength{\topsep}{3pt}       \setlength{\partopsep}{0pt}
      \setlength{\leftmargin}{1.5em} \setlength{\labelwidth}{1em}
      \setlength{\labelsep}{0.5em} } }
\newcommand{\squishend}{  \end{list}  }
\title{Carrot and Stick:\\ Eliciting Comparison Data and Beyond}
\author{Yiling Chen\thanks{John A. Paulson School of Engineering and Applied Sciences, Harvard University \texttt{yiling@seas.harvard.edu}} \and 
Shi Feng\thanks{John A. Paulson School of Engineering and Applied Sciences, Harvard University 
  \texttt{shifeng-thu@outlook.com}}\and
Fang-Yi Yu\thanks{
Department of Computer Science, George Mason University \texttt{fangyiyu@gmu.edu}
}}
\date{}
\begin{document}

\maketitle

\begin{abstract}
Comparison data elicited from people are fundamental to many machine learning tasks, including reinforcement learning from human feedback for large language models and estimating ranking models. They are typically subjective and not directly verifiable. How to truthfully elicit such comparison data from rational individuals?   
We design peer prediction mechanisms for eliciting comparison data using a bonus-penalty payment~\cite{dasgupta2013crowdsourced}. Our design leverages on the strong stochastic transitivity for comparison data~\cite{vail1953stochastic,davidson1959experimental} to create symmetrically strongly truthful mechanisms such that truth-telling 1) forms a strict Bayesian Nash equilibrium, and 2) yields the highest payment among all symmetric equilibria. 
Each individual only needs to evaluate one pair of items and report her comparison in our mechanism.

We further extend the bonus-penalty payment concept to eliciting networked data, designing a symmetrically strongly truthful mechanism when agents' private signals are sampled according to the Ising models.  We provide the necessary and sufficient conditions for our bonus-penalty payment to have truth-telling as a strict Bayesian Nash equilibrium.  Experiments on two real-world datasets further support our theoretical discoveries. 

\end{abstract}

\section{Introduction}

In the past two decades, researchers have been embracing the challenge of eliciting private information from individuals when there is no ground truth available to evaluate the quality of elicited contributions, and have made amazing progress.  Many mechanisms, collectively called \emph{peer prediction}~\cite{miller2005eliciting}, have been developed to incentivize individuals to strictly truthfully report their information at a Bayesian Nash equilibrium (BNE), by artful design of payment functions that only depend on reports from individuals. Moreover, in multi-task peer prediction mechanisms, the truthful BNE gives each individual the highest expected payoff among all BNEs (i.e. it's a strongly truthful BNE).~\cite{dasgupta2013crowdsourced,shnayder2016informed,kong2016framework}

However, all prior multi-task peer prediction mechanisms require tasks being ex-ante identical, and hence individuals' private information is independently and identically distributed (iid) for each task. Multi-task peer prediction leverages this structure of information to succeed at truthful elicitation. But what if such structure of information doesn't hold for an information elicitation problem? 

One notable application is to elicit pair-wise comparisons of multiple alternatives, such as preferences for consumer products~\cite{scholz2010measuring}, translation~\cite{kreutzer2018reliability}, peer grading~\cite{shah2013case}, and relevance of language model outputs~\cite{cheng2024towards,chiang2024chatbot}.\fang{add chatbot arena reference} Such pair-wise comparison data are crucial for estimating a ranking of the alternatives and for devising reward functions for reinforcement learning. Comparison tasks for different pairs are clearly not ex-ante identical --- answers to the tasks demonstrate a certain degree of transitivity (e.g. if $a$ is preferred to $a'$ and $a'$ is preferred to $a''$, then it's more likely that $a$ is preferred to $a''$), rendering existing peer prediction mechanisms not applicable. 

In this paper, we design a peer prediction mechanism for eliciting comparison data. We model individuals' private information of pair-wise comparisons as Bayesian strongly stochastically transitive (Bayesian SST), which takes many widely used models (e.g. Thurstone~\cite{thurstone2017law}, Bradley-Terry-Luce~\cite{bradley1952rank,luce2005individual}, and Mallows~\cite{1379e110-3390-3287-8a9b-6ccafc65156a}) as special cases. Our mechanism uses a simple bonus-penalty payment~\cite{dasgupta2013crowdsourced} (hence carrot and stick) that takes three reports as inputs and admits a strongly truthful symmetric BNE. The key insight that we develop is a condition of information structure that we call \emph{uniform dominance}. When uniform dominance is satisfied, the bonus-penalty payment is the only type of payment that induces a strictly truthful BNE. Information of individuals, $i$, $j$, and $k$, satisfies uniform dominance if, conditioned on any realization of agent $i$'s information, the probability for $j$'s information to agree with $i$'s is higher than the probability for $k$'s information to agree with $i$'s.  Bayesian SST allows us to group three pairwise comparisons, $(a, a')$, $(a'', a')$ and $(a'',a)$, together such that private information about these pairs satisfies uniform dominance. After identifying uniform dominance as a central structure for incentivizing truthful elicitation, we further generalize the bonus-penalty payment to truthfully elicit private information over social networks that demonstrate homophily (i.e. friends tend to have similar opinions than non-friends)~\cite{mcpherson2001birds}, and our mechanism can be integrated with common survey techniques such as snowball sampling~\cite{goodman1961snowball}. 

\paragraph{Our contributions.} Our work is a leap forward for designing mechanisms for complex information elicitation settings where ground truth verification is not available. 
\squishlist
\item We are the first to design mechanisms to truthfully elicit pairwise comparison data under Bayesian SST and networked data under Ising models. In our mechanisms, truthful reporting forms a BNE and yields a strictly higher payoff than any symmetric non-permutation equilibrium. 
\item We identify a key structure of information, uniform dominance,  as a lever such that the simple bonus-penalty payment is the unique payment inducing a strictly truthful BNE. This identification may offer a path for developing truthful elicitation mechanisms for other settings in the future.
\item We use Griffiths' inequality and Weitz's self-avoiding walk~\cite{weitz2006counting}  to prove the uniform dominance property in the Ising model. The resulting correlation bounds may be of independent interest.
\item We test our mechanisms on real-world data (sushi preference dataset~\cite{kamishima2003nantonac,kamishima2006efficient} and Last.fm dataset~\cite{cantador2011second}).  Even though these datasets do not perfectly satisfy our theoretical assumptions, our mechanisms still provide a stronger incentive for truthful reporting compared to misreporting.
\squishend

\paragraph{Related work.}
Information elicitation has two settings according to whether verification is possible.  Our paper focuses on elicitation without verification.

For information elicitation without verification, \citet{MRZ05} introduce the first mechanism for single task signal elicitation that has truth-telling as a strict Bayesian Nash equilibrium but requires full knowledge of the common prior.  Bayesian truth serum (BTS)~\cite{prelec2004bayesian} is the first strongly truthful peer prediction mechanism, but requires complicated reports from agents (their private signal and predictions on other's reports).  A series of works~\cite{radanovic2013robust,radanovic2014incentives,witkowski2012peer, witkowski_aaai12, zhang2014elicitability,baillon2017bayesian,schoenebeck2023two,kong2016equilibrium} relax certain assumptions of BTS but still require complicated reports from agents.
\citet{dasgupta2013crowdsourced} introduces the multi-task setting where agents are assigned batch iid tasks and only report their signals.  Several works extend this to multiple-choice questions~\cite{kong2020dominantly,kong2019information,2016arXiv160303151S,dasgupta2013crowdsourced}, predictions~\cite{liu2023surrogate}, or even continuous value~\cite{schoenebeck2020learning}, and investigate the limitation and robustness~\cite{schoenebeck2021information,burrell2021measurement,DBLP:journals/corr/abs-2106-03176,feng2022peer,DBLP:journals/corr/abs-2106-03176}.  
Another related line of work is co-training and federated learning, which wants to elicit models~\cite{kong2018water,10.1145/3627676.3627683}, or samples~\cite{wei2021sample} when multiple iid data or feature of data are available.  For more related works, see \citet{faltings2022game}.

One popular line of work considers information elicitation when verification is possible.  
Spot-checking requires direct verification of the agent's report~\cite{gao2019incentivizing}. Recent work on comparison data elicitation~\cite{frangias2023algorithmic} utilizes spot-checking concepts and focuses on incentivizing effort. Another form of verification involves using additional samples to evaluate how the agent's reports improve model performance~\cite{abernethy2011collaborative,karimireddy2022mechanisms}. 
Additionally, the verification may have a general relation to the agent's signal, e.g., proper scoring rules~\cite{gneiting2007strictly,osband1985providing,lambert2008eliciting,frongillo2015vector}.




\section{Problem Formulation}\label{sec:model}

We discuss our model for eliciting comparison data in this section and defer the extensions to \Cref{sec:generalization}. Given a collection of items $\mathcal{A}$ and a set of strategic agents $\mathcal{N}$. Agents privately observe noisy comparisons between pairs of items. Our goal is to design mechanisms to truthfully elicit agents' private information. We will first introduce the information structure of agents' private information of pairwise comparisons in \Cref{sec:sstmodel} and then define the information elicitation problem in \Cref{sec:ppm}. 


\subsection{Bayesian SST Models for Comparison Data}\label{sec:sstmodel}

We introduce Bayesian Strong Stochastic Transitivity (Bayesian SST) models to capture the structure of agents' private information for comparison data. 

Given the set of items $\mathcal{A}$, the underlying unknown state about the items is $\theta \in \Theta$. $\theta$ can be the vector of quality scores for the items (\Cref{ex:para}) or a reference ranking (\Cref{ex:mallows}). $\theta$ is drawn according to a common prior $P_{\Theta}$: $\theta \sim P_{\Theta}$. Any realized $\theta$ has an associated \emph{stochastic comparison function} $\rt_{\theta}:\mathcal{A}^2\to \{-1, 1\}$. For comparisons of two items $a$ and $a'$, $T_{\theta}(a,a')$ and $T_{\theta}(a', a)$ 
stochastically take value $1$ or $-1$, with $\Pr[T_{\theta}(a,a') = 1]= 1- \Pr[T_{\theta}(a', a) = -1]$. For any $\theta$, $T_\theta$ is strongly stochastically transitive as defined below.
\begin{definition}[\cite{vail1953stochastic,davidson1959experimental}]\label{def:transitive}
    A stochastic comparison function, $\rt:\mathcal{A}^2\to \{-1, 1\}$, is \emph{strongly stochastically transitive (SST)} if for all $a, a', a''\in \mathcal{A}$ with $\Pr[\rt(a, a') = 1]> 1/2$ and $\Pr[\rt(a', a'') = 1]> 1/2$, we have 
    $$\Pr[\rt(a, a'') = 1]> \max \{\Pr[\rt(a, a') = 1], \Pr[\rt(a', a'') = 1]\}.$$
\end{definition}
Intuitively, a comparison function is SST when for any three items $a, a', a''$, if $a$ is more favorable than $a'$ and $a'$ is more favorable than $a''$, then $a$ is even more favorable than $a''$. The concept of SST is a well-established property of comparisons in social science and psychology~\cite{fishburn1973binary}. 

Each agent $i\in \mathcal{N}$ has the knowledge of $(T_\theta)_{\theta\in \Theta}$ and $P_\Theta$. When asked to compare a pair of items $(a, a')$, the agent observes an independent draw according to the stochastic comparison function: $\rs_i = \rt_\theta(a, a')$, where realization $s_i=1$ represent item $a$ is preferred over item $a'$ by agent $i$.  We assume items are a priori similar but ex-post distinct so that for all $a, a'\in \mathcal{A}$, $\E[\rt_\theta(a, a')]=\E[\E[\rt_\theta(a, a')\mid \theta]] = 0$ and $\E[\rt_\theta(a,a')\mid \theta]\neq 0$ for all $\theta$.

\paragraph{Examples of Bayesian SST models.} Bayesian SST models are a general family of models that take many classical parametric ranking models, including Bradley-Terry Luce~\cite{bradley1952rank,luce2005individual}, Thurstone (Case V)~\cite{thurstone2017law}, and Mallows $\eta$-model~\cite{1379e110-3390-3287-8a9b-6ccafc65156a}, as special cases.

\begin{example}[Bradley-Terry-Luce, Thurstone model, and more~\cite{tversky1969substitutability}]\label{ex:para}
Let $\theta\in \R^\mathcal{A} = \Theta$ where each coordinate is independently and identically sampled from a fixed non-atomic distribution $\nu$ on $\R$, and each item $a$ have a scalar quality $\theta_a\in \mathbb{R}$.  Let $F:
\R\to [0,1]$ be any strictly increasing function  such that $F(t) = 1-F(-t)$ for all $t\in \R$. Conditional on a fixed $\theta$, 
$$\Pr[T_\theta(a, a') = 1] = F\left(\theta_a-\theta_{a'}\right)\text{ for all }a, a'\in \mathcal{A}.$$
This model recovers the Thurstone model~\cite{thurstone2017law} by setting $F(t) = \Phi(t)$ where $\Phi$ is the Gaussian CDF, and the Bradley-Terry-Luce model~\cite{bradley1952rank} by setting $F(t) = \frac{e^t}{1+e^t}$, the sigmoid function.  Moreover, this model also contains any additive random utility model~\cite{azari2012random} where $T(a,a') = 1$ if $\theta_a+\rz>\theta_{a'}+\rz'$ with iid noise $\rz$ and $\rz'$, because we can set $F$ to be the CDF of the difference of two iid noise. 
\end{example}

\begin{proposition}\label{prop:para}
    For any strictly increasing $F$ and non-atomic $\nu$ on $\R$, the parametric model in \Cref{ex:para} is a Bayesian SST model.
\end{proposition}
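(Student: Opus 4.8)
The plan is to verify the three properties that constitute a Bayesian SST model for the construction in \Cref{ex:para}: (i) for every realized $\theta$ the stochastic comparison function $T_\theta$ is SST in the sense of \Cref{def:transitive}; (ii) $\E[T_\theta(a,a')]=0$ for every pair $a,a'$; and (iii) $\E[T_\theta(a,a')\mid\theta]\neq 0$ for (almost) every $\theta$ whenever $a\neq a'$.

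For (i), I would first record the elementary consequences of $F$ being strictly increasing with $F(t)=1-F(-t)$: plugging in $t=0$ gives $F(0)=1/2$, and strict monotonicity then yields $F(t)>1/2 \iff t>0$. Hence $\Pr[T_\theta(a,a')=1]=F(\theta_a-\theta_{a'})>1/2$ is equivalent to $\theta_a>\theta_{a'}$. Now fix $a,a',a''$ with $\theta_a>\theta_{a'}$ and $\theta_{a'}>\theta_{a''}$; writing $\theta_a-\theta_{a''}=(\theta_a-\theta_{a'})+(\theta_{a'}-\theta_{a''})$ exhibits it as a sum of two strictly positive quantities, so it strictly exceeds each summand, and applying the strictly increasing $F$ gives $F(\theta_a-\theta_{a''})>\max\{F(\theta_a-\theta_{a'}),F(\theta_{a'}-\theta_{a''})\}$, which is exactly the SST inequality.

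For (ii) and (iii), the key computation is $\E[T_\theta(a,a')\mid\theta]=1\cdot F(\theta_a-\theta_{a'})+(-1)\cdot(1-F(\theta_a-\theta_{a'}))=2F(\theta_a-\theta_{a'})-1$. For (iii), since $F$ is strictly increasing with $F(0)=1/2$, this expression vanishes if and only if $\theta_a=\theta_{a'}$; because $\theta_a$ and $\theta_{a'}$ are i.i.d. draws from the non-atomic measure $\nu$ and $a\neq a'$, the event $\{\theta_a=\theta_{a'}\}$ has probability zero, so after restricting $\Theta$ to the full-measure set on which all coordinates of $\theta$ are distinct we get $\E[T_\theta(a,a')\mid\theta]\neq 0$ there. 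For (ii), take a further expectation over $\theta$: $\E[T_\theta(a,a')]=2\E[F(\theta_a-\theta_{a'})]-1$, and note that $D:=\theta_a-\theta_{a'}$ is symmetric about $0$ (being the difference of two i.i.d. random variables), so $\E[F(D)]=\E[F(-D)]=\E[1-F(D)]=1-\E[F(D)]$, which forces $\E[F(D)]=1/2$ and hence $\E[T_\theta(a,a')]=0$.

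None of these steps is technically deep; the only points that need a little care are the symmetrization argument in (ii) — this is where the hypothesis $F(t)=1-F(-t)$ and the i.i.d. structure of the coordinates of $\theta$ are used together — and the reading of the ``for all $\theta$'' clause in (iii) as ``for $P_\Theta$-almost every $\theta$'', which is precisely why $\nu$ is assumed non-atomic.
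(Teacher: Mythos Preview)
Your proposal is correct and follows essentially the same approach as the paper: both proofs use $F(0)=1/2$ and strict monotonicity to reduce the SST inequality to $\theta_a-\theta_{a''}>\max\{\theta_a-\theta_{a'},\theta_{a'}-\theta_{a''}\}$, and both invoke non-atomicity of $\nu$ to ensure the coordinates of $\theta$ are almost surely distinct. Your argument for the a-priori-similar property via the symmetry of $D=\theta_a-\theta_{a'}$ together with $F(t)=1-F(-t)$ is a slightly more explicit unwinding of the paper's one-line appeal to exchangeability of the coordinates of $\theta$, but it is the same idea.
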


\begin{example}[Mallows $\eta$-model~\cite{1379e110-3390-3287-8a9b-6ccafc65156a}]\label{ex:mallows} Let $\Theta$ be the set of rankings on $\mathcal{A}$ and $\eta>0$ be a dispersion parameter.  Given a reference ranking $\theta\in \Theta$, the \emph{Mallows $\eta$-model} generate a ranking $\phi\in \Theta$ with probability
$\Pr(\phi) \propto \exp(-\eta d(\theta, \phi))$
where $d(\theta, \phi) = \left|\{(a, a')\in \mathcal{A}^2: \theta(a)<\theta(a')\text{ and }\phi(a)>\phi(a')\}\right|$ is Kendall's tau distance, and $\theta(a)$ is the rank of item $a$.  
Therefore, to generate comparisons, we first sample a uniform $\theta$ and 
\begin{equation*}
    \Pr[T_\theta(a, a') = 1] = \sum_{\phi: \phi(a)>\phi(a')}\Pr(\phi), \text{ for all }a, a'\in \mathcal{A}.
\end{equation*}

\end{example}

\begin{proposition}\label{prop:mallows}
    For any $\eta>0$, Mallows $\eta$-model in \Cref{ex:mallows} with uniform distribution on reference ranking is an Bayesian SST model.  
\end{proposition}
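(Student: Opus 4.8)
The plan is to verify, for the Mallows $\eta$-model with a uniform reference ranking, the three requirements in the definition of a Bayesian SST model given in \Cref{sec:sstmodel}: (i) $T_\theta$ is SST in the sense of \Cref{def:transitive} for every reference ranking $\theta$; (ii) $\E[T_\theta(a,a')]=0$; and (iii) $\E[T_\theta(a,a')\mid\theta]\neq 0$ for every $\theta$. Essentially all of the work is in (i); (iii) will drop out of the same analysis, and (ii) is a one-line symmetry argument.

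Write $a\succ_\theta a'$ when $\theta$ ranks $a$ above $a'$, and set $q=e^{-\eta}\in(0,1)$. The main device is the \emph{repeated-insertion} representation of the Mallows distribution: relabel the items so that $\pi_1\succ_\theta\pi_2\succ_\theta\cdots\succ_\theta\pi_m$, and build $\phi$ by inserting $\pi_1,\dots,\pi_m$ in this order, placing $\pi_j$ so that exactly $K_j$ of the already-inserted items end up below it, where $K_j\in\{0,\dots,j-1\}$ with $\Pr[K_j=k]\propto q^{k}$ and the $K_j$ are independent. Since $\pi_j$ is $\theta$-below all of $\pi_1,\dots,\pi_{j-1}$, placing it above $k$ of them contributes exactly $k$ inversions, so $d(\theta,\phi)=\sum_j K_j$ and the law of $\phi$ is proportional to $q^{\sum_j K_j}=e^{-\eta d(\theta,\phi)}$; i.e.\ this is exactly the Mallows $\eta$-model of \Cref{ex:mallows}. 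The key structural fact is that for $s<t$ the relative $\phi$-order of $\pi_s$ and $\pi_t$ is fixed as soon as $\pi_t$ is inserted (later insertions never reorder existing items), so if $R_t\in\{1,\dots,t-1\}$ denotes the position of $\pi_s$ from the bottom just before $\pi_t$ is inserted, then $\Pr[T_\theta(\pi_s,\pi_t)=1]=\Pr[R_t>K_t]$, with $R_t$ a function of $K_1,\dots,K_{t-1}$ (hence independent of $K_t$) and $\Pr[K_t<r]=\tfrac{1-q^{r}}{1-q^{t}}$.

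The heart of the argument is the monotonicity claim: for fixed $s$ and $s+1\le t\le m-1$,
\[
\Pr[T_\theta(\pi_s,\pi_{t+1})=1]\;>\;\Pr[T_\theta(\pi_s,\pi_{t})=1].
\]
I would prove it by conditioning on $R_t=r$ and using the update $R_{t+1}=r+\mathbf{1}[K_t<r]$ (inserting $\pi_t$ bumps $\pi_s$ one slot further from the bottom exactly when $\pi_t$ lands below it), which puts $R_{t+1}=r+1$ with probability $\tfrac{1-q^r}{1-q^t}$ and $R_{t+1}=r$ otherwise; substituting into $\Pr[K_{t+1}<R_{t+1}]=\tfrac{1-q^{R_{t+1}}}{1-q^{t+1}}$ and subtracting $\tfrac{1-q^r}{1-q^t}$, the difference collapses (after routine algebra) to $\tfrac{(1-q^r)(1-q)(q^r-q^t)}{(1-q^t)(1-q^{t+1})}$, which is strictly positive because $0<q<1$ and $r\le t-1<t$; averaging over $R_t$ gives the claim. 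The base case $t=s+1$ is computed directly: there $R_{s+1}=K_s+1$, so $\Pr[T_\theta(\pi_s,\pi_{s+1})=1]=\Pr[K_{s+1}\le K_s]=\tfrac1{1+q}>\tfrac12$. Chaining the monotonicity upward from $t=s+1$ shows $\Pr[T_\theta(\pi_s,\pi_t)=1]$ is strictly increasing in $t$ and always exceeds $\tfrac12$; together with $\Pr[T_\theta(a,a')=1]=1-\Pr[T_\theta(a',a)=1]$ this yields $\Pr[T_\theta(a,a')=1]>\tfrac12\iff a\succ_\theta a'$ and, in every case, $\Pr[T_\theta(a,a')=1]\neq\tfrac12$, which is property (iii) since $\E[T_\theta(a,a')\mid\theta]=2\Pr[T_\theta(a,a')=1]-1$.

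Property (i) then follows: the SST hypotheses $\Pr[T_\theta(a,a')=1]>\tfrac12$ and $\Pr[T_\theta(a',a'')=1]>\tfrac12$ force $a\succ_\theta a'\succ_\theta a''$ and hence $a\succ_\theta a''$; writing $a=\pi_s,a'=\pi_t,a''=\pi_u$ with $s<t<u$, the chained monotonicity gives $\Pr[T_\theta(a,a'')=1]>\Pr[T_\theta(a,a')=1]$, and applying the same statement to the reverse ranking $\bar\theta$ (which is again a Mallows $\eta$-model, with $\Pr[T_{\bar\theta}(x,y)=1]=\Pr[T_\theta(y,x)=1]$) gives $\Pr[T_\theta(a,a'')=1]>\Pr[T_\theta(a',a'')=1]$; combining yields the strict inequality of \Cref{def:transitive}. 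Finally, property (ii): relabelling $a\leftrightarrow a'$ (or reversing $\theta$) is a measure-preserving involution of the uniform prior that negates $\E[T_\theta(a,a')\mid\theta]$, so its average over $\theta$ is $0$. I expect the only delicate point to be setting up the repeated-insertion representation and the bookkeeping around $R_t$ (which items count as ``below'' whom, and that $R_t$ depends only on $K_1,\dots,K_{t-1}$); once that scaffolding is in place, everything else is a short computation.
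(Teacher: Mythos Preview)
Your argument is correct and complete. It takes a genuinely different route from the paper's proof, so let me briefly contrast the two.

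The paper quotes a closed-form expression from \citet{pmlr-v32-busa-fekete14}: for $\theta$ ranking $a$ above $a'$,
\[
\Pr[T_\theta(a,a')=1]=h_\eta\bigl(\theta(a')-\theta(a)+1\bigr)-h_\eta\bigl(\theta(a')-\theta(a)\bigr),\qquad h_\eta(x)=\frac{x}{1-e^{-\eta x}},
\]
so the pairwise probability depends only on the rank gap. It then proves (via a short convexity calculation on $h_\eta$) that the first difference $h_\eta(x+1)-h_\eta(x)$ is strictly increasing in the positive integers and always exceeds $1/2$; SST follows immediately because $\theta(a'')-\theta(a)>\max\{\theta(a')-\theta(a),\theta(a'')-\theta(a')\}$.

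You instead build the Mallows sample via the repeated-insertion representation and track the bottom-relative position $R_t$ of a fixed item as later items are inserted. Your conditional identity $\Pr[K_{t+1}<R_{t+1}\mid R_t=r]-\Pr[K_t<r]=\tfrac{(1-q^r)(1-q)(q^r-q^t)}{(1-q^t)(1-q^{t+1})}>0$ gives monotonicity of $\Pr[T_\theta(\pi_s,\pi_t)=1]$ in $t$ directly, and the reverse-ranking symmetry $\Pr[T_{\bar\theta}(x,y)=1]=\Pr[T_\theta(y,x)=1]$ supplies monotonicity in $s$. The base-case computation $\Pr[K_{s+1}\le K_s]=\tfrac{1}{1+q}$ is correct (it telescopes to $(1-q^{s+1})/\bigl((1+q)(1-q^{s+1})\bigr)$ after summing the geometric series), and the remaining checks on (ii) and (iii) are fine.

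The trade-off: the paper's route is shorter and yields a clean statement that the pairwise probability is a function of the rank gap alone, but it leans on an external closed-form result. Your route is fully self-contained and arguably more elementary, at the cost of a longer setup (the RIM scaffolding and $R_t$ bookkeeping); it never needs the closed form, and the reverse-ranking trick neatly replaces the ``function of the gap'' observation.
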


The proofs for \cref{prop:para,prop:mallows} are closely related to strong stochastic transitivity~\cite{shah2016stochastically,pmlr-v32-busa-fekete14}, but are provided in the appendix for completeness.

\subsection{Peer Prediction Mechanism Design}\label{sec:ppm}

To truthfully elicit comparison data from agents, a peer prediction mechanism creates a game between the agents outlined below:  First, we choose an \emph{assignment} $\mathcal{E} = \{e_i = (a_{u_i}, a_{v_i}): i\in \mathcal{N}\}$  where agent $i\in \mathcal{N}$ gets a pair of items $e_i = (a_{u_i}, a_{v_i})\in \mathcal{A}^2$ to compare. Then each agent $i\in \mathcal{N}$ privately observes the realization of the comparison (signal) $s_i\in \{-1, 1\}$, which is an independent realization of $T_\theta(a_{u_i},a_{v_i})$, and reports $\hat{s}_i\in \{-1, 1\}$ potentially different from her signal. We use $S_i=S(a_{u_i},a_{v_i})$ to denote the random variable of agent $i$'s signal, where the randomness of $S(\cdot, \cdot)$ comes from both $\theta$ and $T_\theta$. Let $\rvs$ represent the random vector of all agents' signals, ${\vs} = (s_i)_{i\in \mathcal{N}}$ be all agents' realized private signals and $\hat{\vs} = (\hat{s}_i)_{i\in \mathcal{N}}$ be all agents' reports.\yc{I removed the part describing joint distribution of signals and the prior. I thought with the previous subsection, we don't need that.}  
Finally, a \emph{peer prediction mechanism} $(M_i)_{i\in \mathcal{N}}$ takes all agents' reports $\hat{\vs}$ and pays agent $i$ with $M_i(\hat{\vs})\in \R$.

Each agent $i$'s strategy is a random function from her signal to a report $\sigma_i: s_i\mapsto \hat{s}_i$, and the randomness of their strategies is independent of each other's and all signals.  With slight abuse of notation, we write $\sigma_i(s_i, \hat{s}_i) = \Pr[\hat{\rs}_i = \hat{s}_i\mid \rs_i = s_i]$ as the conditional probability of reporting $\hat{s}_i$ given private signal $s_i$.  A strategy profile $\bm{\sigma}$ is a collection of all agent's strategies.  All agents are rational and risk-neutral, so they want to maximize their expected payments.  Thus, given prior $P_\Theta$, randomness of $T_\theta$ and a strategy profile $\bm{\sigma}$, agent $i$ wants to maximize her ex-ante payment denoted as $\E_{\bm{\sigma}, \theta, T_\theta}[M_i(\hat{\rvs})]$ where $\hat{\rvs}$ is the random vector of all agents' report that depends on the signals $\rvs$ and strategy profile $\bm{\sigma}$.

We introduce three families of strategies, truth-telling, permutation, and uninformed strategy profiles, which are central to understanding effective peer prediction mechanisms.  
\squishlist
    \item A strategy $\sigma_i$ is \emph{truthful} (or truth-telling) if it is a deterministic identity map, $\sigma_i(s_i) = s_i$. A strategy profile is truthful if all agents' strategies are truthful.  
    \item A \emph{permutation strategy profile} is where agents simultaneously relabel their signals and then report the relabeled ones. A permutation strategy is indistinguishable from truth-telling unless the peer prediction mechanism has additional knowledge about the prior signal distribution.\cite{kong2019information}   
    \item Finally, a strategy is \emph{uninformed} if it has the same report distribution across all signals, and it is informed otherwise.  Common examples include consistently reporting all signals as a constant value, such as $1$ or $-1$, or using a random report regardless of the signal.   Uninformed strategies are undesirable as the reports bear no relationship to the private signals.
\squishend
A strategy profile is \emph{symmetric} if all agents use the same strategy.  For example, both truth-telling and permutation strategy profiles are symmetric.

We now introduce goals for a peer prediction mechanism that favors truth-telling more than other strategies.  First, we want the truth-telling (strategy profile) to be a strict \emph{Bayesian Nash equilibrium (BNE)} so that any agent's payment would strictly decrease if she unilaterally changes to any non-truthful strategy.  
Moreover, there may be multiple equilibria, and a desirable mechanism should ensure that truth-telling is better than all other equilibria.  In this paper, we aim for symmetrically strongly truthful mechanisms defined below.
\begin{definition}\label{def:symm}
    A peer prediction mechanism is \defn{symmetrically strongly truthful} if truth-telling is a BNE, and each agent's expected payment in truth-telling is no less than the payment in any other symmetric equilibrium with equality for the equilibrium with a permutation strategy profile.\footnote{\citet{kong2016framework} shows that it is impossible to pay the truth-telling strategy profile strictly better than other permutation strategy profiles without additional knowledge of the prior signal distribution.}  
\end{definition}

\section{Bonus-penalty Payment Mechanism for Comparison Data}\label{sec.comparison}

We now propose a \defn{bonus-penalty payment mechanism} for eliciting comparison data.  
The mechanism makes use of a \emph{bonus-penalty payment} function, which can be seen as an agreement payment and introduced by \citet{dasgupta2013crowdsourced} in a different context (see discussion in \cref{sec:discussion}).  Formally, for any $\hat{s}_i, \hat{s}_j, \hat{s}_k\in \{-1, 1\}$, the bonus-penalty payment function is 
\begin{equation}\label{eq:ca}
    U^{BPP}(\hat{s}_i, \hat{s}_j, \hat{s}_k) = \hat{s}_i\hat{s}_j-\hat{s}_i\hat{s}_k = 2\left(\mathbf{1}[\hat{s}_i = \hat{s}_j]-\mathbf{1}[\hat{s}_i = \hat{s}_k]\right),
\end{equation}
which rewards when the first input agrees with the second but punishes when it agrees with the third.  

\cref{alg:comparison} uses the bonus-penalty payment~\cref{eq:ca} for each agent $i$ by carefully choosing agent $j$ and $k$ such that agent $j$'s signal is more likely to agree with agent $i$'s than agent $k$'s signal is. The crux of finding such pair of agents is to show that if agent $i$ prefers item $a$ over $a'$, she would expect that others will prefer any third item $a''$ over $a'$, and prefer $a$ over $a''$.  Thus, if agent $j$ has pair $(a'', a')$ and agent $k$ has pair $(a'', a)$, then agent $j$'s signal is more likely to take the same value as $i$'s than agent $k$'s signal is. This is the main idea behind the proof of Theorem \ref{thm:comparison}, where we establish the symmetrically strongly truthfulness of \cref{alg:comparison}. 
To ensure the existence of such pairs are assigned, we require the assignment $\mathcal{E}$ to be \defn{admissible} where for all $(a, a')\in \mathcal{E}$, there exists $a''\in \mathcal{A}$ so that $(a'', a')$ and $(a'', a)\in \mathcal{E}$.

\begin{algorithm}[]
\caption{BPP mechanism for comparison data}\label{alg:comparison}
\KwIn{Let $\mathcal{A}$ be a collection of items, $\mathcal{E}$ be an admissible assignment, and $\hat{\vs}$ be agents' reports.}
\For{agent $i\in \mathcal{N}$ with pair $e_i = (a_{u_i}, a_{v_i}) = (a, a')$}
{
Find $a''\in \mathcal{A}$ and two agents $j$ and $k$ so that $e_j = (a'', a')$ and $e_k = (a'', a)$, and pay agent $i$
\begin{equation}\label{eq:ca_comparison}
    M_i(\hat{\vs}) = U^{BPP}(\hat{s}_i, \hat{s}_j, \hat{s}_k) = \hat{s}_{i}\hat{s}_{j}-\hat{s}_{i}\hat{s}_{k}.
\end{equation}}
\end{algorithm}

\begin{theorem}\label{thm:comparison}
    Given a collection of items $\mathcal{A}$ and a set of agents $\mathcal{N}$ with $|\mathcal{A}|, |\mathcal{N}|\ge 3$, for any admissible assignment matrix $\mathcal{E}$ and Bayesian SST model with $(T_\theta)_{\theta\in \Theta}$ and $P_\Theta$, the BPP mechanism for  comparison (\cref{alg:comparison}) is symmetrically strongly truthful.  
\end{theorem}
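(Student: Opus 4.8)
The plan is to prove the two requirements of symmetric strong truthfulness (\Cref{def:symm}) separately: (i) truth-telling is a BNE, in fact a strict one among informed deviations; and (ii) among all symmetric equilibria, truth-telling maximizes each agent's expected payment, with equality exactly for permutation strategy profiles. The whole argument rests on a single structural fact about the Bayesian SST model, which I would isolate as a lemma: for every agent $i$ with pair $e_i=(a,a')$, the chosen peers $j$ (with $e_j=(a'',a')$) and $k$ (with $e_k=(a'',a)$) satisfy what the introduction calls \emph{uniform dominance} — conditioned on any realization $s_i$ of $S_i$, the probability that $S_j$ agrees with $s_i$ strictly exceeds the probability that $S_k$ agrees with $s_i$. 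Concretely, I would show $\Pr[S_j = s_i \mid S_i = s_i] > \Pr[S_k = s_i \mid S_i = s_i]$ for $s_i\in\{-1,1\}$, or equivalently $\E[S_i S_j] > \E[S_i S_k]$ after noting these conditional probabilities are symmetric in the sign of $s_i$ because $\E[T_\theta(a,a')]=0$.

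First I would establish this key lemma. Conditioning on $\theta$, the three signals $S_i, S_j, S_k$ are independent draws of $T_\theta(a,a'), T_\theta(a'',a'), T_\theta(a'',a)$ respectively, so $\E[S_iS_j\mid\theta] = \E[T_\theta(a,a')]\cdot\E[T_\theta(a'',a')]$ and similarly for the $k$ term; hence $\E[S_iS_j] - \E[S_iS_k] = \E_\theta\big[\E[T_\theta(a,a')]\,(\E[T_\theta(a'',a')] - \E[T_\theta(a'',a)])\big]$. Writing $m_\theta(x,y) := \E[T_\theta(x,y)] = 2\Pr[T_\theta(x,y)=1]-1 \in (-1,1)$ (nonzero by the "a priori similar but ex-post distinct" assumption), I need to show this $\theta$-expectation is strictly positive. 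The SST property of $T_\theta$ (\Cref{def:transitive}) is exactly what delivers a pointwise sign agreement: for each $\theta$, case on the sign of $m_\theta(a,a')$, use SST together with the identity $m_\theta(y,x) = -m_\theta(x,y)$ to deduce that $m_\theta(a,a')$ and $m_\theta(a'',a') - m_\theta(a'',a)$ have the same sign — e.g. if $m_\theta(a,a')>0$ and $m_\theta(a'',a')>0$ then SST gives $m_\theta(a'',a) > m_\theta(a'',a')$... wait, rather $a'\prec a$ and whichever order of $a''$, a short case analysis on the relative $\theta$-order of $a,a',a''$ shows $m_\theta(a,a')\cdot(m_\theta(a'',a')-m_\theta(a'',a)) > 0$ in every case, and this product is the integrand. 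So $\E[S_iS_j] - \E[S_iS_k] > 0$.

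Next, granting the lemma, I would verify (i). Under an arbitrary strategy profile $\bm\sigma$, agent $i$'s expected payment is $\E[\hat S_i\hat S_j] - \E[\hat S_i\hat S_k]$; since strategies are independent across agents and of the signals, $\E[\hat S_i\hat S_j] = \E[g_i(S_i)\,g_j(S_j)]$ where $g_\ell(s) := \E[\hat S_\ell\mid S_\ell = s] = \sigma_\ell(s,1)-\sigma_\ell(s,-1)\in[-1,1]$. Fixing the peers' strategies and conditioning on $\theta$, agent $i$'s payoff is linear in the pair $(g_i(1), g_i(-1))$, so it is maximized at an extreme point, i.e. by a deterministic strategy; comparing the four deterministic strategies (identity, flip, constant $+1$, constant $-1$) using the lemma shows identity is the unique best response whenever $j$ and $k$ themselves play informed strategies with $g_j\equiv g_k$ up to the right sign — in particular identity is a strict best response to truth-telling, establishing that truth-telling is a strict BNE.

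Finally, for (ii) I would use a correlation/agreement argument in the style of Dasgupta–Ghosh and Kong–Schoenebeck: in any symmetric profile where every agent plays the same $g(\cdot)$, agent $i$'s expected payment equals $\E_\theta[\, \bar g_\theta \cdot (m_\theta(a,a') \text{-weighted difference})\,]$ where $\bar g_\theta$ is a nonnegative "informativeness" factor of $g$ that is maximized (equal to $1$) exactly when $g$ is $\pm$identity, i.e. truth-telling or its sign-flip permutation; combined with the lemma's strict positivity of the $\theta$-integrand this yields that truth-telling weakly dominates every symmetric equilibrium and ties only with the permutation profile. The main obstacle I anticipate is the case analysis in the key lemma — making sure the SST inequality, which is only stated for the "both probabilities $>1/2$" regime, is correctly leveraged (via the antisymmetry $m_\theta(y,x)=-m_\theta(x,y)$ and the non-atomicity/distinctness assumptions that rule out ties) to cover all six orderings of $a,a',a''$ under $\theta$ and to get a \emph{strict} rather than weak sign agreement.
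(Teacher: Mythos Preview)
Your proposal is correct and follows the paper's route for the key structural lemma (uniform dominance from SST via the pointwise inequality $m_\theta(a,a')\big(m_\theta(a'',a')-m_\theta(a'',a)\big)>0$, exactly the paper's \cref{lem:SST2UD} and its supporting claim) and for part (i) (the paper's \cref{lem:ud}). Where you diverge is part (ii): the paper does \emph{not} compute the symmetric-profile payoff directly but instead \emph{classifies} all symmetric equilibria, using a best-response characterization (\cref{lem:sst}) to show that any non-deterministic symmetric equilibrium must be uninformed, hence every symmetric equilibrium is either uninformed (payment $0$ by \cref{lem:uninformed}) or a permutation. Your direct approach also works and is arguably cleaner: with $\mu=\tfrac{g(1)+g(-1)}{2}$ and $\delta=\tfrac{g(1)-g(-1)}{2}$ one gets the symmetric-profile payment equal to $\delta\mu\,\E_\theta[m_j-m_k]+\delta^2\,\E_\theta[m_i(m_j-m_k)]=\delta^2\cdot C$ with $C>0$, where the cross term vanishes by the \emph{a priori similar} assumption---so the ``informativeness factor'' is simply $\delta^2$, not a $\theta$-dependent $\bar g_\theta$ as you wrote, and you should flag explicitly that killing the $\mu$-term is where $\E_\theta[m_\theta(\cdot,\cdot)]=0$ is used. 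Your route yields the slightly stronger conclusion that truth-telling dominates \emph{all} symmetric profiles (not only equilibria), while the paper's classification gives more structural information about which symmetric profiles are equilibria; both establish \cref{thm:comparison}. Your anticipated obstacle---the six-case analysis for the SST claim using antisymmetry and ex-post distinctness to rule out ties---is real but routine, and matches exactly what the paper does.
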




We defer the proof of \cref{thm:comparison} to \cref{sec.comparison_proof}. The admissible condition imposes little overhead on downstream learning problems, including rank recovery~\cite{hunter2004mm} and identification of the top $k$ items~\cite{eriksson2013learning}. Specifically, the size of assignment $\mathcal{E}$ is the number of comparisons and corresponds to the sample complexity for these learning problems. If a learning algorithm requires a set of pairs to compare $\mathcal{E}^{ML}$, we can construct an admissible superset 
 $\mathcal{E}$ that introduces a constant factor overhead and can recover $\mathcal{E}^{ML}\subseteq \mathcal{E}$.\footnote{Specifically, given any assignment $\mathcal{E}^0$, we can construct a superset $\mathcal{E}$ so that for any $(a,a')\in \mathcal{E}^0$ find an arbitrary $a''\neq a, a'$ and add $(a,a'), (a', a), (a, a''), (a'', a), (a', a''), (a'', a')$ into $\mathcal{E}$.  Thus, $\mathcal{E}$ is admissible and at most six times larger than $\mathcal{E}^0$.} 

We remark that the bonus-penalty payment function~\cref{eq:ca_comparison} can be seen as a boolean function for \emph{transitivity}~\cite{o2014analysis}; see \cref{rem:stt} for a formal statement.  Hence, \cref{thm:comparison} implies that agents' manipulations can only decrease the probability of transitivity among their reports.

\fang{Remove the following for space}
\begin{remark}\label{rem:stt}
Note that a deterministic comparison function $t: \mathcal{A}\times\mathcal{A}\to \{-1,1\}$ satisfies transitivity on three items $a, a', a''\in \mathcal{A}$ if and only if $t(a, a'), t(a', a''), t(a'', a)$ are not all equal, that is $NAE(t(a, a'), t(a', a''), t(a'', a)) = 1$ where
\begin{equation*}
    NAE(w_1, w_2, w_3) = \frac{3}{4}-\frac{1}{4}w_1w_2-\frac{1}{4}w_1w_3-\frac{1}{4}w_2w_3.
\end{equation*}
The agent's random noisy comparisons may or may not satisfy transitivity. The probability of transitivity is the probability that they do.  

We can show that the bonus-penalty payment in \cref{eq:ca_comparison} is equivalent to the above transitivity test when agents are truth-telling. Formally,
\begin{align*}
    &NAE(S(a, a'), S(a', a''), S(a'', a))\\
    =& \frac{3}{4}-\frac{1}{4}\left(S(a, a')S(a', a'')+S(a, a')S(a'', a)+S(a', a'')S(a'', a)\right)\\
    =& \frac{1}{4}\left(S(a, a')S(a'', a')-S(a, a')S(a'', a)\right)+\frac{3}{4}+\frac{1}{4}S(a'',a')S(a'', a)\tag{$S(a', a'') = -S(a'', a')$}\\
    =& \frac{1}{4}\left(s_i s_j-s_is_k\right)+\frac{3}{4}+\frac{1}{4}s_js_k = \frac{1}{4}M_i(\vs)+\frac{3}{4}+\frac{1}{4}s_js_k\tag{truth-telling}
\end{align*}
Therefore, $\argmax_{\hat{s}_i} \E[NAE(\hat{s}_i, -S_j, S_k)|S_i = s_i] = \argmax_{\hat{s}_i}\E\left[\frac{1}{4}M_i(\hat{s}_i, S_j, S_k)+\frac{3}{4}+\frac{1}{4}S_jS_k|S_i = s_i\right] = \argmax_{\hat{s}_i}\E[M_i(\hat{s}_i, S_j, S_k)|S_i = s_i]$. 
\end{remark}
\section{Proof of Theorem~\ref{thm:comparison}: from Bayesian SST Model to Uniform Dominance}\label{sec.comparison_proof}

To prove \cref{thm:comparison}, we formalize the idea that agent $j$'s signal is more likely to agree with agent $i$'s than agent $k$'s is as what we call {uniform dominance} in \cref{def:ud}. We'll show that any Bayesian SST model satisfies this property. Then, we'll prove that BPP mechanism is symmetrically strongly truthful when agents' private signals satisfy uniform dominance.


\begin{definition}\label{def:ud}
   Given a random vector $(\rs_i, \rs_j, \rs_k)\in \{-1,1\}^3$ with joint distribution $P$, $\rs_j$ \defn{uniformly dominates} $\rs_k$ for $\rs_i$ if 
$\Pr[\rs_j = 1\mid \rs_i = 1]> \Pr[\rs_k = 1\mid \rs_i = 1]$ and $\Pr[\rs_j = -1\mid \rs_i = -1]>\Pr[\rs_k = -1\mid \rs_i = -1]$.
We call such an ordered tuple $\langle \rs_i, \rs_j, \rs_k\rangle$ a uniformly dominant tuple.\footnote{If we view $\rs_j$ and $\rs_k$ as two statistical tests for a binary event $\rs_i$, the two inequalities in \cref{def:ud} say that $\rs_j$ has a better type II and type I error than $\rs_k$ respectively.}
\end{definition}
Lemma \ref{lem:SST2UD} shows how to identify uniformly dominant tuples under Bayesian SST models.  
\begin{lemma}\label{lem:SST2UD}
    Under any Bayesian SST model, for any agent $i$ and items $a$, $a'$ and $a''$, agent $j$'s signal $\rs_j = S(a'', a')$ uniformly dominates agent $k$'s signal $\rs_k = S(a'', a)$ for signal $\rs_i = S(a, a')$.
\end{lemma}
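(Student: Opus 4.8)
The plan is to unfold the definition of uniform dominance (\Cref{def:ud}) for the specific tuple $\langle S(a,a'),\, S(a'',a'),\, S(a'',a)\rangle$ and reduce each of the two required inequalities to a statement about the stochastic comparison function $T_\theta$ under the common prior. Concretely, I need to show
\begin{equation*}
\Pr[S(a'',a')=1\mid S(a,a')=1] > \Pr[S(a'',a)=1\mid S(a,a')=1]
\end{equation*}
and the mirror inequality with all signs flipped. Since the three signals $S(a,a')$, $S(a'',a')$, $S(a'',a)$ are, conditionally on $\theta$, independent draws from $T_\theta$ on the respective pairs, I would first condition on $\theta$ and write each conditional probability as a posterior average over $\theta$: e.g. $\Pr[S(a'',a')=1\mid S(a,a')=1] = \E_\theta\!\left[\Pr[T_\theta(a'',a')=1]\ \middle|\ S(a,a')=1\right]$, using conditional independence to drop the dependence of the first event on the conditioning event once $\theta$ is fixed. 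So it suffices to compare $\E_\theta[\Pr[T_\theta(a'',a')=1]\mid S(a,a')=1]$ with $\E_\theta[\Pr[T_\theta(a'',a)=1]\mid S(a,a')=1]$ — i.e., to show that, in the posterior given $S(a,a')=1$, the pair $(a'',a')$ is on average more favorable to $a''$ than the pair $(a'',a)$ is.

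The key structural input is SST (\Cref{def:transitive}) applied pointwise in $\theta$. Fix $\theta$ and abbreviate $p_{xy} = \Pr[T_\theta(x,y)=1]$. I want a clean pointwise inequality relating $p_{a''a'}$ and $p_{a''a}$ that holds for every $\theta$ consistent with the event we condition on, but the honest situation is subtler: SST only gives a strict inequality when the relevant probabilities exceed $1/2$, and the direction of the comparison $a$ vs.\ $a'$ is not fixed by $\theta$ but only made more likely by conditioning on $S(a,a')=1$. The approach I would take is: split on the sign of $\theta$'s preference between $a$ and $a'$ (i.e.\ whether $p_{aa'} > 1/2$ or $< 1/2$; the non-atomicity/genericity assumptions in the model rule out ties). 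When $p_{aa'} > 1/2$, SST plus the identity $p_{xy} = 1 - p_{yx}$ forces $p_{a''a'} > p_{a''a}$ pointwise: indeed if $a''$ beats $a$ (i.e.\ $p_{a''a} > 1/2$), then chaining $a'' \succ a \succ a'$ gives $p_{a''a'} > p_{a''a}$; if instead $a \succ a''$, then from $a \succ a''$ and $a \succ a'$ one derives $a' \succ \cdots$ appropriately, and in the remaining sub-case one uses $a \succ a'$ together with $a'' \succ \cdots$; each branch yields $p_{a''a'} > p_{a''a}$ or at least $\ge$. When $p_{aa'} < 1/2$, one gets the reverse pointwise inequality $p_{a''a'} < p_{a''a}$. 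The crucial point is then that conditioning on $S(a,a')=1$ shifts posterior mass toward the $\{p_{aa'} > 1/2\}$ region strictly enough that the averaged inequality comes out strict — this requires a positive-correlation / MLR-type argument: $\Pr[T_\theta(a,a')=1]$ is larger in precisely the $\theta$'s where $p_{a''a'} - p_{a''a} > 0$, so conditioning on $S(a,a')=1$ up-weights exactly those $\theta$, and the two averages separate strictly because the model assumes $\E[T_\theta(a,a')\mid\theta] \neq 0$ for all $\theta$ (so the two regions each have positive prior mass and the pointwise gap is strict on a positive-measure set).

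The main obstacle, then, is not the signal-level manipulation but the careful case analysis of SST across all six orderings of $\{a,a',a''\}$ combined with making the posterior-averaging step rigorously strict. I would organize it as: (i) a pointwise claim "for every $\theta$, $\operatorname{sign}(p_{aa'} - 1/2)$ and $\operatorname{sign}(p_{a''a'} - p_{a''a})$ agree (with the latter possibly zero only when $a,a',a''$ are not all distinct, excluded here), and moreover $\big(p_{aa'}-\tfrac12\big)\big(p_{a''a'}-p_{a''a}\big) \ge 0$" proved by exhausting the orderings via \Cref{def:transitive}; (ii) rewrite the two target conditional probabilities as posterior expectations of $p_{a''a'}$ and $p_{a''a}$; (iii) subtract and apply a covariance/FKG-style inequality, $\E[f(\theta)\mid T_\theta(a,a')=1] - \E[f(\theta)] $ has the sign of the covariance of $f$ with $\Pr[T_\theta(a,a')=1]$, with $f(\theta) = p_{a''a'} - p_{a''a}$, to conclude strict positivity; (iv) repeat verbatim with $1 \mapsto -1$ everywhere, which by the symmetry $p_{xy} = 1-p_{yx}$ is the same computation with $a \leftrightarrow a'$, giving the second inequality of \Cref{def:ud}. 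This establishes that $\langle S(a,a'), S(a'',a'), S(a'',a)\rangle$ is a uniformly dominant tuple.
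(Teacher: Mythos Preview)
Your approach is essentially the paper's: reduce to posterior expectations over $\theta$ via conditional independence, establish the pointwise inequality $(p_{aa'}-\tfrac12)(p_{a''a'}-p_{a''a})>0$ from SST by exhausting orderings (this is exactly the paper's auxiliary claim, written there as $Q(a,a')Q(a'',a')>Q(a,a')Q(a'',a)$ with $Q=2p-1$), and then integrate. The symmetry reduction for the second inequality is also how the paper proceeds.

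There is, however, a real gap in your step~(iii). The covariance identity you quote is correct,
\[
\E\!\left[f(\theta)\mid S(a,a')=1\right]-\E[f(\theta)]=\frac{\mathrm{Cov}\bigl(f,\,p_{aa'}\bigr)}{\E[p_{aa'}]},
\]
and the pointwise sign agreement gives $\E\bigl[(p_{aa'}-\tfrac12)f\bigr]>0$. But neither statement alone yields $\E[f\mid S(a,a')=1]>0$: the first only compares to $\E[f]$, and the second only says $\E[p_{aa'}f]>\tfrac12\E[f]$. What closes the gap is the \emph{a~priori similar} assumption built into the Bayesian SST model, namely $\E[\E[T_\theta(x,y)\mid\theta]]=0$ for every pair. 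This forces both $\E[p_{aa'}]=\tfrac12$ and $\E[f]=\E[p_{a''a'}]-\E[p_{a''a}]=0$, after which $\E[p_{aa'}f]=\E[(p_{aa'}-\tfrac12)f]>0$ and hence $\E[f\mid S(a,a')=1]>0$. The paper uses a~priori similar explicitly and repeatedly (to normalize the denominator and to kill the linear cross terms). Without it the conclusion is false: a two-point prior with $(p_{aa'},f)=(0.6,\,0.01)$ and $(0.4,\,-0.5)$ satisfies your sign condition pointwise yet gives $\E[f\mid S(a,a')=1]<0$. So you must invoke a~priori similar; once you do, your argument and the paper's coincide.
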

In other words, under any Bayesian SST model, the distribution of $\rs(a, a'), \rs(a'', a'), \rs(a'', a)$ satisfies uniform dominance for any $a,a', a''$.  In the rest of this section, we can view $(\rs_i, \rs_j, \rs_k)$ as an abstract random vector with some joint distribution $P$. 

We now establish some implications of uniform dominance on the bonus-penalty payment. Lemma \ref{lem:ud} shows that truth-telling is the best response if other signals are reported truthfully.   Lemma \ref{lem:uninformed} states that the expected payment is zero if everyone uses uninformed strategies (random functions independent of input).  Lemma \ref{lem:sst} characterizes the best response under symmetric strategy profiles (the same random function on each coordinate).
\begin{lemma}[Truthfulness]\label{lem:ud}
Given a uniformly dominant tuple $\langle \rs_i, \rs_j, \rs_k\rangle$ with distribution $P$, for all $s_i \in \{-1,1\}$,
$s_i = \argmax_{\hat{s}_i \in \{-1,1\}}\E_P\left[U^{BPP}(\hat{s}_i, \rs_j, \rs_k)\mid \rs_i = s_i\right]$ and $\E_P\left[U^{BPP}(\rs_i, \rs_j, \rs_k)\right]>0$.
\end{lemma}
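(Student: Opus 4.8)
\textbf{Proof plan for Lemma~\ref{lem:ud}.}
The plan is to compute $\E_P[U^{BPP}(\hat s_i, \rs_j, \rs_k) \mid \rs_i = s_i]$ directly as a function of the conditional probabilities and read off both claims. Using the second form of the payment in \cref{eq:ca}, for fixed $s_i$ and a candidate report $\hat s_i$ we have
\begin{equation*}
\E_P\left[U^{BPP}(\hat s_i, \rs_j, \rs_k)\mid \rs_i = s_i\right] = 2\left(\Pr[\rs_j = \hat s_i \mid \rs_i = s_i] - \Pr[\rs_k = \hat s_i \mid \rs_i = s_i]\right).
\end{equation*}
So the best response maps $s_i$ to whichever value $\hat s_i \in \{-1,1\}$ maximizes $\Pr[\rs_j = \hat s_i \mid \rs_i = s_i] - \Pr[\rs_k = \hat s_i \mid \rs_i = s_i]$. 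I would then split into the cases $s_i = 1$ and $s_i = -1$.

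For $s_i = 1$: the first uniform-dominance inequality gives $\Pr[\rs_j = 1 \mid \rs_i = 1] - \Pr[\rs_k = 1 \mid \rs_i = 1] > 0$, whereas taking complements ($\Pr[\cdot = -1 \mid \rs_i = 1] = 1 - \Pr[\cdot = 1 \mid \rs_i = 1]$) shows $\Pr[\rs_j = -1 \mid \rs_i = 1] - \Pr[\rs_k = -1 \mid \rs_i = 1] = -\left(\Pr[\rs_j = 1 \mid \rs_i = 1] - \Pr[\rs_k = 1 \mid \rs_i = 1]\right) < 0$. Hence $\hat s_i = 1$ strictly beats $\hat s_i = -1$, i.e. $\argmax_{\hat s_i} = 1 = s_i$. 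The case $s_i = -1$ is symmetric, using the second uniform-dominance inequality $\Pr[\rs_j = -1 \mid \rs_i = -1] > \Pr[\rs_k = -1 \mid \rs_i = -1]$ and its complement. This proves the first claim.

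For the second claim, condition on $\rs_i$ and use the first part:
\begin{equation*}
\E_P\left[U^{BPP}(\rs_i, \rs_j, \rs_k)\right] = \sum_{s_i \in \{-1,1\}} \Pr[\rs_i = s_i]\cdot 2\left(\Pr[\rs_j = s_i \mid \rs_i = s_i] - \Pr[\rs_k = s_i \mid \rs_i = s_i]\right),
\end{equation*}
which is a nonnegative combination of two strictly positive terms (each term positive by the two uniform-dominance inequalities), hence strictly positive — here I use that $\Pr[\rs_i = s_i] > 0$ for both $s_i$, which holds since under the Bayesian SST model $\E[\rs_i] = 0$ so $\rs_i$ is genuinely random; in the abstract setting one should either assume both $\rs_i$-outcomes have positive probability or note that if $\Pr[\rs_i = s_i] = 0$ for one value then $\rs_i$ is constant and the remaining term still makes the expectation positive. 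There is no real obstacle here; the only thing to be careful about is bookkeeping the complement identities correctly so that the type~I/type~II directions line up, and stating the mild genericity assumption on $\rs_i$ needed for strict positivity of the unconditional expectation.
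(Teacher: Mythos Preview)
Your proposal is correct and follows essentially the same route as the paper: compute the conditional expectation of $U^{BPP}$ as $2(\Pr[\rs_j=\hat s_i\mid \rs_i=s_i]-\Pr[\rs_k=\hat s_i\mid \rs_i=s_i])$, use the uniform-dominance inequalities together with their complements to pin down the argmax, and then take the $\rs_i$-weighted sum for the unconditional positivity. One minor remark: your hedge about $\Pr[\rs_i=s_i]=0$ is unnecessary, since the very definition of a uniformly dominant tuple (\cref{def:ud}) conditions on both $\rs_i=1$ and $\rs_i=-1$, so both events must have positive probability for the statement to make sense.
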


\begin{lemma}\label{lem:uninformed}
    Given a uniformly dominant tuple $\langle \rs_i, \rs_j, \rs_k\rangle$, with joint distribution $P$ if agent $j$ and $k$ both use an uninformed strategy $\sigma$ so that $\hat{\rs}_j = \sigma(\rs_j)$ and $\hat{\rs}_k = \sigma(\rs_k)$, for all $s_i$ and $\hat{s}_i$ in $\{-1, 1\}$,
$\E_{\sigma,P}\left[U^{BPP}(\hat{s}_i, \hat{\rs}_j, \hat{\rs}_k)\mid \rs_i = s_i\right] = 0$.
\end{lemma}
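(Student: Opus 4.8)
The plan is to compute the conditional expectation directly by exploiting the independence of the uninformed strategy's randomness from the signals, which will decouple the payment into a product of two terms, one of which vanishes by definition of ``uninformed.''

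First I would fix $s_i \in \{-1,1\}$ and $\hat{s}_i \in \{-1,1\}$ and expand $U^{BPP}(\hat{s}_i, \hat{\rs}_j, \hat{\rs}_k) = \hat{s}_i \hat{\rs}_j - \hat{s}_i \hat{\rs}_k$. By linearity of expectation,
\begin{equation*}
\E_{\sigma, P}\left[U^{BPP}(\hat{s}_i, \hat{\rs}_j, \hat{\rs}_k)\mid \rs_i = s_i\right] = \hat{s}_i\left(\E_{\sigma,P}[\hat{\rs}_j \mid \rs_i = s_i] - \E_{\sigma,P}[\hat{\rs}_k \mid \rs_i = s_i]\right).
\end{equation*}
So it suffices to show that $\E_{\sigma,P}[\hat{\rs}_j \mid \rs_i = s_i] = \E_{\sigma,P}[\hat{\rs}_k \mid \rs_i = s_i]$, and in fact I will argue each side equals the same constant independent of $s_i$.

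Next I would condition on the value of $\rs_j$ and use the fact that $\sigma$'s internal randomness is independent of all signals (stated in the model section). Writing $\mu := \E_\sigma[\sigma(1)] \cdot \Pr[\sigma(1)=1 \text{-style expansion}]$ more carefully: since $\sigma$ is \emph{uninformed}, the report distribution of $\sigma(x)$ is the same for every $x \in \{-1,1\}$, so $\E_\sigma[\sigma(x)] =: c$ is a constant not depending on $x$. Then
\begin{equation*}
\E_{\sigma,P}[\hat{\rs}_j \mid \rs_i = s_i] = \sum_{x \in \{-1,1\}} \Pr[\rs_j = x \mid \rs_i = s_i]\, \E_\sigma[\sigma(x)] = c \sum_{x} \Pr[\rs_j = x \mid \rs_i = s_i] = c,
\end{equation*}
and the identical computation gives $\E_{\sigma,P}[\hat{\rs}_k \mid \rs_i = s_i] = c$ as well. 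Subtracting yields zero, and multiplying by $\hat{s}_i$ preserves zero, which completes the proof. There is no real obstacle here; the only thing to be careful about is invoking the independence of the strategy randomization from the signal vector (and from each other across agents $j$ and $k$) so that the conditioning on $\rs_i = s_i$ does not interact with $\sigma$ — but this is exactly the standing assumption on strategies in \Cref{sec:ppm}. The result does not even use uniform dominance beyond the fact that $\langle \rs_i, \rs_j, \rs_k \rangle$ has a well-defined joint distribution $P$; uniform dominance is what makes the \emph{informed} (truthful) strategy beat this zero baseline in \Cref{lem:ud}, which is the point of stating this lemma alongside it.
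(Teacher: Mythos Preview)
Your proof is correct and takes essentially the same approach as the paper: both exploit that an uninformed strategy makes the report distribution independent of the input signal, so $\hat{\rs}_j$ and $\hat{\rs}_k$ have the same (conditional) expectation and the two terms in $U^{BPP}$ cancel. The paper's version is terser---it writes $\mu(\hat{s})$ for the common report distribution and directly observes $\sum_{\hat{s}_j,\hat{s}_k}\mu(\hat{s}_j)\mu(\hat{s}_k)(\hat{s}_i\hat{s}_j-\hat{s}_i\hat{s}_k)=0$ by symmetry---whereas you condition on $\rs_j$ explicitly before collapsing to the constant $c$, but the substance is identical, and your closing remark that uniform dominance is not actually used here is accurate.
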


\begin{lemma}\label{lem:sst}
Given a uniformly dominant tuple $\langle \rs_i, \rs_j, \rs_k\rangle$ with distribution $P$,  for any strategy $\sigma$ and $s_i\in \{-1,1\}$ when agent $j$ and $k$ both use $\sigma$,
$\argmax_{\hat{s}_i \in \{-1,1\}}\E_{\sigma,P}\left[U^{BPP}(\hat{s}_i, \hat{\rs}_j, \hat{\rs}_k)\mid \rs_i = s_i\right] = \argmax_{\hat{s}_i\in \{-1,1\}}\left\{\sigma(s_i, \hat{s}_i)-\sigma(-s_i, \hat{s}_i)\right\}.$
\end{lemma}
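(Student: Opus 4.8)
The plan is to expand the conditional expectation $\E_{\sigma,P}[U^{BPP}(\hat{s}_i,\hat{\rs}_j,\hat{\rs}_k)\mid \rs_i=s_i]$ explicitly as a function of the candidate report $\hat s_i\in\{-1,1\}$ and show that, up to an additive term independent of $\hat s_i$ and a positive multiplicative constant, it equals $\hat s_i\cdot c$ for some scalar $c$ whose sign matches that of $\sigma(s_i,1)-\sigma(-s_i,1)$; then the claimed $\argmax$ identity follows. Concretely, using $U^{BPP}(\hat s_i,\hat\rs_j,\hat\rs_k)=\hat s_i(\hat\rs_j-\hat\rs_k)$ and linearity of expectation,
\[
\E_{\sigma,P}\!\left[U^{BPP}(\hat s_i,\hat\rs_j,\hat\rs_k)\mid \rs_i=s_i\right]=\hat s_i\left(\E_{\sigma,P}[\hat\rs_j\mid\rs_i=s_i]-\E_{\sigma,P}[\hat\rs_k\mid\rs_i=s_i]\right).
\]
So it suffices to compute $\E_{\sigma,P}[\hat\rs_j\mid\rs_i=s_i]-\E_{\sigma,P}[\hat\rs_k\mid\rs_i=s_i]$ and show it has the same sign as $\sigma(s_i,1)-\sigma(-s_i,1)$ (and is zero exactly when that quantity is zero).

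The key computation: since $\hat\rs_j=\sigma(\rs_j)$ with $\sigma$ applied independently of everything except $\rs_j$, we have $\E[\hat\rs_j\mid\rs_j=r]=\sigma(r,1)-\sigma(r,-1)=2\sigma(r,1)-1$ for each $r\in\{-1,1\}$; call this $\mu(r)$. Then $\E_{\sigma,P}[\hat\rs_j\mid\rs_i=s_i]=\sum_{r}\Pr[\rs_j=r\mid\rs_i=s_i]\,\mu(r)$, and likewise for $\hat\rs_k$ with $\rs_k$ in place of $\rs_j$. Subtracting,
\[
\E_{\sigma,P}[\hat\rs_j\mid\rs_i=s_i]-\E_{\sigma,P}[\hat\rs_k\mid\rs_i=s_i]=\sum_{r\in\{-1,1\}}\left(\Pr[\rs_j=r\mid\rs_i=s_i]-\Pr[\rs_k=r\mid\rs_i=s_i]\right)\mu(r).
\]
Because $\sum_r\Pr[\rs_j=r\mid\rs_i=s_i]=\sum_r\Pr[\rs_k=r\mid\rs_i=s_i]=1$, the coefficients of $\mu(1)$ and $\mu(-1)$ are negatives of each other; writing $\delta:=\Pr[\rs_j=1\mid\rs_i=s_i]-\Pr[\rs_k=1\mid\rs_i=s_i]$, the difference equals $\delta(\mu(1)-\mu(-1))=\delta\cdot 2(\sigma(1,1)-\sigma(-1,1))$. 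Now invoke uniform dominance (Definition~\ref{def:ud}): when $s_i=1$, $\delta=\Pr[\rs_j=1\mid\rs_i=1]-\Pr[\rs_k=1\mid\rs_i=1]>0$; when $s_i=-1$, $\Pr[\rs_j=-1\mid\rs_i=-1]>\Pr[\rs_k=-1\mid\rs_i=-1]$ gives $\delta=-(\Pr[\rs_j=-1\mid\rs_i=-1]-\Pr[\rs_k=-1\mid\rs_i=-1])<0$. In both cases $\delta$ has the same sign as $s_i$, so $\delta=|\delta|\,s_i$ with $|\delta|>0$, and the whole expression becomes $2|\delta|\,s_i(\sigma(1,1)-\sigma(-1,1))$. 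Plugging back, the objective is $2|\delta|\,\hat s_i\,s_i\,(\sigma(1,1)-\sigma(-1,1))$, which is maximized over $\hat s_i$ exactly when $\hat s_i\,s_i\,(\sigma(1,1)-\sigma(-1,1))$ is, i.e.\ when $\hat s_i=\sign(s_i(\sigma(1,1)-\sigma(-1,1)))$ if the product is nonzero, and arbitrarily otherwise. Finally observe $\sigma(s_i,\hat s_i)-\sigma(-s_i,\hat s_i)$: for $\hat s_i=1$ this is $\sigma(s_i,1)-\sigma(-s_i,1)=s_i(\sigma(1,1)-\sigma(-1,1))$ (checking both $s_i=\pm1$), and for $\hat s_i=-1$ it is $\sigma(s_i,-1)-\sigma(-s_i,-1)=-(\sigma(s_i,1)-\sigma(-s_i,1))=-s_i(\sigma(1,1)-\sigma(-1,1))$; so $\sigma(s_i,\hat s_i)-\sigma(-s_i,\hat s_i)=\hat s_i\,s_i\,(\sigma(1,1)-\sigma(-1,1))$, matching the objective up to the positive factor $2|\delta|$. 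Hence the two $\argmax$ sets coincide.

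The main thing to be careful about — rather than a deep obstacle — is the bookkeeping of signs: verifying that uniform dominance yields $\operatorname{sign}(\delta)=\operatorname{sign}(s_i)$ uniformly in $s_i$, and that $\sigma(s_i,\hat s_i)-\sigma(-s_i,\hat s_i)$ collapses to $\hat s_i s_i(\sigma(1,1)-\sigma(-1,1))$ for all four sign combinations. One should also note that $|\delta|>0$ strictly, which is what lets the multiplicative constant be dropped without changing the $\argmax$; this is exactly where the strict inequalities in Definition~\ref{def:ud} are used. I would present the $\argmax$ equality as sets (both sides being $\{-1,1\}$ in the degenerate case $\sigma(1,1)=\sigma(-1,1)$), consistent with how Lemmas~\ref{lem:ud} and~\ref{lem:uninformed} are phrased.
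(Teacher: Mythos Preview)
Your proposal is correct and follows essentially the same approach as the paper: expand the conditional expectation, collapse it to a positive constant times $\sigma(s_i,\hat s_i)-\sigma(-s_i,\hat s_i)$ using uniform dominance, and conclude the $\argmax$ equality. The only cosmetic difference is that the paper defines $\delta:=\Pr[\rs_j=s_i\mid\rs_i=s_i]-\Pr[\rs_k=s_i\mid\rs_i=s_i]$ (automatically positive by Definition~\ref{def:ud}) and lands directly on $2\delta(\sigma(s_i,\hat s_i)-\sigma(-s_i,\hat s_i))$, whereas you fix $\delta$ at the value $1$ and then verify $\operatorname{sign}(\delta)=s_i$ together with the identity $\sigma(s_i,\hat s_i)-\sigma(-s_i,\hat s_i)=\hat s_i s_i(\sigma(1,1)-\sigma(-1,1))$.
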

We'd like to highlight that \cref{lem:ud,lem:uninformed,lem:sst} as well as the proof of \cref{thm:comparison} below hold for any uniformly dominant tuple $\langle \rs_i, \rs_j, \rs_k\rangle$, not necessarily derived from the Bayesian SST model. This offers a path to generalize our mechanism for comparison data to other settings.
\begin{proof}[Proof of \cref{thm:comparison}]
By \cref{lem:SST2UD}, for any agent $i$, the associated agent $j$'s signal $\rs_j = S(a'', a')$ uniformly dominates the associated $k$'s signal $\rs_k = S(a'', a)$ for signal $\rs_i = S(a, a')$.  By \cref{lem:ud}, if agent $j$ and $k$ are truthful, agent $i$'s best response is truthful reporting, so truth-telling is a BNE. 

Now we show that all other symmetric equilibria are permutation or uninformed equilibria.  For any symmetric equilibrium $\bm{\sigma} = (\sigma_\iota)_{\iota\in \mathcal{N}}$ so that everyone uses the same strategy $\sigma_\iota = \sigma$ for all $\iota\in \mathcal{N}$.  If $\sigma$ is not deterministic so that $\sigma(s, s), \sigma(s, -s)>0$ for some $s\in \{-1,1\}$, agent $i$ must be indifferent between reporting $s$ and $-s$ when getting $\rs_i = s$. 
$\sigma(s, s)-\sigma(-s, s) = \sigma(s, -s)-\sigma(-s, -s)$
by \cref{lem:sst}.  This means $\sigma(s, s) = \sigma(-s, s)$ and $\sigma(s, -s) = \sigma(-s, -s)$, and $\sigma$ is an uninformed strategy. 
If the strategy is deterministic, there are two cases.  If $\sigma(s) = \sigma(-s)$, the strategy is also uninformed.  If $\sigma(s)\neq \sigma(-s)$, $\sigma$ is either truth-telling $s\mapsto s$ or flipping $s\mapsto -s$ for all $s$.  

Finally, by \cref{lem:uninformed}, any uninformed equilibrium's expectation is zero.  Additionally, because \cref{eq:ca} is invariant when all inputs are flipped, the truth-telling and flipping/permutation equilibria has the same expected payment which is positive by \cref{lem:ud}. 
\end{proof}

\section{Generalization of Bonus-penalty Payment Mechanisms}\label{sec:generalization}
We now leverage the key idea of uniform dominance to design peer prediction mechanisms for networked data in \cref{sec.network}.
In \cref{sec.iff}, we summarize our design approach as a general scheme that first identifies uniform dominance structures and then engages the bonus-penalty payment. We prove the uniqueness of bonus-penalty payment: it is the only payment function, up to some positive affine transformation, that induces truth-telling as a strict BNE for all uniform dominant tuples. 

\subsection{Bonus-penalty Payment Mechanisms for Networked Data}\label{sec.network}
Uniform dominance implies agent $i$'s signal is more likely to agree with agent $j$'s than with agent $k$'s.  Social networks are another natural domain exhibiting this property, as homophily~\cite{mcpherson2001birds} suggests that agents' opinions or signals in a social network are more likely to agree with their friends than with non-friends. Leveraging this insight, we use a bonus-penalty payment scheme to elicit binary networked data.

\begin{algorithm}[]
\caption{BPP mechanism for networked data}\label{alg:network}
\KwIn{Let $(V,E)$ be a graph of agents in $V$, $\hat{\vs}\in \{-1, 1\}^V$ from all agent's reports.}
\For{agent $i\in V$}
{
Find agents $j$  (friend) and $k$ (non-friend) so that $(i,j)\in E$ but $(i,k)\notin E$, and pay agent $i$
\begin{equation}\label{eq:ca_network}
    M_i(\hat{\vs}) = U^{BPP}(\hat{s}_i, \hat{s}_j, \hat{s}_k) = \hat{s}_i\hat{s}_j-\hat{s}_i\hat{s}_k.
\end{equation}}
\end{algorithm}

Below, we provide a theoretical guarantee for our mechanism under a popular graphical model for social network data, \emph{Ising model}~\cite{ellison1993learning,montanari2010spread}, which captures the correlation between agents and their friends.  
Formally, an Ising model consists of an undirected graph $(V, E)$ and correlation parameter $\beta_{i,j}\ge 0$ for each edge $(i,j)\in E$.  Each agent is a node in the graph, $\mathcal{N} = V$, and has a binary private signal ($1$ or $-1$) jointly distributed as the following:  For all $\vs = (s_i)_{i\in V} \in \{-1, 1\}^V$, 
$\Pr_{\vbeta}[\rvs = \vs] \propto \exp(H(\vs))$
where the energy function is $H(\vs) = \sum_{(i,j)\in E}\beta_{i,j}s_is_j$. 

\begin{theorem}\label{thm:network}
    If agents' signals are sampled from an Ising model on undirected graph $(V, E)$ with correlation parameters $\vbeta$,   \cref{alg:network} is symmetrically strongly truthful, when $\frac{2\underline{\beta}}{d}> \ln \frac{e^{2(d+1)\overline{\beta}}+1}{e^{2\overline{\beta}}+e^{2d\overline{\beta}}}$
    where $\underline{\beta} = \min_{(i,j)\in E} \beta_{i,j}$, $\overline{\beta} = \max_{(i,j)\in E} \beta_{i,j}$, and $d$ is the maximal degree of graph $(V, E)$.
\end{theorem}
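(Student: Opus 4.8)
The plan is to reduce \cref{thm:network} to the single statement that, for every agent $i$ and \emph{every} valid choice of a friend $j$ (with $(i,j)\in E$) and a non-friend $k$ (with $(i,k)\notin E$), the tuple $\langle \rs_i,\rs_j,\rs_k\rangle$ is uniformly dominant in the sense of \cref{def:ud}. Once this holds, the theorem follows with no further work: as observed just after \cref{lem:sst}, \cref{lem:ud,lem:uninformed,lem:sst} and the argument in the proof of \cref{thm:comparison} apply verbatim to any uniformly dominant tuple, so \cref{alg:network} is symmetrically strongly truthful. Moreover the Ising energy $H(\vs)=\sum_{(i,j)\in E}\beta_{i,j}s_is_j$ satisfies $H(\vs)=H(-\vs)$, so the Gibbs measure is invariant under a global sign flip; hence $\Pr[\rs_v=-1\mid \rs_i=-1]=\Pr[\rs_v=1\mid \rs_i=1]$ for every $v$, the two inequalities in \cref{def:ud} coincide, and it suffices to prove the single inequality $\Pr[\rs_j=1\mid \rs_i=1]>\Pr[\rs_k=1\mid \rs_i=1]$. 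Writing $R_v:=\Pr[\rs_v=1\mid \rs_i=1]/\Pr[\rs_v=-1\mid \rs_i=1]$, this is exactly $R_j>R_k$, equivalently $\E[\rs_i\rs_j]>\E[\rs_i\rs_k]$ --- friends are strictly more correlated than non-friends.

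To lower bound $R_j$ I would invoke Griffiths' (second) inequality: in a ferromagnetic Ising model the two-point function $\E[\rs_i\rs_j]$ is nondecreasing in each coupling $\beta_{u,w}\ge 0$, so deleting every coupling except the one on the edge $(i,j)$ can only decrease it and leaves the two-spin model in which $\E[\rs_i\rs_j]=\tanh\beta_{i,j}\ge\tanh\underline{\beta}$. Since $R_v=(1+\E[\rs_i\rs_v])/(1-\E[\rs_i\rs_v])$ is increasing in $\E[\rs_i\rs_v]$, this gives $R_j\ge e^{2\underline{\beta}}$.

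To upper bound $R_k$ I would combine a crude uniform control of conditional spin ratios with Weitz's self-avoiding-walk tree. Pin $\rs_i=1$ and let $T$ be the self-avoiding-walk tree rooted at $k$ for the pinned Ising model; by Weitz's theorem $\Pr[\rs_k=1\mid \rs_i=1]$ equals the root marginal in $T$, and the Ising tree recursion gives $R_k=\prod_{u}f_{\beta_{k,u}}(R_u^{T})$, where $u$ ranges over the at most $d$ children of the root, $R_u^{T}$ is the ratio of the subtree rooted at $u$, and $f_\beta(R)=(e^{2\beta}R+1)/(R+e^{2\beta})$ is increasing with $f_\beta(R)\ge1$ for $R\ge1$. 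Within any subtree, conditioning a vertex on the spins of its children shows its ratio is at most $e^{2\overline{\beta}\cdot(\mathrm{degree\ in\ }G)}\le e^{2d\overline{\beta}}$, and averaging over those children preserves the bound, so $R_u^{T}\le e^{2d\overline{\beta}}$. The essential use of $(i,k)\notin E$ is that no child of the root is the pinned copy of $i$ (such a child would instead contribute the larger factor $f_{\beta_{k,i}}(\infty)=e^{2\beta_{k,i}}$); using $R_u^{T}\le e^{2d\overline{\beta}}$ for all root-children together with monotonicity and $f_{\overline{\beta}}(e^{2d\overline{\beta}})\ge1$, we get $R_k\le f_{\overline{\beta}}\!\left(e^{2d\overline{\beta}}\right)^{d}=\left(\tfrac{e^{2(d+1)\overline{\beta}}+1}{e^{2d\overline{\beta}}+e^{2\overline{\beta}}}\right)^{d}$. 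Here Weitz's construction is what makes the product form legitimate, since $k$'s neighbours are themselves correlated in $G$.

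Finally, the hypothesis $\tfrac{2\underline{\beta}}{d}>\ln\tfrac{e^{2(d+1)\overline{\beta}}+1}{e^{2\overline{\beta}}+e^{2d\overline{\beta}}}$ rearranges exactly to $e^{2\underline{\beta}}>\left(\tfrac{e^{2(d+1)\overline{\beta}}+1}{e^{2d\overline{\beta}}+e^{2\overline{\beta}}}\right)^{d}$, which chains with the two bounds to give $R_j\ge e^{2\underline{\beta}}>R_k$. Hence $\langle \rs_i,\rs_j,\rs_k\rangle$ is uniformly dominant for every agent, and \cref{thm:network} follows as explained. I expect the main obstacle to be the non-friend step: setting up Weitz's tree carefully (loop-closure boundary conditions, exact preservation of the root marginal) and, above all, exploiting $(i,k)\notin E$ to exclude a pinned root-child, since such a child would contribute $e^{2\beta_{k,i}}$, which can exceed $f_{\overline{\beta}}(e^{2d\overline{\beta}})$ and break the estimate. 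A minor preliminary remark is that \cref{alg:network} presumes every agent has both a friend and a non-friend, i.e.\ the graph is neither empty nor complete.
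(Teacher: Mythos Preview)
Your proposal is correct and follows essentially the same route as the paper: reduce to uniform dominance, use the global sign-flip symmetry of the zero-field Ising measure to collapse the two inequalities in \cref{def:ud} to one, lower-bound the friend correlation by Griffiths (delete all couplings except the edge $(i,j)$, giving $R_j\ge e^{2\underline\beta}$), and upper-bound the non-friend correlation via Weitz's self-avoiding-walk tree and the tree recursion, with the non-adjacency $(i,k)\notin E$ used precisely to guarantee that the pinned vertex does not sit at depth one.

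Two cosmetic differences are worth noting. First, the paper roots $T_{SAW}$ at $i$ and conditions on $\rs_k=1$, while you root at $k$ and condition on $\rs_i=1$; these are interchangeable since $\Pr[\rs_v=1\mid \rs_i=1]=\Pr[\rs_i=1\mid \rs_v=1]$ under uniform marginals. Second, to get the bound $e^{2d\overline\beta}$ on the depth-one subtree ratios, the paper invokes Griffiths once more to pin \emph{all} depth-two vertices to $+1$ and then reads off the ratio in the truncated depth-two tree, whereas you argue directly that any vertex's ratio, conditional on its children, is at most $e^{2\overline\beta\cdot d}$ and that this pointwise bound survives marginalisation. Both arguments yield the same factor $f_{\overline\beta}(e^{2d\overline\beta})$ per child of the root and hence the same final estimate $\left(\tfrac{e^{2(d+1)\overline\beta}+1}{e^{2d\overline\beta}+e^{2\overline\beta}}\right)^{d}$.
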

\cref{alg:network} does not require knowledge about parameters of the Ising model, but only the connection of the network $(V,E)$. Social network platforms, which already possess this knowledge, can easily integrate our mechanism when conducting surveys. Additionally, snowball sampling~\cite{goodman1961snowball}, which relies on participants referring their friends, is also naturally compatible with our mechanism.


The complete proof of \cref{thm:network} is quite technical and is deferred to the appendix, where we also explain why the bound between $\vbeta$ and $d$ is necessary. Below, we provide a sketch of the proof.
\begin{proof}[Proof sketch for \cref{thm:network}]
    As discussed in \cref{sec.comparison_proof}, we only need to show that for any agent $i$, for all agent $j$ with $(i,j)\in E$ and $k$ with $(i,k)\notin E$, $j$'s signal uniformly dominates $k$'s signal for $i$'s signal.  Because the energy function $H(\vs)$ above remains invariant when the signs are flipped,\yc{A comment for the future, not for now. It's unclear for the Ising model, where symmetry is assumed.}\fang{better?} $\Pr[\rs_i = 1] = \Pr[\rs_j = 1] = \Pr[\rs_k = 1] = 1/2$, it is sufficient to prove that 
\begin{equation}\label{eq:network}
    \Pr[\rs_i = 1\mid \rs_j = 1]>\Pr[\rs_i = 1\mid \rs_k = 1].
\end{equation}
We then prove a lower bound for the left-hand side and an upper bound for the right-hand side separately.  For the left-hand side, we use the Griffiths' inequality~\cite{10.1093/acprof:oso/9780198570837.001.0001} to show that the minimum value of $\Pr[\rs_i = 1\mid \rs_j = 1]$ happens when $j$ is the only friend of $i$. For the right-hand side, we use Weitz's self-avoiding walk~\cite{weitz2006counting} and reduce any graph with maximum degree $d$ into a $d$-ary tree.
\end{proof}

\subsection{General Design Scheme and Uniqueness}\label{sec.iff}
The design of BPP mechanisms for comparison data and networked data has suggested a general design scheme for other elicitation settings. That is, if one can \emph{identify a uniformly dominant tuple for each agent}, adopting the bonus-penalty payment gives a symmetrically strongly truthful mechanism. We further show that the bonus-penalty payment is in some sense unique.    
%
\begin{algorithm}[]
\caption{General design scheme using BPP}\label{alg:scheme}
\KwIn{Let $\hat{\vs}$ be reports from agents in $\mathcal{N}$.}
\For{agent $i\in \mathcal{N}$}
{
Find two agents $j$ and $k$ so that $j$'s signal uniformly dominates $k$'s for $i$'s, and pay agent $i$
\begin{equation*}
    M_i(\hat{\vs}) = U^{BPP}(\hat{s}_i, \hat{s}_j, \hat{s}_k) = \hat{s}_i\hat{s}_j-\hat{s}_i\hat{s}_k.
\end{equation*}}
\end{algorithm}
\begin{theorem}\label{thm:general}
     If for each agent $i$ the associated agent $j$'s signal uniformly dominates $k$'s signal for $i$'s signal, the above scheme is symmetrically strongly truthful.
\end{theorem}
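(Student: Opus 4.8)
The plan is to reduce Theorem~\ref{thm:general} directly to the structural lemmas already established for abstract uniformly dominant tuples in \cref{sec.comparison_proof}, essentially recycling the proof of \cref{thm:comparison} verbatim but without invoking \cref{lem:SST2UD}. Indeed, \cref{thm:comparison} is the special case in which the uniformly dominant tuple $\langle S(a,a'), S(a'',a'), S(a'',a)\rangle$ comes from a Bayesian SST model, and its proof only used that tuple's uniform dominance, never any further SST-specific structure. Since the hypothesis of Theorem~\ref{thm:general} hands us, for each agent $i$, agents $j$ and $k$ such that $\langle \rs_i, \rs_j, \rs_k\rangle$ is a uniformly dominant tuple with respect to the signal joint distribution, we are already in the regime where \cref{lem:ud,lem:uninformed,lem:sst} apply.

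First I would argue that truth-telling is a strict BNE: fix any agent $i$; when the associated agents $j,k$ report truthfully, \cref{lem:ud} says that for each realized $s_i\in\{-1,1\}$ the unique maximizer of $\E_P[U^{BPP}(\hat s_i,\rs_j,\rs_k)\mid \rs_i=s_i]$ is $\hat s_i=s_i$, so agent $i$ strictly prefers to report truthfully; since this holds for every $i$ simultaneously, the truthful profile is a strict BNE. Next I would classify symmetric equilibria: let $\bm\sigma$ be a symmetric equilibrium with common strategy $\sigma$. If $\sigma$ is non-deterministic, i.e. $\sigma(s,s),\sigma(s,-s)>0$ for some $s$, then agent $i$ must be indifferent between the two reports on signal $s$, and \cref{lem:sst} (applied with $j,k$ both using $\sigma$) forces $\sigma(s,s)-\sigma(-s,s)=\sigma(s,-s)-\sigma(-s,-s)$, which gives $\sigma(s,s)=\sigma(-s,s)$ and $\sigma(s,-s)=\sigma(-s,-s)$, i.e. $\sigma$ is uninformed. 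If $\sigma$ is deterministic it is either uninformed (if $\sigma(s)=\sigma(-s)$) or a permutation (identity or sign-flip). Finally I would compare payments: by \cref{lem:uninformed} every uninformed equilibrium yields expected payment zero, while by \cref{lem:ud} the truthful profile yields strictly positive expected payment, and since $U^{BPP}$ in \cref{eq:ca} is invariant under flipping all three inputs, the sign-flip (permutation) profile yields the same positive payment as truth-telling. Hence truth-telling attains the maximum expected payment over symmetric equilibria, with equality exactly at the permutation profile, which is the definition of symmetrically strongly truthful (\cref{def:symm}).

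There is essentially no new mathematical obstacle here; the only care needed is bookkeeping. The subtle point is that \cref{lem:sst} must be applied to each agent $i$ \emph{together with its own associated pair $(j,k)$}, and in a symmetric equilibrium all agents use the same $\sigma$, so the pair $(j,k)$ for agent $i$ does indeed use strategy $\sigma$ on both coordinates, matching the hypothesis of \cref{lem:sst}; likewise \cref{lem:ud} and \cref{lem:uninformed} are invoked per-agent on its own tuple. One should also note that the conclusion is about symmetric equilibria only, so no cross-agent coupling of the assignment (such as the admissibility condition in \cref{alg:comparison}) is required — the scheme merely presupposes that a suitable uniformly dominant $(j,k)$ exists for every $i$. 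I would therefore present the proof as two or three short paragraphs mirroring the proof of \cref{thm:comparison}, explicitly flagging that \cref{lem:SST2UD} is the only ingredient of that proof we drop and that everything downstream of it transfers unchanged.
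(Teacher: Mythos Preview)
Your proposal is correct and takes essentially the same approach as the paper: the paper does not give a separate proof of \cref{thm:general} but explicitly remarks (just before the proof of \cref{thm:comparison}) that \cref{lem:ud,lem:uninformed,lem:sst} and the proof of \cref{thm:comparison} hold for any uniformly dominant tuple, so the argument transfers verbatim once \cref{lem:SST2UD} is replaced by the hypothesis. Your write-up is in fact more explicit than the paper itself on this point.
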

When an agent $i$ has multiple pairs of $(j_1, k_1),\dots, (j_\ell, k_\ell)$ so that $j_l$'s signal uniformly dominates $k_l$'s for $i$'s for each $l = 1,\dots, \ell$,  
we may pay agent $i$ the average of bonus-penalty payment on all pairs 
    $M_i(\hat{\vs}) = \frac{1}{\ell}\sum_{l = 1}^\ell U^{BPP}(\hat{s}_i, \hat{s}_{j_l}, \hat{s}_{k_l})$.
This average maintains our symmetrically strongly truthful guarantee while potentially reducing the variance in payments. 
\yc{Commented out the last sentence on the case where each agent provides multiple reports.}

Theorem \ref{thm:general} shows that bonus-penalty payment is a sufficient condition for designing good elicitation mechanisms for information structures with uniform dominance.  We now prove it is also a necessary condition: any payment that induces truth-telling as a strict BNE under all uniformly dominant tuples must be an affine transformation of the bonus-penalty payment.
\begin{theorem}[Uniqueness]\label{thm:uniq}
    A payment $U:\{-1,1\}^3\to \mathbb{R}$ satisfies that, 
    for all uniformly dominant tuples $\langle\rs_i,\rs_j,\rs_k\rangle$, $
    s_i = \argmax_{\hat{s}_i \in \{-1,1\}}\E\left[U(\hat{s}_i, \rs_j, \rs_k)\mid \rs_i = s_i\right]$, if and only if there exist $\lambda>0$ and $\mu:\{-1,1\}^2\to \mathbb{R}$ so that 
\begin{equation*}
    U(s_i, s_j, s_k) = \lambda U^{BPP}(s_i, s_j, s_k)+\mu(s_j, s_k),\text{ for all }s_i, s_j, s_k \in \{-1,1\}
\end{equation*}
where choice of $\mu$ does not affect the set of equilibria. 
\end{theorem}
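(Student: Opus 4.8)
The plan is to prove the "if" direction quickly and then concentrate on the "only if" direction, which carries the real content. For the "if" direction, suppose $U(s_i,s_j,s_k) = \lambda U^{BPP}(s_i,s_j,s_k) + \mu(s_j,s_k)$ with $\lambda>0$. Then for any fixed reports $\hat s_j,\hat s_k$ of the other two agents, the additive term $\mu(\hat s_j,\hat s_k)$ is a constant in $\hat s_i$, so $\argmax_{\hat s_i}\E[U(\hat s_i,\rs_j,\rs_k)\mid \rs_i=s_i] = \argmax_{\hat s_i}\E[\lambda U^{BPP}(\hat s_i,\rs_j,\rs_k)\mid \rs_i = s_i]$, and since $\lambda>0$ this equals $\argmax_{\hat s_i}\E[U^{BPP}(\hat s_i,\rs_j,\rs_k)\mid \rs_i = s_i]$, which by \cref{lem:ud} is $\{s_i\}$ (the strictness of the uniform dominance inequalities guarantees the argmax is the singleton). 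This also shows the choice of $\mu$ is irrelevant to the equilibrium set, since it never enters any agent's best-response comparison.

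For the "only if" direction, I would first reduce the condition on $U$ to a system of linear inequalities. Write $U$ in terms of its Fourier/multilinear expansion over $\{-1,1\}^3$: $U(s_i,s_j,s_k) = \sum_{T\subseteq\{i,j,k\}}\widehat U(T)\,\chi_T(s)$. The key observation is that the best-response condition only compares $U(1,\cdot,\cdot)$ against $U(-1,\cdot,\cdot)$, so only the terms containing the coordinate $s_i$ matter for incentives; the terms not containing $s_i$ are exactly the "$\mu(s_j,s_k)$" part and are unconstrained. So it suffices to show that the part of $U$ that is odd in $s_i$, namely $g(s_i,s_j,s_k) := \tfrac12(U(s_i,s_j,s_k) - U(-s_i,s_j,s_k))$, must be a positive multiple of $U^{BPP}$ itself (note $U^{BPP}$ is already odd in $s_i$). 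Writing $g(s_i,s_j,s_k) = s_i\cdot(\alpha_0 + \alpha_1 s_j + \alpha_2 s_k + \alpha_3 s_j s_k)$, the requirement "$s_i=\argmax_{\hat s_i}\E[U(\hat s_i,\rs_j,\rs_k)\mid\rs_i=s_i]$ for all uniformly dominant $\langle\rs_i,\rs_j,\rs_k\rangle$" becomes, after taking conditional expectations and using $\Pr[\rs_i=s_i]$ can be arbitrary, the statement that $\E[s_i\cdot(\alpha_0+\alpha_1\rs_j+\alpha_2\rs_k+\alpha_3\rs_j\rs_k)\mid \rs_i=s_i] > 0$ for both $s_i=\pm1$, i.e. $\alpha_0 + \alpha_1\,\E[\rs_j\mid\rs_i=1] + \alpha_2\,\E[\rs_k\mid\rs_i=1] + \alpha_3\,\E[\rs_j\rs_k\mid\rs_i=1] > 0$ and the mirror inequality conditioning on $\rs_i=-1$.

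The main obstacle — and the heart of the proof — is to extract from "these inequalities hold for \emph{every} uniformly dominant tuple" the precise conclusion $\alpha_0 = \alpha_3 = 0$ and $\alpha_1 = -\alpha_2 > 0$ (which makes $g = \alpha_1(s_is_j - s_is_k) = \alpha_1 U^{BPP}$). The strategy is to probe with a well-chosen family of uniformly dominant distributions and take limits. Concretely: (i) to kill $\alpha_0$, take distributions approaching the symmetric boundary where $\rs_j,\rs_k$ both become independent of $\rs_i$ (or near-uninformed), collapsing the conditional expectations of $\rs_j,\rs_k,\rs_j\rs_k$ toward $0$, forcing $\alpha_0 \ge 0$; then a companion construction forces $\alpha_0\le 0$. (ii) To force $\alpha_1>0$ and $\alpha_2<0$: construct tuples where $\rs_j$ is extremely well-correlated with $\rs_i$ while $\rs_k$ is weakly but still positively correlated (barely satisfying uniform dominance), so $\E[\rs_j\mid\rs_i=1]\to 1$ while $\E[\rs_k\mid\rs_i=1]$ can be pushed to any value in $(0,1)$ and $\E[\rs_j\rs_k\mid\rs_i=1]$ tracks $\E[\rs_k\mid\rs_i=1]$; letting $\E[\rs_k\mid\rs_i=1]\to 1^-$ forces $\alpha_1+\alpha_2+\alpha_3\ge 0$ after also sending $\E[\rs_j\mid \rs_i=1]\to1$, and cleverly varying these near the dominance boundary where $\E[\rs_k\mid\rs_i=1]$ is close to $\E[\rs_j\mid\rs_i=1]$ from below pins down that the coefficient of the "free" direction must vanish, giving $\alpha_3=0$. (iii) Finally, the symmetric roles of the two conditioning events $\rs_i=1$ and $\rs_i=-1$, combined with the ability to make $\rs_j$-correlation dominate $\rs_k$-correlation by an arbitrarily small margin, yields $\alpha_1 = -\alpha_2$ and $\alpha_1>0$. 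I would organize this as: parametrize a uniformly dominant tuple by $(p,q,r)$ where $p=\E[\rs_j\mid\rs_i=1]$-type quantities, describe explicitly the polytope of achievable conditional-expectation vectors $(\E[\rs_j\mid\rs_i=1],\E[\rs_k\mid\rs_i=1],\E[\rs_j\rs_k\mid\rs_i=1])$ over all uniformly dominant $P$ (this is the technical lemma to prove — the achievable set is full-dimensional and its closure contains the relevant boundary faces), and then a linear functional that is nonnegative on this set but whose linear part is nonzero must be, up to scaling, the $U^{BPP}$ functional. Once $g = \lambda U^{BPP}$ with $\lambda>0$ is established, set $\mu(s_j,s_k) := U(s_i,s_j,s_k) - \lambda U^{BPP}(s_i,s_j,s_k)$ (well-defined since the right side is even in $s_i$) to finish.
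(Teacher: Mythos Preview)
Your framework coincides with the paper's: write $U(s_i,s_j,s_k)=s_i\,D(s_j,s_k)+\mu(s_j,s_k)$ (your $\alpha_0,\alpha_1,\alpha_2,\alpha_3$ are just the multilinear coefficients of $D$), note that only $D$ matters for the best-response comparison, and then probe with uniformly dominant tuples and pass to limits. The ``if'' direction and the reduction to showing $\alpha_0=\alpha_3=0$, $\alpha_1=-\alpha_2>0$ are exactly right.

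Where you diverge is in the execution of the probing step. You propose to characterize the full achievable region of conditional-moment vectors $(\E[\rs_j\mid\rs_i=s_i],\E[\rs_k\mid\rs_i=s_i],\E[\rs_j\rs_k\mid\rs_i=s_i])$ over all uniformly dominant tuples and then argue that a linear functional strictly positive on it must be the $U^{BPP}$ one; your steps (ii)--(iii) are sketched only at a high level. The paper avoids the polytope entirely and simply writes down two explicit one-parameter families of uniformly dominant conditional distributions. The first, $p^1=\bigl[\begin{smallmatrix}0&1/2+\delta\\1/2-\delta&0\end{smallmatrix}\bigr]$ with $p^{-1}$ its sign-flip, as $\delta\to 0$ yields $D(1,-1)+D(-1,1)=0$ and then $D(1,-1)>0$; the second, $p^1=\bigl[\begin{smallmatrix}1-\epsilon&3\epsilon/4\\\epsilon/4&0\end{smallmatrix}\bigr]$ with $p^{-1}$ its flip, as $\epsilon\to 0$ yields $D(1,1)=0$, and an analogous family gives $D(-1,-1)=0$. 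These four constraints are exactly your target equations. So the paper's route is strictly more economical---two concrete families suffice, no polytope lemma---whereas your plan, while correct in principle, carries extra overhead and leaves the actual limiting constructions in (ii)--(iii) unspecified; in practice, fleshing them out would likely lead you to distributions very close to the paper's anyway.
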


\section{Experiments}
\label{sec:exp}
We present experiments on real-world data to evaluate our models and mechanisms. We hope to cast insights on two questions empirically.  Does our mechanism provide better rewards when all agents report truthfully than when all agents report randomly?   Does our mechanism incentivize truth-telling for each agent if all other agents are truthful?
We evaluate \cref{alg:comparison} and \ref{alg:network} by comparing three settings, truth-telling, uninformed, and unilateral deviation, using \emph{empirical cumulative distribution functions} (ECDF) on agents' payments. Each point on ECDF denotes the fraction of agents who get paid less than a particular value.  

For both comparison and networked datasets (\cref{fig.comparisonq,fig:Last.fm dataset}), we find our mechanisms provide better payments to agents under truthful settings than the other two settings. The ECDF under truth-telling lies below the other two ECDFs, which is known as first-order stochastic dominance. This implies that the truth-telling strategy results in higher average, quantiles (e.g., first quartile, median, and third quartile), and a greater expectation\yc{Isn't it average is the same as expectation here?}\fang{expectation of monotone function is different for expectation, e.g., agent's utility may be some monotone function of the payment (risk adverse agent).  We can remove this for space} of any monotone function on the empirical distribution than the other two settings.  We provide additional 

\subsection{SUSHI Preference Dataset for Comparison Data}
\label{sec:comparisonexp}
We consider preference data for a collection of $10$ sushi items (\href{https://www.kamishima.net/sushi/}{item set A})~\cite{kamishima2003nantonac,kamishima2006efficient}, and focus on a group of $249$ agents.  Each agent provides a complete ranking of all $10$ types of sushi in the dataset.  These agents are female, aged thirty to forty-nine, who took more than three hundred seconds to rank the items and mostly lived in Kanto and Shizuoka until age fifteen.  We restrict the set of agents to avoid significant violations of transitivity across different agents and to better align with our model assumptions.  In the appendix, we will present the experimental results for other groups of users and further test whether the dataset satisfies transitivity.

For the first question, we use \cref{alg:comparison} to compute each agent's payment under the truth-telling or uninformed strategy profile.  For each agent $i$, we 1) randomly sample three items $a, a', a''$ and two agents $j, k$, 2) derive agent $i$'s comparison on the first two items $(a, a')$ from her ranking, (and similarly for agent $j$'s comparison on $(a', a'')$, and agent $k$'s comparison on $(a, a'')$), 3) compute bonus-penalty payment on these three comparisons, 4) repeat the above procedure $100$ times and pay agent $i$ with the average of those $100$ trials.  For the uninformed strategy setting, we replace every agent's comparisons with uniform random bits and compute the payment.  The left column of \cref{fig.comparisonq} presents the ECDF of payments for the agents in both settings.
The figure shows that in the uninformed random strategy setting only about $50\%$ of the agents receive positive payments, while  in the original dataset (truthful strategy setting) over $75\%$ of the users receive positive payments.
The right column of \cref{fig.comparisonq} tests the second question if the agent has the incentive to deviate when every other agent is truthful.  The truth-telling curve is identical to the left column of \cref{fig.comparisonq}.  For unilateral deviation, each agent gets the above bonus-penalty payment when her comparisons are replaced by uniform random bits. We plot the ECDFs of payments for both settings in the right column of \cref{fig.comparisonq}. The figure shows that the ECDF of the unilateral deviation payments is above the ECDF of human users' payments, indicating that our mechanism pays more to the truth-telling agents.


\begin{figure}[htbp]
    \centering
    \begin{subfigure}{0.48\textwidth}
        \centering
        \includegraphics[width=\textwidth]{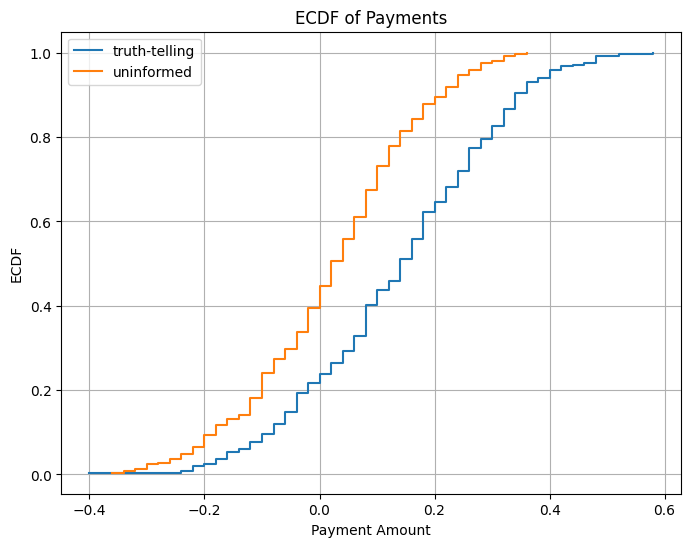}
    \end{subfigure}
    \hfill
    \begin{subfigure}{0.48\textwidth}
        \centering
        \includegraphics[width=\textwidth]{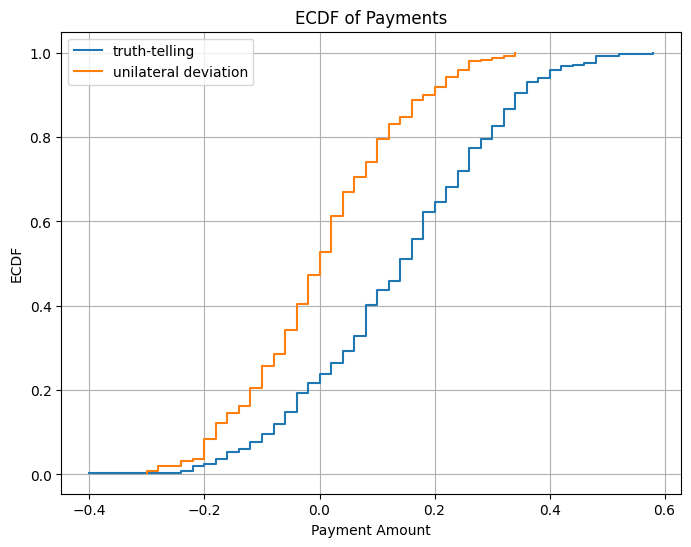}
    \end{subfigure}
    \caption{SUSHI preference dataset}\label{fig.comparisonq}
    \vspace{-6pt}
\end{figure}

\subsection{Last.fm Dataset for Networked Data}
We test our BPP mechanism on the \href{http://millionsongdataset.com/lastfm/}{Last.fm dataset} from \citet{cantador2011second}. This dataset consists of $1892$ agents on Last.fm, forming a social network with $12704$ edges and an average degree of $6.71$. It records agents' top fifty favorite artists whom they have listened to the most.  
We note that, in the dataset, listener fractions for all artists are much smaller than non-listener fractions.  This bias differs from our Ising model in \cref{sec.network} where every agent has the same chance to get both signals.  Thus, the result can be seen as a stress test for our mechanism even when the data deviate from the assumption of our theoretical results. 

\Cref{fig:Last.fm dataset} focuses on the most popular artist in the dataset, Lady Gaga, who has a listener fraction of $32.3\%$.  The results for additional artists are presented in the supplementary material.  The left column of \cref{fig:Last.fm dataset} tests the first question.  Each agent has a binary signal about whether or not she listens to a particular artist (Lady Gaga in this section).  For the truth-telling setting, everyone reports her signal truthfully and gets payment by the bonus-penalty payment (formally defined in \cref{sec.network}).  For the uninformed setting, everyone gets the bonus-penalty payment when all reports are iid according to the prior ($0.322$ for Lady Gaga). 
When everyone is truthful, more than $76\%$ of agents get positive payments and have an average payment of $0.37$ for Lady Gaga, while when agents report randomly, only half get positive payments, and have a near zero average payment.  These results suggest that agents got more incentive to choose the truth-telling equilibrium than the uninformed equilibrium.  
The right column of \cref{fig:Last.fm dataset} tests the second question. 
The truth-telling curve is identical to the left column of \cref{fig:Last.fm dataset}.  For the unilateral deviation setting, each agent gets the bonus-penalty payment when she reports listener/non-listener uniformly at random.  The unilateral deviation's payment is worse than the payments for truth-telling, decreasing from $0.37$ to near zero for Lady Gaga. 
\begin{figure}[htbp]
    \centering
    \begin{subfigure}[b]{0.48\textwidth}
        \centering
        \includegraphics[width=\textwidth]{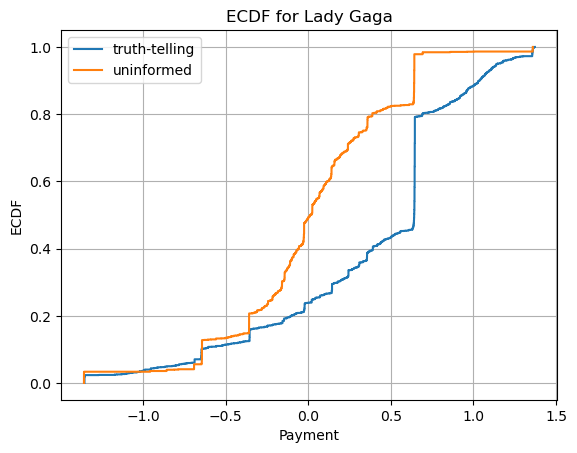}
    \end{subfigure}
    \hfill
    \begin{subfigure}[b]{0.48\textwidth}
        \centering
        \includegraphics[width=\textwidth]{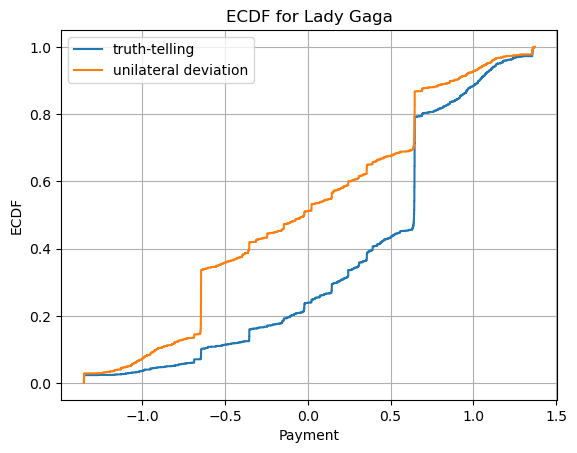}
    \end{subfigure}
    \caption{Last.fm dataset for Lady Gaga}
    \label{fig:Last.fm dataset}
    \vspace{-6pt}
\end{figure}


\section{Conclusion and Discussion}
We introduce a symmetrically strongly truthful peer prediction mechanism for eliciting comparison data without verification and extend it to eliciting networked data under Ising models. Our mechanisms are evaluated using real-world data. A key insight from our work is the identification of a structure we term “uniform dominance,” which suggests a path for designing mechanisms in more complex elicitation settings. For example, in time-series data, adjacent points tend to be more related than distant ones, and in contextual settings, feedback from similar contexts is typically more related than from different contexts.
 
A central assumption in this study is that agents are \emph{a priori similar}. Hence, noisy comparisons of item pairs are independent of the assigned agent’s identity. This assumption is reasonable for items with widely agreed-upon rankings, such as quality assessments of large language model (LLM) outputs. However, it may break down in settings where preferences are highly polarized, such as political opinions or social choice problems\footnote{For example, when ranking phone features (e.g., innovation, performance, brand reputation, price, ease of use), consumers often fall into two groups: early adopters, who prioritize cutting-edge technology, and late adopters, who favor stability, affordability, and ease of use. Their opposing preferences violate the a priori similarity assumption. Imagine an early adopter whose payment in \cref{eq:ca_comparison} depends on two late adopters. Since their preferences may differ significantly, the early adopter might have an incentive to misreport her preferences.}. Despite this, our additional experiments in \cref{app:exp}, which relax the selection rule used in obtaining \cref{fig.comparisonq}, show that the mechanism remains robust even when some dissimilarities among agents exist.

Agents in our model are assumed to focus solely on maximizing their payments, without accounting for efforts or external incentives such as minimizing others’ rewards or intentionally distorting rankings. While our mechanism may be extended to handle binary effort as suggested in previous work \citep{dasgupta2013crowdsourced,shnayder2016informed}, accommodating more than two effort levels would require additional assumptions \citep{10.1145/3543507.3583334}. Moreover, one may hope to incorporate the designer’s utility, by factoring in downstream learning problems along with elicitation payments. This would necessitate a significant overhaul of the existing learning framework.

Our mechanisms achieve a symmetric, strongly truthful equilibrium. This does not rule out the existence of non-symmetric equilibria with potentially higher utility. However, such equilibria would require complex coordination among agents, making them less likely to arise naturally.

From a technical standpoint, our approach involves several assumptions that can be generalized or relaxed. 
Our Bayesian SST model, which relies on strong stochastic transitivity, serves as a non-parametric extension of several widely used parametric ranking models.  In \cref{app:wst}, we present both positive and negative results regarding weaker notions of transitivity (e.g., \cite{10.5555/1347082.1347112}).  While we assume admissible assignments, this can be relaxed to random assignments with full support.  Additionally, limited liability can be ensured in our mechanism. For example, adding a constant of $1$ to the payment function in \cref{eq:ca_comparison} ensures that the payment is either $2$ or $0$.

\newpage
\section*{Acknowledgments}

This research was partially supported by the National Science Foundation under grant no. IIS-2147187.

\bibliographystyle{plainnat}
\bibliography{main}
\clearpage
\appendix 

\section{Further discussion on BPP payment}\label{sec:discussion}
In this section, we discuss the connection of bonus-penalty payment and existing peer prediction mechanisms.  First, if we substitute the third input with a uniformly random bit, denoted as $\hat{s}_k = \rz\sim_u\{-1,1\}$, the bonus-penalty payment simplifies to the \emph{agreement mechanism}~\cite{10.1145/1378704.1378719,von2004labeling,waggoner2014output}, one of the most basic peer prediction mechanisms,
\begin{equation*}
    \Expect{U^{BPP}(\hat{s}_i, \hat{s}_j, \rz)} = \hat{s}_i\hat{s}_j = 2\mathbf{1}[\hat{s}_i = \hat{s}_j]-1.
\end{equation*}
However, the agreement mechanism is not symmetrically strongly truthful, as all agents always reporting $1$ and $-1$ can result in higher payments than truth-telling.  

The bonus-penalty payment~\cref{eq:ca} is originally proposed by \cite{dasgupta2013crowdsourced,shnayder2016informed} for the multi-task setting.  Our BPP mechanism in \cref{alg:scheme} can be seen as a generalization of multi-task setting.  In the multi-task setting, agents works on multiple tasks and for each task the private signals are jointly identically and independently (iid) sampled from a fixed distribution and the each agent's strategy also are iid.  Take two agents (Isabel and Julia) and two tasks as an example: Isabel has a private signal $(s_i^1, s_i^2)$ and reports $(\hat{s}_i^1, \hat{s}_i^2)$ and Julia has $(s_j^1, s_j^2)$ and reports $(\hat{s}_j^1, \hat{s}_j^2)$ where $(s_i^l, s_j^l)$ are iid from random vector $(\rs_i, \rs_j)$.  Isabel and Julia decide their reports on each task using random function $\sigma_i, \sigma_j:\{-1,1\}\mapsto \{-1,1\}$ respectively.  \citet{dasgupta2013crowdsourced} use the following payments for Isabel
\begin{equation*}
    \mathbf{1}[\hat{s}_i^1 = \hat{s}_j^1]-\mathbf{1}[\hat{s}_i^1 = \hat{s}_j^2] = \frac{1}{2}U^{BPP}\left(\hat{s}_i^1, \hat{s}_j^1, \hat{s}_j^2\right).
\end{equation*}
The payment is a special case of \cref{alg:scheme} by taking the second input as $\hat{s}_j^1$ and the third input as $\hat{s}_j^2$.  Additionally, $\rs_j^1$ uniform dominates $\rs_j^2$ for $\rs_i^1$ if and only if
$$\Pr[\rs_j = 1\mid \rs_i = 1]> \Pr[\rs_j = 1],\text{ and } \Pr[\rs_j = -1\mid \rs_i = -1]>\Pr[\rs_j = -1]$$
which is called \emph{categorical signal distributions}~\cite{shnayder2016informed}.

Finally, similar to \citet{shnayder2016informed}, we may extend to non-binary signal setting by extending the payment to 
$$U^{BPP}(\hat{s}_i,\hat{s}_j,\hat{s}_k) = 2\left(\mathbf{1}[\hat{s}_i = \hat{s}_j]-\mathbf{1}[\hat{s}_i = \hat{s}_k]\right)$$
and the definition of uniform dominance to the following.
\begin{definition}\label{def:ud_general}
   Given a random vector $(\rs_i, \rs_j, \rs_k)\in \Omega^3$ on a discrete domain, we say $\rs_j$ \emph{uniformly dominates} $\rs_k$ for $\rs_i$ if 
\begin{align*}
    &\Pr[\rs_j = s\mid \rs_i = s]-\Pr[\rs_k = s\mid \rs_i = s]>0\text{ and }\\
    &\Pr[\rs_j = s'\mid \rs_i = s]-\Pr[\rs_k = s'\mid \rs_i = s]<0
\end{align*} 
for all $s, s'\in \Omega$ with $s\neq s'$.
\end{definition}
However, the guarantee for truth-telling (\emph{informed truthfulness}) is weaker than the binary setting.
\begin{theorem}
     Given any discrete domain $\Omega$, if for each agent $i$ the associated agent $j$'s signal uniformly dominates $k$'s signal for $i$'s signal (\cref{def:ud_general}), \cref{alg:scheme}'s scheme is symmetrically informed truthful so that
     \begin{enumerate}
         \item truth-telling is a strict equilibrium, and
         \item each agent's expected payment in truth-telling is no less than the payment in any other symmetric equilibria and strictly better than any uninformed equilibrium's.
     \end{enumerate}
\end{theorem}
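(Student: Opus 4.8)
The plan is to reuse the skeleton of the proof of \cref{thm:comparison} — a truthfulness step, an uninformed-payment step, and a best-response characterization — but to redo each computation for the multi-valued payment $U^{BPP}(\hat s_i,\hat s_j,\hat s_k)=2\big(\mathbf 1[\hat s_i=\hat s_j]-\mathbf 1[\hat s_i=\hat s_k]\big)$ and for the stronger hypothesis in \cref{def:ud_general}. Fix any agent $i$ and its associated pair $(j,k)$ (so $\rs_j$ uniformly dominates $\rs_k$ for $\rs_i$), and for a conditioning value $s_i$ in the support of $\rs_i$ write $a_s:=\Pr[\rs_j=s\mid\rs_i=s_i]$ and $b_s:=\Pr[\rs_k=s\mid\rs_i=s_i]$. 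All the mileage comes from one elementary inequality: for any conditional report distribution $\sigma(\cdot\,,\hat s)$,
\[
\sum_{s\in\Omega}(a_s-b_s)\,\sigma(s,\hat s)\ \le\ (a_{s_i}-b_{s_i})\,\sigma(s_i,\hat s)\ \le\ a_{s_i}-b_{s_i},
\]
since \cref{def:ud_general} gives $a_{s_i}-b_{s_i}>0$ and $a_s-b_s<0$ for every $s\ne s_i$, while $0\le\sigma(s_i,\hat s)\le 1$.

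For part (1) I would argue that truth-telling is a strict BNE. If $j$ and $k$ report truthfully, agent $i$ holding signal $s_i$ receives, conditionally, $\E[U^{BPP}(\hat s_i,\rs_j,\rs_k)\mid\rs_i=s_i]=2(a_{\hat s_i}-b_{\hat s_i})$; by uniform dominance this is strictly positive at $\hat s_i=s_i$ and strictly negative at every other report, so $\hat s_i=s_i$ uniquely maximizes it, even over randomized replies. Hence reporting one's signal is the unique best response for every agent, so truth-telling is a strict equilibrium. Averaging over $s_i$, the truth-telling payoff is $V^\star:=\E[U^{BPP}(\rs_i,\rs_j,\rs_k)]=2\big(\Pr[\rs_j=\rs_i]-\Pr[\rs_k=\rs_i]\big)$, which is strictly positive by uniform dominance.

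For part (2) I would first show that \emph{every} symmetric strategy profile pays agent $i$ at most $V^\star$ (in particular every symmetric equilibrium does). Suppose all agents use a common strategy $\sigma$ and let agent $i$ instead use an arbitrary $\tau$. Conditioning on $\rs_i=s_i$, her expected payment is linear in $\tau(s_i,\cdot)$, hence at most $\max_{\hat s}\E[U^{BPP}(\hat s,\sigma(\rs_j),\sigma(\rs_k))\mid\rs_i=s_i]=\max_{\hat s}2\sum_{s}(a_s-b_s)\sigma(s,\hat s)$, which the displayed inequality bounds by $2(a_{s_i}-b_{s_i})$ — exactly what truth-telling contributes on signal $s_i$. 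Averaging over $s_i$ yields payoff $\le V^\star$, with equality attained by $\tau=\mathrm{id}$. For the uninformed case, if $j$ and $k$ use an uninformed strategy then $\hat\rs_j$ and $\hat\rs_k$ share a common marginal $q$ and are each independent of agent $i$'s report, so $\Pr[\hat\rs_i=\hat\rs_j]=\sum_{\hat s}\Pr[\hat\rs_i=\hat s]\,q(\hat s)=\Pr[\hat\rs_i=\hat\rs_k]$ and the payment is identically $0<V^\star$. Together with part (1) this gives symmetric informed truthfulness.

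The only delicate point — light bookkeeping rather than a genuine obstacle — is the chain in part (2): conditioning on $\rs_i$, replacing the randomized reply by the best pure reply, splitting $\sum_s(a_s-b_s)\sigma(s,\hat s)$ according to the sign pattern forced by \cref{def:ud_general}, and recognizing that the resulting bound $2(a_{s_i}-b_{s_i})$ is exactly the truth-telling contribution, so the inequality is tight. It is worth flagging why the guarantee is only \emph{informed} truthful and not strongly truthful: any bijective relabeling of $\Omega$ is again a symmetric equilibrium attaining $V^\star$, and for $|\Omega|>2$ one cannot in general rule out further non-permutation symmetric equilibria that also attain $V^\star$, so no equality-iff-permutation claim is available.
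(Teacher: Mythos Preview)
Your proposal is correct and follows essentially the same approach as the paper: both proofs handle truthfulness by directly reading off the sign pattern in \cref{def:ud_general}, show the uninformed payment is zero by noting $\hat\rs_j$ and $\hat\rs_k$ have the same marginal, and upper-bound any symmetric profile by exploiting that $a_{s_i}-b_{s_i}>0$ while $a_s-b_s<0$ for $s\ne s_i$ against weights in $[0,1]$. The only cosmetic difference is that the paper packages the weights as $f_{s_i}(s)=\sum_{\hat s}\sigma(s_i,\hat s)\sigma(s,\hat s)\in[0,1]$ and bounds $\sum_s(a_s-b_s)f_{s_i}(s)$ directly, whereas you first pass to the best pure reply $\max_{\hat s}\sum_s(a_s-b_s)\sigma(s,\hat s)$ and then apply the same sign-splitting; your route is slightly cleaner and in fact proves the marginally stronger statement that the bound holds for any response $\tau$ by agent $i$, not just $\tau=\sigma$.
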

\begin{proof}
    First truth-telling is a strict equilibrium, because if $\rs_i = s$, 
    \begin{align*}
        &\argmax_{\hat{s}} \ExpectC{U^{BPP}(\hat{s}, \rs_j, \rs_k)}{\rs_i = s}\\
        =& \argmax_{\hat{s}} \Pr[\rs_j = \hat{s}\mid \rs_i = s]-\Pr[\rs_k = \hat{s} \mid \rs_i = s]\\
        =& s \tag{by \cref{def:ud_general}}
    \end{align*}
    Additionally, because $\Pr[\rs_j = s\mid \rs_i = s]-\Pr[\rs_k = s \mid \rs_i = s]>\Pr[\rs_j = s'\mid \rs_i = s]-\Pr[\rs_k = s' \mid \rs_i = s]$ for all $s'\neq s$, summing over all possible $s'\in \Omega$ on both sides gets $\Pr[\rs_j = s\mid \rs_i = s]-\Pr[\rs_k = s \mid \rs_i = s]>0$ and 
    \begin{equation*}
        \Expect{U^{BPP}(\rs_i, \rs_j, \rs_k)}>0.
    \end{equation*}
    For any informed equilibrium, by a direct computation $\Expect{U^{BPP}(\hat{\rs}_i, \hat{\rs}_j, \hat{\rs}_k)} = 0$.

    Finally, we show that the truth-telling has the maximum expected payment for each agents.  When all agent use a strategy $\sigma: \Omega\to \Omega$, agent $i$'s expected payment is 
    \begin{align*}
        &\sum_{s_i, \hat{s}_i\in \Omega}\Pr[\rs_i = s_i]\sigma(s_i, \hat{s}_i)\ExpectC{U^{BPP}(\hat{s}_i, \hat{\rs}_j, \hat{\rs}_k)}{\rs_i = s_i}\\
        =& 2\sum_{s_i, \hat{s}_i\in \Omega}\Pr[\rs_i = s_i]\sigma(s_i, \hat{s}_i)        \sum_{s\in \Omega}(\Pr[\rs_j = s\mid \rs_i = s_i]-\Pr[\rs_k = s\mid \rs_i = s_i])\sigma(s, \hat{s}_i)\\
        =& 2\sum_{s_i\in \Omega}\Pr[\rs_i = s_i]\sum_{\hat{s}_i, s\in \Omega}\sigma(s_i, \hat{s}_i)\sigma(s, \hat{s}_i) (\Pr[\rs_j = s\mid \rs_i = s_i]-\Pr[\rs_k = s\mid \rs_i = s_i])
        \end{align*}
Let $f_{s_i}(s):= \sum_{\hat{s}_i\in \Omega}\sigma(s_i, \hat{s}_i)\sigma(s, \hat{s}_i)$ which is between $0$ and $1$, because $f_{s_i}(s)\le \sum_{\hat{s}_i\in \Omega}\sigma(s_i, \hat{s}_i)\sum_{\hat{s}_i\in \Omega}\sigma(s, \hat{s}_i) = 1$.  Then the expectation becomes
\begin{align*}
        &\sum_{s_i, \hat{s}_i\in \Omega}\Pr[\rs_i = s_i]\sigma(s_i, \hat{s}_i)\ExpectC{U^{BPP}(\hat{s}_i, \hat{\rs}_j, \hat{\rs}_k)}{\rs_i = s_i}\\
        =& 2\sum_{s_i\in \Omega}\Pr[\rs_i = s_i]\sum_{s\in \Omega}\left(\Pr[\rs_j = s\mid \rs_i = s_i]-\Pr[\rs_k = s\mid \rs_i = s_i]\right)f_{s_i}(s)\\
        \le& 2\sum_{s_i\in \Omega}\Pr[\rs_i = s_i]\left(\Pr[\rs_j = s_i\mid \rs_i = s_i]-\Pr[\rs_k = s_i\mid \rs_i = s_i]\right)\\
         =& \Expect{U^{BPP}(\rs_i,\rs_j,\rs_k)}
        \end{align*}
        The inequality holds because   $f_{s_i}\in [0,1]$ and \cref{def:ud_general}.  Therefore, we complete the proof.    
\end{proof}
\section{Proofs in Section~\ref{sec:model}: Bayesian SST model and other models}
The proofs of \cref{prop:para,prop:mallows} are standard, and variations can be found in related literature. We include proofs here for completeness.
\begin{proof}[Proof of \cref{prop:para}]
    First given $\theta\in \R^\mathcal{A}$, for all distinct $a, a', a''\in \mathcal{A}$, $\Pr[T_\theta(a, a') = 1], \Pr[T_\theta(a', a'') = 1]>1/2$ implies that $\theta_{a}-\theta_{a'}>0$ and $\theta_{a'}-\theta_{a''}>0$ becuase $F$ is strictly increasing and $F(0) = 1/2$.  Because 
    $\theta_a-\theta_{a''} = \theta_{a}-\theta_{a'}+\theta_{a'}-\theta_{a''}> \max (\theta_{a}-\theta_{a'}, \theta_{a'}-\theta_{a''})$, we have 
    \begin{align*}
        \Pr[T_\theta(a, a'') = 1] =& F(\theta_a-\theta_{a''})\\
        >&\max F(\theta_{a}-\theta_{a'}), F(\theta_{a'}-\theta_{a''})\\
        =& \max \Pr[T_\theta(a, a') = 1], \Pr[T_\theta(a', a'') = 1]
    \end{align*}
    and thus $T_\theta$ is strongly stochastically transitive for all $\theta$ with distinct coordinates which happens surely as $\nu$ is non-atomic. 
    Finally, since the distribution on $\theta$ is exchangeable on each coordinate, $\Expect{\Expect{T_\theta(a, a')}} = 0$ for all $a, a'$.  
\end{proof}

\begin{proof}[Proof of \cref{prop:mallows}]
    First given $\theta\in \Theta$, for all distinct $a, a'\in \mathcal{A}$, if the rank of $a$ is higher than $a'$,
    \begin{equation*}
        \Pr[T_\theta(a, a') = 1] = h_\eta(\theta(a')-\theta(a)+1)-h_\eta(\theta(a')-\theta(a))
    \end{equation*}
    where $h_\eta(x) = \frac{x}{1-\exp(-\eta x)}$ by \citet{pmlr-v32-busa-fekete14}.  
    \begin{claim}\label{claim:mallows}
        For any $\eta>0$ and $x\in \mathbb{Z}_{>0}$, the difference $h_\eta(x+1)-h_\eta(x)$ is increasing and larger than $1/2$ where $h_\eta(x) = \frac{x}{1-\exp(-\eta x)}$.
    \end{claim}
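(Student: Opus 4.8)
\textbf{Proof plan for Claim~\ref{claim:mallows}.}

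The plan is to analyze $h_\eta(x) = \frac{x}{1-e^{-\eta x}}$ as a function of the integer $x\ge 1$ and establish two facts about the discrete difference $g(x) := h_\eta(x+1)-h_\eta(x)$: that $g$ is increasing in $x$, and that $g(x)>1/2$ for every $x\in\mathbb{Z}_{>0}$. First I would record the useful rewriting $h_\eta(x) = \sum_{m=0}^{x-1}\frac{1}{\,\text{(something)}\,}$ — more precisely, observe that $h_\eta(x)$ equals the expected value (times a normalizer) of a truncated geometric random variable; concretely $\frac{1}{1-e^{-\eta x}} = \sum_{r\ge 0} e^{-\eta x r}$ suggests expanding $h_\eta(x)=x\sum_{r\ge 0}e^{-\eta x r}$. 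It is cleaner, though, to go through the probabilistic interpretation already implicit in the Mallows model: $h_\eta(x+1)-h_\eta(x)$ is exactly $\Pr[T_\theta(a,a')=1]$ for items at rank-distance $x$ in the reference ranking, so it is automatically in $(0,1)$, and the content is the two monotonicity/threshold statements.

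For the monotonicity of $g$, I would compute $g(x)$ in closed form. Writing $q=e^{-\eta}\in(0,1)$,
\[
h_\eta(x)=\frac{x}{1-q^{x}},\qquad g(x)=\frac{x+1}{1-q^{x+1}}-\frac{x}{1-q^{x}}.
\]
Putting over a common denominator, $g(x)=\dfrac{(x+1)(1-q^{x})-x(1-q^{x+1})}{(1-q^{x+1})(1-q^{x})}=\dfrac{1-(x+1)q^{x}+x q^{x+1}}{(1-q^{x})(1-q^{x+1})}$. The numerator is $1-q^x\bigl((x+1)-xq\bigr)=1-q^{x}\bigl(1+x(1-q)\bigr)$. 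So
\[
g(x)=1-\frac{q^{x}\bigl(1+x(1-q)\bigr)}{(1-q^{x})(1-q^{x+1})}\;=\;1-\frac{q^{x}+x(1-q)q^{x}}{(1-q^{x})(1-q^{x+1})}.
\]
Now I would show the subtracted fraction is strictly decreasing in $x$ (which gives $g$ increasing). One clean route: interpret $g(x)$ as $\mathbb{E}[\min(G,x+1)]-\mathbb{E}[\min(G,x)]=\Pr[G\ge x+1]$ is not quite it; instead note $h_\eta(x)=\sum_{j=1}^{x}\Pr[\widetilde G_x\ge j]$ for the geometric-type law — but the most robust approach is simply to treat $g$ as a function of a real variable $t\ge 1$, differentiate $g(t)=1-\frac{q^{t}(1+t(1-q))}{(1-q^{t})(1-q^{t+1})}$, and check the sign of $g'(t)$; after factoring out positive quantities this reduces to a single-variable inequality in $u=q^{t}\in(0,1)$ and $\ell=-\ln q>0$, provable by elementary estimates (e.g. $e^{\ell}>1+\ell$). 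Since the paper treats $x$ as an integer, an alternative is to verify $g(x+1)>g(x)$ directly by cross-multiplying the two fractional expressions and reducing to a polynomial inequality in $q$, which can be checked termwise; I would likely present whichever of these is shortest.

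For the threshold $g(x)>1/2$: using the closed form, $g(x)>\tfrac12$ is equivalent to $\frac{q^{x}(1+x(1-q))}{(1-q^{x})(1-q^{x+1})}<\tfrac12$, i.e. $2q^{x}(1+x(1-q))<(1-q^{x})(1-q^{x+1})$. Since $g$ is increasing (from the previous step), it suffices to check the limit as $x\to\infty$, where $g(x)\to 1>1/2$, together with the smallest case; more carefully, increasingness alone does not give a lower bound, so instead I would combine it with the observation that $\lim_{x\to\infty}g(x)=1$ and use the symmetry/structure, \emph{or} — more directly — prove $g(x)\ge g(1)$ and then show $g(1)=\frac{2}{1-q^{2}}-\frac{1}{1-q}=\frac{2-(1+q)}{1-q^{2}}=\frac{1-q}{1-q^{2}}=\frac{1}{1+q}$. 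But $\frac{1}{1+q}>\frac12$ since $q<1$. Hence by monotonicity $g(x)\ge g(1)=\frac{1}{1+q}>\frac12$ for all $x\ge 1$, which closes the claim.

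\textbf{Main obstacle.} The threshold part is immediate once monotonicity is in hand (it collapses to the one-line evaluation $g(1)=1/(1+q)$). So the real work — and the step I expect to be fussiest — is establishing that $g(x)=h_\eta(x+1)-h_\eta(x)$ is increasing. The closed form makes this a concrete inequality in $q\in(0,1)$ and $x$, but the cross-multiplied polynomial has several terms of mixed sign, so some care is needed to group them (e.g. pairing $q^{x}$-terms with $q^{2x}$-terms and using convexity of $q\mapsto q^{x}$, or using $1-q^{x}\ge x q^{x-1}(1-q)$ type bounds judiciously) to make every grouped term manifestly nonnegative. I would budget essentially all the effort here and keep the rest as short evaluations.
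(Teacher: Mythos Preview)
Your proposal is correct and the threshold step is exactly the paper's: once $g$ is known to be increasing, evaluate $g(1)=h_\eta(2)-h_\eta(1)=\frac{1}{1+q}=\frac{1}{1+e^{-\eta}}>\tfrac12$.

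Where you diverge is in proving that $g(x)=h_\eta(x+1)-h_\eta(x)$ is increasing. You plan to manipulate the closed form of the \emph{difference} (differentiate $g(t)$ in the real variable, or cross-multiply $g(x+1)>g(x)$ and chase signs), and you correctly flag this as the fussy part. The paper sidesteps all of that by proving the stronger statement that $h_\eta$ itself is strictly convex on $(0,\infty)$: it rescales via $h_\eta(x)=\tfrac{1}{\eta}h_1(\eta x)$ to reduce to $\eta=1$, computes $h_1''(x)=\frac{e^{-x}\bigl((x-2)+(x+2)e^{-x}\bigr)}{(1-e^{-x})^{3}}$, and shows the numerator factor $\varphi(x)=(x-2)+(x+2)e^{-x}$ is positive because $\varphi(0)=0$ and $\varphi'(x)=1-(x+1)e^{-x}>0$. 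Strict convexity then gives $h_\eta(x+2)-h_\eta(x+1)>h_\eta(x+1)-h_\eta(x)$ for free. This is considerably shorter than either of your proposed routes and avoids any multi-term polynomial bookkeeping; your approach would work but buys nothing extra, so if you are looking to economize, switch to the convexity argument.
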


 By \cref{claim:mallows}, $\Pr[T_\theta(a, a') = 1], \Pr[T_\theta(a', a'') = 1]>1/2$ implies that $\theta(a')-\theta(a)>0$ and $\theta(a'')-\theta(a')>0$.  Thus, 
    $\theta(a'')-\theta(a) > \max (\theta(a'')-\theta(a'), \theta(a'')-\theta(a'))$, and 
    \begin{align*}
        &\Pr[T_\theta(a, a'') = 1] = h(\theta(a'')-\theta(a)+1)-h(\theta(a'')-\theta(a))\\
        >&\max h(\theta(a'')-\theta(a')+1)-h(\theta(a'')-\theta(a')), h(\theta(a')-\theta(a)+1)-h(\theta(a')-\theta(a))\\
        =& \max \Pr[T_\theta(a, a') = 1], \Pr[T_\theta(a', a'') = 1]
    \end{align*}
    where the second inequality is due to  \cref{claim:mallows}.  Therefore, $T_\theta$ is strongly stochastically transitive for all $\theta$. 
    Finally, $\Expect{\Expect{T_\theta(a, a')}} = 0$ for all $a, a'$ since $\theta$ is an uniform distribution on rankings.
\end{proof}

\begin{proof}[Proof of \cref{claim:mallows}]
We first prove that the function $h_\eta(x) = \frac{x}{1-\exp(-\eta x)}$ is increasing and strictly convex on $x\ge 0$.  Because $h_\eta(x) = \frac{1}{\eta}h_1(\eta x)$, for all $\eta, x$, it is sufficient to consider $\eta = 1$.  First, $h_1'(x) = \frac{1-(x+1)e^{-x}}{(1-e^{-x})^2}>0$, so $h_1$ is increasing.  Second, as $h_1''(x) = \frac{e^{-x}((x-2)+(x+2)e^{-x})}{(1-e^{-x})^3}$, to show $h_1''(x)>0$ for all $x> 0$, it is sufficient to show that $g(x) = (x-2)+(x+2)e^{-x}>0$.  Because $g(0) = 0$ and $g'(x) = 1-(x+1)e^{-x}>0$, $g(x)>0$ for all $x>0$.  Therefore, $h_1$ is strictly convex.  

    On the other hand, $h_\eta(x+2)-h_\eta(x+1)>h_\eta(x+1)-h_\eta(x)$ for all $x$ by convexity, and $h_\eta(2)-h_\eta(1) = \frac{1}{1+e^{-\eta}}>\frac{1}{2}$ which completes the proof. 
\end{proof}

\section{Proofs in Section~\ref{sec.comparison} and \ref{sec.comparison_proof}}

\subsection{Uniform dominance from Bayesian SST}
\begin{proof}[Proof of \cref{lem:SST2UD}]
With a prior similar assumption for Bayesian SST model, we only need to show 
\begin{equation}\label{eq:comparison0}
    \Pr[\rs(a'', a') = 1\mid \rs(a, a') = 1]>\Pr[\rs(a'', a) = 1\mid \rs(a, a') = 1],
\end{equation}
and the other case $\Pr[\rs(a'', a') = -1\mid \rs(a, a') = -1]>\Pr[\rs(a'', a) = -1\mid \rs(a, a') = -1]$ follows by symmetry.  
To prove \cref{eq:comparison0}, we can rewrite the conditional probability in expectations of $T_\theta$. 
\begin{align*}
    &\Pr[\rs(a'', a') = 1\mid \rs(a, a') = 1]\\
    =& \frac{\int \Pr[\rt_\theta(a'', a') = 1, \rt_\theta(a, a') = 1\mid \theta]dP_\Theta}{\int \Pr[ \rt_\theta(a, a') = 1\mid \theta]dP_\Theta}\\
    =&\frac{\int \Pr[\rt_\theta(a'', a') = 1\mid \theta]\Pr[\rt_\theta(a, a') = 1\mid \theta]dP_\Theta}{\int \Pr[ \rt_\theta(a, a') = 1\mid \theta]dP_\Theta}\tag{conditional independent}\\
    =&2\int \Pr[\rt_\theta(a'', a') = 1\mid \theta]\Pr[\rt_\theta(a, a') = 1\mid \theta]dP_\Theta\tag{a prior similar}\\
    =&2\int \frac{\ExpectC{\rt_\theta(a'', a')}{\theta}+1}{2}\frac{\ExpectC{\rt_\theta(a, a')}{\theta}+1}{2}dP_\Theta\tag{binary value}\\
    =&\frac{1}{2}\int \ExpectC{\rt_\theta(a'', a')}{\theta}\ExpectC{\rt_\theta(a, a')}{\theta}+\ExpectC{\rt_\theta(a'', a')}{\theta}+\ExpectC{\rt_\theta(a, a')}{\theta}+1dP_\Theta\\
    =&\frac{1}{2}\int \ExpectC{\rt_\theta(a'', a')}{\theta}\ExpectC{\rt_\theta(a, a')}{\theta}+1dP_\Theta.\tag{a prior similar}
\end{align*}
\begin{claim}\label{lem:transitive}
    For any strongly stochastically transitive $\rt_\theta$ on $\mathcal{A}$, and distinct $a, a', a''\in \mathcal{A}$
    $$\ExpectC{\rt_\theta(a, a')}{\theta}\ExpectC{\rt_\theta(a'', a')}{\theta}>\ExpectC{\rt_\theta(a, a')}{\theta}\ExpectC{\rt_\theta(a'', a)}{\theta}.$$
\end{claim}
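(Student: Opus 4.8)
The plan is to reduce the claim to a monotonicity property of strongly stochastically transitive comparison functions and then to verify that property by a short case analysis on a strict total order over three items. Throughout I work with a fixed realization $\theta$, and for distinct items $x,y$ I abbreviate $m(x,y) := \ExpectC{\rt_\theta(x,y)}{\theta} = 2\Pr[\rt_\theta(x,y)=1\mid\theta]-1$, so that $m(y,x) = -m(x,y)$, and, by the ex-post distinctness assumption built into the Bayesian SST model, $m(x,y)\in(-1,1)\setminus\{0\}$ for all distinct $x,y$. In this notation the claim is exactly $m(a,a')\,\bigl(m(a'',a')-m(a'',a)\bigr)>0$.

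First I would observe that it suffices to prove the single implication
$$m(a,a')>0 \ \Longrightarrow\ m(a'',a')>m(a'',a).$$
Indeed, if instead $m(a,a')<0$, then $m(a',a)=-m(a,a')>0$, and applying the implication with the roles of $a$ and $a'$ interchanged yields $m(a'',a)>m(a'',a')$; hence both factors of $m(a,a')\bigl(m(a'',a')-m(a'',a)\bigr)$ are negative and the product is positive, as wanted. To prove the implication, assume $m(a,a')>0$ and write $x\succ y$ whenever $\Pr[\rt_\theta(x,y)=1\mid\theta]>1/2$, equivalently $m(x,y)>0$. By strong stochastic transitivity the relation $\succ$ is transitive, and by ex-post distinctness it is complete; restricted to $\{a,a',a''\}$ it is therefore a strict total order, in which $a\succ a'$, so $a''$ occupies one of three positions. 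If $a\succ a'\succ a''$, then SST applied to $(a,a',a'')$ gives $\Pr[\rt_\theta(a,a'')=1\mid\theta]>\Pr[\rt_\theta(a',a'')=1\mid\theta]$, i.e.\ $m(a,a'')>m(a',a'')$, i.e.\ $m(a'',a')>m(a'',a)$. If $a\succ a''\succ a'$, then $m(a'',a')>0>m(a'',a)$ directly. If $a''\succ a\succ a'$, then SST applied to $(a'',a,a')$ gives $\Pr[\rt_\theta(a'',a')=1\mid\theta]>\Pr[\rt_\theta(a'',a)=1\mid\theta]$, i.e.\ $m(a'',a')>m(a'',a)$. In every case $m(a'',a')>m(a'',a)$, which establishes the implication and hence the claim for each $\theta$.

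The calculation itself is routine; the only points deserving attention are, first, that the \emph{strict} inequality genuinely relies on the standing ``ex-post distinct'' assumption (were $m(a,a')$ allowed to vanish, the strict sign would fail), and second, that the inequality is proved \emph{pointwise} in $\theta$ — which is precisely what is needed so that, upon integrating against $P_\Theta$ in the proof of \cref{lem:SST2UD}, the gap between the two conditional probabilities survives as a strict inequality. I do not anticipate any real obstacle beyond bookkeeping the signs and the case split correctly.
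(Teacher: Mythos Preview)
Your proof is correct and takes essentially the same approach as the paper: both reduce (by symmetry) to showing $m(a,a')>0\Rightarrow m(a'',a')>m(a'',a)$ and then do a three-way case split on the position of $a''$, invoking SST in the two extreme cases and the trivial sign argument in the middle case. Your framing via a strict total order is slightly more explicit than the paper's direct sign-by-sign case analysis, but the cases and the uses of SST match one-to-one, and your remarks on ex-post distinctness and the pointwise-in-$\theta$ nature of the inequality are exactly the points that matter for the surrounding argument.
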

With \cref{lem:transitive}, we have 
\begin{align*}
    &\Pr[\rs(a'', a') = 1\mid \rs(a, a') = 1] = \frac{1}{2}\int \ExpectC{\rt_\theta(a'', a')}{\theta}\ExpectC{\rt_\theta(a, a')}{\theta}+1dP_\Theta\\
    >& \frac{1}{2}\int \ExpectC{\rt_\theta(a'', a)}{\theta}\ExpectC{\rt_\theta(a, a')}{\theta}+1dP_\Theta = \Pr[\rs(a'', a) = 1\mid \rs(a, a') = 1].
\end{align*}
This completes the proof of \cref{eq:comparison0}, and thus the uniform dominance. 
\end{proof}

\begin{proof}[Proof of \cref{lem:transitive}]
We let 
    $Q(\alpha, \alpha'):=\ExpectC{\rt_\theta(\alpha, \alpha')}{\theta} = 2\Pr[\rt_\theta(\alpha, \alpha') = 1\mid \theta]-1$
     for all $\alpha, \alpha'$. 
 Note that $Q(\alpha, \alpha')>0$ if and only if $\Pr[\rt_\theta(\alpha, \alpha') = 1\mid \theta]>1/2$ and $Q(\alpha, \alpha') = -Q(\alpha', \alpha)$.  
 
 By symmetry, let $Q(a, a')> 0$.  It is sufficient to show that $$Q(a'', a')>Q(a'', a).$$ 
If $Q(a', a'')> 0$, by \cref{def:transitive} $Q(a, a'')> Q(a', a'')>0$ so $Q(a'', a') >Q(a'', a)$.  Now consider $Q(a', a'')<0$.  If $Q(a'', a)<0$, $Q(a'', a')>0>Q(a'', a)$.  If $Q(a'', a)>0$, we have $Q(a'', a)>0,Q(a, a')>0$, and thus $Q(a'', a')>Q(a'', a)$ by \cref{def:transitive}
\end{proof}
\subsection{Uniform dominance and weak notions of stochastic transitivity}\label{app:wst}
There are weaker forms of stochastic transitivity, raising the question of whether they are sufficient for uniform dominance as in \cref{lem:SST2UD}. We show that general weak stochastic transitivity is not sufficient.  Additionally, we show that although the noisy sorting model from \cite{10.5555/1347082.1347112} is only weakly stochastically transitive but does not satisfy \cref{def:transitive}, it exhibits uniform dominance.

\begin{definition}[\cite{davidson1959experimental}]\label{def:wst}
    A stochastic comparison function, $\rt:\mathcal{A}^2\to \{-1, 1\}$, is \emph{weakly stochastically transitive} if for all $a, a', a''\in \mathcal{A}$ with $\Pr[\rt(a, a') = 1]> 1/2$ and $\Pr[\rt(a', a'') = 1]> 1/2$, 
    $$\Pr[\rt(a, a'') = 1]> 1/2.$$
\end{definition}
Compared to \cref{def:transitive}, the weak stochastic transitivity only require the item $a$ is favorable than $a''$. 
Below we provide a simple weakly stochastically transitive example with a prior similar property that does not satisfy the uniform dominance in~\cref{eq:comparison0}.
\begin{example}
Consider the set of three items and $\Theta$ consists of all ranking on $\mathcal{A}$ with uniform prior where $\theta$ maps each items to its value.  Given $\theta\in \Theta$ so that if $\theta(a)>\theta(a')>\theta(a'')$, 
$$\Pr[T_\theta(a, a') = 1] = \Pr[T_\theta(a', a'') = 1]  = 0.9 \text{ and}\Pr[T_\theta(a, a'') = 1]  = 0.6.$$
Note that the model is weakly stochastically transitive, because an item with a larger value is more favorable and the weak stochastic transitivity is reduced to transitivity on the values. However, the model is not strongly stochastically transitive, because $\Pr[T_\theta(a, a'') = 1] = 0.6<\max\{\Pr[(T(a,a') = 1], \Pr[(T(a',a'') = 1]]\} = 0.9$.  Finally, as the rank $\theta$ has a uniform prior, the model satisfies a prior similar assumption.  

To conclude the example, we show that \cref{eq:comparison0} does not hold for the above model. By direct computation over all six possible ranking $\theta$, we have
\begin{align*}
    &\Pr[\rs(a'', a') = 1\mid \rs(a, a') = 1]\\
    =&\frac{1}{2}\int \ExpectC{\rt_\theta(a'', a')}{\theta}\ExpectC{\rt_\theta(a, a')}{\theta}+1dP_\Theta\\
    =& \frac{1}{2}\left(1-\frac{64}{6}\right), 
\end{align*}
but $\Pr[\rs(a'', a) = 1\mid \rs(a, a') = 1] = \frac{1}{2}\left(1+\frac{64}{6}\right)$.  Therefore, we have $\Pr[\rs(a'', a') = 1\mid \rs(a, a') = 1]<\Pr[\rs(a'', a) = 1\mid \rs(a, a') = 1]$, and show that \cref{eq:comparison0} does not hold.  


\end{example}

Though the above example shows that weak stochastic transitivity is not sufficient.\footnote{In the above example, we can also decrease $0.9$ to a smaller number that satisfies both uniform dominance and weak stochastic transitivity.}   Below we show a popular weakly stochastically transitive model in \citet{10.5555/1347082.1347112} has uniform dominance as in \cref{lem:SST2UD}.
\begin{example}
Let $\Theta$ be the set of rankings on $\mathcal{A}$ and $\eta>0$ be a parameter.  Given a uniformly distributed reference ranking $\theta\in \Theta$, the noise ranking model~\cite{10.5555/1347082.1347112} ensures that for all $\theta(a)>\theta(a')$
\begin{equation*}
    \Pr[T_\theta(a, a') = 1] = \frac{1}{2}+\eta
\end{equation*}
Note that the above model does not satisfy the strict inequality in \cref{def:transitive}, but by direct computation, $\Pr[\rs(a'', a') = 1\mid \rs(a, a') = 1]= \frac{1}{2}\left(1+\frac{4\gamma^2}{3}\right)$ and $\Pr[\rs(a'', a) = 1\mid \rs(a, a') = 1] = \frac{1}{2}\left(1-\frac{4\gamma^2}{3}\right)$, which satisfies \cref{lem:SST2UD}.


\end{example}
\subsection{Symmetrically strongly truthful from uniform dominance}
\begin{proof}[Proof of \cref{lem:ud}]
    Suppose $\rs_i = 1$. Because $\Pr[\rs_j = 1|\rs_i = 1]>\Pr[\rs_k = 1|\rs_i = 1]$, $\Pr[\rs_j = -1|\rs_i = 1]<\Pr[\rs_k = -1|\rs_i = 1]$.  Therefore, $\argmax_{\hat{s} \in \{-1,1\}}\Pr[\rs_j = \hat{s}|\rs_i = 1]-\Pr[\rs_k = \hat{s}_i|\rs_i = 1] = 1$.  Identical argument holds for the case of $\rs_i = -1$ which completes the proof. 

    Additionally, the expected payment of truth-telling is 
    \begin{align*}
        \E\left[U^{BPP}(\rs_i, \rs_j, \rs_k)\right] =& \sum_{a}\Pr[\rs_i = s_i]\sum_{s_j, s_k}\Pr[\rs_j = s_j, \rs_k = s_k\mid \rs_i = s_i]U^{BPP}(s_i, s_j, s_k)\\
         =& 2\sum_{a}\Pr[\rs_i = s_i]\sum_{s_j,s_k}\Pr[\rs_j = s_j, \rs_k = s_k\mid \rs_i = s_i] (\mathbf{1}[s_i = s_k]-\mathbf{1}[s_i = s_k])\\
         =& 2\sum_{a}\Pr[\rs_i = s_i]\left(\Pr[\rs_j = s_i\mid \rs_i = s_i]-\Pr[\rs_k = s_i\mid \rs_i = s_i]\right)\\
         >&0
    \end{align*}
    The last inequality holds due to \cref{def:ud}. 
\end{proof}

\begin{proof}[Proof of \cref{lem:uninformed}]
    As $\sigma$ is uninformed, let $\mu(s) = \sigma(s, s) = \sigma(-s, s)$ and $\mu(-s) = \sigma(s, -s) = \sigma(-s, -s)$ for all $s$.  
    \begin{align*}
        \E\left[U^{BPP}(\hat{s}_i, \hat{\rs}_j, \hat{\rs}_k)\mid \rs_i = s_i\right] =& \sum_{\hat{s}_j, \hat{s}_k}\mu(\hat{s}_j)\mu(\hat{s}_k)U^{BPP}(\hat{s}_i, \hat{\rs}_j, \hat{\rs}_k) = \sum_{\hat{s}_j, \hat{s}_k}\mu(\hat{s}_j)\mu(\hat{s}_k)(\hat{s}_i\hat{s}_j-\hat{s}_i\hat{s}_k) = 0
    \end{align*}
    The first equality holds as the reports are independent of signals. 
\end{proof}

\begin{proof}[Proof of \cref{lem:sst}]
    \begin{align*}
        &\E_{P, \sigma}\left[U^{BPP}(\hat{s}_i, \hat{\rs}_j, \hat{\rs}_k)\mid \rs_i = s_i\right]\\
        =&\sum_{s_j, s_k, \hat{s}_j, \hat{s}_k}\Pr[\rs_j = s_j, \rs_k = s_k\mid \rs_i = s_i]\sigma(s_j, \hat{s}_j)\sigma(s_k, \hat{s}_k)U^{BPP}(\hat{s}_i, \hat{s}_j, \hat{s}_k)\\
        =&2\sum_{s_j, s_k, \hat{s}_j, \hat{s}_k}\Pr[\rs_j = s_j, \rs_k = s_k\mid \rs_i = s_i]\sigma(s_j, \hat{s}_j)\sigma(s_k, \hat{s}_k)\left(\mathbf{1}[\hat{s}_i = \hat{s}_j]-\mathbf{1}[\hat{s}_i = \hat{s}_k]\right)\tag{by \cref{eq:ca}}\\
        =&2\sum_{s_j, \hat{s}_j}\Pr[\rs_j = s_j\mid \rs_i = s_i]\sigma(s_j, \hat{s}_j)\mathbf{1}[\hat{s}_i = \hat{s}_j]-2\sum_{s_k,\hat{s}_k}\Pr[\rs_k = s_k\mid \rs_i = s_i]\sigma(s_k, \hat{s}_k)\mathbf{1}[\hat{s}_i = \hat{s}_k]\\
        =&2\sum_{s, \hat{s}}(\Pr[\rs_j = s\mid \rs_i = s_i]-\Pr[\rs_k = s\mid \rs_i = s_i])\sigma(s, \hat{s})\mathbf{1}[\hat{s}_i = \hat{s}]\tag{renaming dummy variables}\\
        =&2\sum_{s}(\Pr[\rs_j = s\mid \rs_i = s_i]-\Pr[\rs_k = s\mid \rs_i = s_i])\sigma(s, \hat{s}_i)
    \end{align*}
    Let $\delta = \Pr[\rs_j = s_i\mid \rs_i = s_i]-\Pr[\rs_k = s_i\mid \rs_i = s_i]>0$, because $\rs_j$ uniformly dominates $\rs_k$ for $\rs_i$.  Additionally, $\Pr[\rs_j = -s_i\mid \rs_i = s_i]-\Pr[\rs_k = -s_i\mid \rs_i = s_i] = 1-\Pr[\rs_j = s_i\mid \rs_i = s_i]-1+\Pr[\rs_k = s_i\mid \rs_i = s_i] = -\delta$.  We have
     \begin{align*}
        &\E_{P, \sigma}\left[U^{BPP}(\hat{s}_i, \hat{\rs}_j, \hat{\rs}_k)\mid \rs_i = s_i\right]\\
        =&2\sum_{s}(\Pr[\rs_j = s\mid \rs_i = s_i]-\Pr[\rs_k = s\mid \rs_i = s_i])\sigma(s, \hat{s}_i)\\
        =&2\delta\left(\sigma(s_i, \hat{s}_i)-\sigma(-a, \hat{s}_i)\right),
    \end{align*}
    so $\argmax_{\hat{s}_i \in \{-1,1\}}\E_{P, \sigma}\left[U^{BPP}(\hat{s}_i, \hat{\rs}_j, \hat{\rs}_k)\mid \rs_i = s_i\right] = \argmax_{\hat{s}_i \in \{-1,1\}}\left\{\sigma(s_i, \hat{s}_i)-\sigma(-s_i, \hat{s}_i)\right\}$ which completes the proof.
\end{proof}

\section{Proofs in Section~\ref{sec.network}}

Before diving into the proof, we introduce some notations.  
We further introduce Ising models with bias parameter $\valpha\in \R_{\ge 0}^V$ in addition to $\vbeta$ where 
\begin{equation*}
    H(\vs) = \sum_{i,j\in V}\beta_{i,j}s_is_j+\sum_{i\in V}\alpha_i  s_i
\end{equation*} and $\Pr_{\valpha, \vbeta}[\rvs = \vs] \propto \exp(H(\vs))$, for all configuration $\vs$.
Given $i\in V$, let the expectation and ratio be \begin{equation*}
    \nu_i(\valpha, \vbeta) = \Expect[\valpha,\vbeta]{\rs_i} = \Pr_{\valpha, \vbeta}[\rs_i = 1]-\Pr_{\valpha, \vbeta}[\rs_i = -1]\text{ and }\rho_i(\valpha, \vbeta) = \frac{\Pr_{\valpha, \vbeta}[\rs_i = 1]}{\Pr_{\valpha, \vbeta}[\rs_i = -1]}
\end{equation*} 
respectively which are monotone to each other.  We will omit $\valpha, \vbeta$ when clear.  Given a subset $U \subseteq V$, $\vs_U \in \{-1,1\}^U$ is a configuration over the nodes in $U$, and $\vs_U = 1$ if $x_\iota = 1$ for all $\iota\in U$.  We write $\Pr[\cdot| \rvs_U = \vs_U]$, $\nu_{i\mid \rvs_U = \vs_U}$, and $\rho_{i\mid \rvs_U = \vs_U}$ for the conditional probability, expectation and ratio when the configuration in $U$ is fixed as specified by $\vs_U$.

\paragraph{A lower bound for LHS}
Informally, we want to lower bound the correlation between  adjacent $i$ and $j$ (friends).  Note that as we remove edges (setting coordinates of $\bm{\beta}$ to zeros), the correlation should decrease, and the smallest correlation between neighboring nodes $i$ and $j$ happens when $E = \{(i,j)\}$.  \Cref{lem:network_rhs} formalizes this idea using the following monotone inequality~\cite[Theorem 17.2]{10.1093/acprof:oso/9780198570837.001.0001}.
\begin{theorem}[Griffiths' inequality]\label{thm:monotone}
    For any $i\in V$, $\nu_i(\valpha, \vbeta) = \Expect[\valpha, \vbeta]{\rs_i}$ is non-negative and non-decreasing in all $\beta_{j,k}\ge 0$ and $\alpha_j\ge 0$ with $j,k\in V$.
\end{theorem}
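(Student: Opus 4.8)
This statement is the combination of Griffiths' first and second inequalities together with the covariance identity for Gibbs expectations, so the plan is to reduce it to those and then prove each by a standard expansion argument. First I would eliminate the external field by the ghost-spin device: adjoin one extra vertex $0$ to $V$, declare a coupling $\beta_{0,j} := \alpha_j \ge 0$ between $0$ and each $j$, and delete the field term. Since the resulting zero-field measure is invariant under the global flip $\vs \mapsto -\vs$, conditioning on $s_0 = 1$ reproduces the original model, and a one-line computation gives $\nu_i(\valpha,\vbeta) = \E[\,s_i s_0\,]$ in the extended model, which is now a purely ferromagnetic, zero-field Ising model on $V\cup\{0\}$ with all couplings non-negative. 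Moreover each $\alpha_j$ has become a genuine coupling $\beta_{0,j}$ and each original $\beta_{j,k}$ is still a coupling, so it suffices to prove, for an arbitrary finite ferromagnetic zero-field Ising model with expectation $\langle\cdot\rangle$: (i) $\langle s_A\rangle \ge 0$ for every $A\subseteq V$, writing $s_A:=\prod_{v\in A}s_v$; and (ii) $\langle s_A s_B\rangle \ge \langle s_A\rangle\langle s_B\rangle$ for all $A,B\subseteq V$. Granting (i) and (ii): non-negativity of $\nu_i$ is (i) applied to $A=\{i,0\}$; and by the covariance identity $\partial_\lambda\langle f\rangle = \langle f\,\partial_\lambda H\rangle - \langle f\rangle\langle\partial_\lambda H\rangle$ we have $\partial_{\beta_{j,k}}\langle s_i s_0\rangle = \langle s_i s_0 s_j s_k\rangle - \langle s_i s_0\rangle\langle s_j s_k\rangle$ and $\partial_{\alpha_j}\langle s_i s_0\rangle = \langle s_i s_j\rangle - \langle s_i s_0\rangle\langle s_0 s_j\rangle$ (using $s_0^2=1$), so monotonicity in every $\beta_{j,k}$ and every $\alpha_j$ follows from (ii) with $A=\{i,0\}$ and $B=\{j,k\}$ or $B=\{0,j\}$.

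For (i) I would expand the Boltzmann weight edge by edge using $e^{\beta_e s_i s_j} = \cosh\beta_e\,(1 + s_i s_j \tanh\beta_e)$, which is a finite expansion with non-negative coefficients since $\beta_e \ge 0$. Multiplying over edges, $Z\,\langle s_A\rangle = \big(\prod_e \cosh\beta_e\big)\sum_{F\subseteq E}\big(\prod_{e\in F}\tanh\beta_e\big)\sum_{\vs} s_A \prod_{e\in F} s_e$, where $s_e := s_i s_j$ for $e=(i,j)$. The inner configuration sum factorizes as $\prod_v\sum_{s_v\in\{-1,1\}} s_v^{d_v}$, where $d_v$ is the number of times $v$ appears (once if $v\in A$, plus its degree in $F$); each one-spin sum is $2$ or $0$ according to the parity of $d_v$, so the inner sum is in $\{0,2^{|V|}\}$. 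Hence every term of the outer sum is non-negative, and since $Z>0$ we get $\langle s_A\rangle \ge 0$.

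For (ii) I would run the duplicated-system argument: take two independent copies $\vs,\vs'$ with the same couplings, so (ii) is equivalent to $\sum_{\vs,\vs'}\big(s_A s_B - s_A s_B'\big)\,e^{H(\vs)+H(\vs')} \ge 0$ (the left side equals $Z^2(\langle s_As_B\rangle - \langle s_A\rangle\langle s_B\rangle)$). Apply the bijective change of variables $s_v = \epsilon_v$, $s_v' = \epsilon_v \delta_v$ with $\epsilon_v,\delta_v\in\{-1,1\}$: then $H(\vs)+H(\vs') = \sum_{e=(i,j)} \beta_e\,\epsilon_i\epsilon_j\,(1+\delta_i\delta_j)$ with $1+\delta_i\delta_j \in \{0,2\}$, and $s_A s_B - s_A s_B' = \big(\prod_{v\in A\triangle B}\epsilon_v\big)\big(1 - \prod_{v\in B}\delta_v\big)$ with $1-\prod_{v\in B}\delta_v \in \{0,2\}$. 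Expand the exponential in the $\epsilon$-variables via the same $\cosh/\tanh$ identity (legitimate since $\beta_e(1+\delta_i\delta_j)\ge 0$), so all coefficients are non-negative; for each fixed $\delta$, summing over all $\epsilon$ is the parity check from (i) and returns a non-negative number; multiplying by the non-negative factor $1-\prod_{v\in B}\delta_v$ and summing over $\delta$ leaves the total non-negative. Dividing by $Z^2>0$ gives (ii).

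The main obstacle is the change of variables in step (ii): one must find a substitution on the doubled configuration space under which both the observable gap $s_A s_B - s_A s_B'$ and the combined energy $H(\vs)+H(\vs')$ simultaneously become polynomials in the new spins with manifestly non-negative coefficients. The pair $(\epsilon_v,\delta_v)=(s_v,\,s_v s_v')$ does this, but verifying that the two bracketed factors land in $\{0,2\}$ and that the parity argument for the $\epsilon$-sum still applies is where care is needed. The field-elimination reduction and the covariance identities are routine, and (i) is a short parity argument once the edge expansion is written down.
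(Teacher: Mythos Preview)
The paper does not prove this theorem; it is quoted as a classical result from the statistical-mechanics literature (cited as Theorem~17.2 of a standard reference) and used as a black box in the analysis of the Ising-model mechanism. So there is no ``paper's own proof'' to compare against.

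That said, your proposal is correct and is essentially the textbook argument. The ghost-spin reduction to a zero-field ferromagnetic model, the high-temperature expansion for Griffiths~I, and the Ginibre duplication with the substitution $(\epsilon_v,\delta_v)=(s_v,s_vs_v')$ for Griffiths~II are all standard and sound. Your covariance computations are right, including the step $\partial_{\alpha_j}\langle s_is_0\rangle=\langle s_is_j\rangle-\langle s_is_0\rangle\langle s_0s_j\rangle$, which is Griffiths~II with $A=\{i,0\}$ and $B=\{0,j\}$ since $s_As_B=s_is_j$. The only place to be slightly more explicit is the bijection check in step~(ii): you should say in one line that $(\vs,\vs')\mapsto(\bm{\epsilon},\bm{\delta})$ is a bijection on $\{-1,1\}^{2|V|}$ with unit Jacobian, so the sum is unchanged. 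Otherwise the argument is complete.
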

\begin{lemma}\label{lem:network_rhs}
    Given $V$ and $i,j\in V$, for all $\valpha, \vbeta$, and $\vbeta'$, if $\beta_e' = \beta_e$ when $e = (i,j)$ and $\beta_e' = 0$ otherwise, we have
    \begin{equation*}
        \nu_{i\mid \rs_j = 1}(\valpha, \vbeta)\ge \nu_{i\mid \rs_j = 1}(\valpha, \vbeta')\text{ and }\rho_{i\mid \rs_j = 1}(\valpha, \vbeta)\ge \rho_{i\mid \rs_j = 1}(\valpha, \vbeta').
    \end{equation*}
\end{lemma}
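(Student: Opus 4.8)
The plan is to express both sides as conditional expectations in Ising models and reduce the claim to Griffiths' inequality (Theorem~\ref{thm:monotone}), applied not to the original model but to a suitably conditioned one. The key observation is that conditioning on $\rs_j = 1$ in an Ising model with parameters $(\valpha, \vbeta)$ is equivalent to working in a new Ising model on the vertex set $V \setminus \{j\}$: deleting $j$ and replacing, for each neighbor $\iota$ of $j$, the bias $\alpha_\iota$ by $\alpha_\iota + \beta_{\iota,j}$ (since $s_j = 1$ turns the term $\beta_{\iota,j} s_\iota s_j$ into $\beta_{\iota,j} s_\iota$). Call the resulting bias vector $\tilde\valpha$ and edge-parameter vector $\tilde\vbeta$ (the edges not incident to $j$, with their original parameters). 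Crucially, all new biases are still nonnegative, so Theorem~\ref{thm:monotone} applies. Then $\nu_{i \mid \rs_j = 1}(\valpha,\vbeta) = \nu_i(\tilde\valpha, \tilde\vbeta)$ in this reduced model.

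Next I would compare the reduced model coming from $(\valpha,\vbeta)$ with the one coming from $(\valpha, \vbeta')$. Under $\vbeta'$ only the edge $(i,j)$ survives, so conditioning on $\rs_j = 1$ yields the reduced model on $V \setminus \{j\}$ with bias vector $\valpha'$ that equals $\valpha$ everywhere except $\alpha_i' = \alpha_i + \beta_{i,j}$, and with \emph{no} edges at all; call its edge vector $\vzero$. The reduced model from $(\valpha,\vbeta)$ has bias vector $\tilde\valpha$ with $\tilde\alpha_\iota = \alpha_\iota + \beta_{\iota,j} \ge \alpha_\iota$ for neighbors $\iota$ of $j$ (in particular $\tilde\alpha_i = \alpha_i + \beta_{i,j} = \alpha_i'$) and $\tilde\alpha_\iota = \alpha_\iota$ otherwise, and edge vector $\tilde\vbeta$ with all entries $\ge 0 = $ the corresponding entry of $\vzero$. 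Hence coordinatewise $\tilde\valpha \ge \valpha'$ and $\tilde\vbeta \ge \vzero$, and Theorem~\ref{thm:monotone} gives
\begin{equation*}
\nu_i(\tilde\valpha, \tilde\vbeta) \;\ge\; \nu_i(\valpha', \vzero) \;=\; \nu_{i \mid \rs_j = 1}(\valpha, \vbeta'),
\end{equation*}
which is the first claimed inequality. The statement for $\rho_{i \mid \rs_j = 1}$ then follows because $\rho_i$ is a strictly increasing function of $\nu_i$ (namely $\rho_i = (1+\nu_i)/(1-\nu_i)$), as already noted in the paragraph preceding the lemma.

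The main obstacle is making the reduction ``conditioning on $\rs_j = 1$ equals passing to an Ising model with shifted biases'' fully rigorous, including checking that the normalizing constants cancel correctly and that one genuinely stays in the class of Ising models with \emph{nonnegative} parameters covered by Theorem~\ref{thm:monotone} (the nonnegativity is what makes Griffiths' inequality usable, and it holds precisely because $\beta_{\iota,j} \ge 0$). A secondary point to be careful about is the degenerate case where $i = j$ is excluded by hypothesis but $i$ might equal some other special vertex, or where $(i,j) \notin E$ so that $\beta_{i,j} = 0$; in that last case both sides reduce to the model with $\alpha_i$ unchanged and the inequality still holds (with equality possible), so nothing breaks. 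Once the reduction is set up, the rest is a direct invocation of Theorem~\ref{thm:monotone} and the monotone relationship between $\nu_i$ and $\rho_i$.
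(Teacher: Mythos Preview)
Your proposal is correct. Both your argument and the paper's rest on Griffiths' inequality (Theorem~\ref{thm:monotone}), but the reductions to it differ. The paper encodes the conditioning $\rs_j=1$ by adding a large external field at $j$ and taking a limit: it sets $\alpha_j^\eta=\alpha_j+\eta$, notes that $\nu_{i\mid \rs_j=1}(\valpha,\vbeta)=\lim_{\eta\to\infty}\nu_i(\valpha^\eta,\vbeta)$ (and likewise for $\vbeta'$), and then compares the two limits using monotonicity of $\nu_i$ in the edge parameters, keeping the vertex set $V$ fixed throughout. You instead integrate out $j$ explicitly and pass to the exact conditional Ising model on $V\setminus\{j\}$, then apply Griffiths on that smaller model by checking the coordinatewise inequalities $\tilde{\valpha}\ge\valpha'$ and $\tilde{\vbeta}\ge\vzero$. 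Your route is a bit more direct---no limiting argument is needed---and makes transparent exactly which parameters go up; the paper's limiting trick is a standard statistical-physics device that avoids changing the underlying graph. Either way, the final step $\rho_i=(1+\nu_i)/(1-\nu_i)$ transfers the inequality to $\rho$, as you note.
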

\begin{proof}
    First, note that we can write the conditional expectation $\ExpectC[\valpha, \vbeta]{\rs_i}{\rs_j = 1}$ as marginal expectation.  Formally, consider $\valpha^\eta$ so that $\alpha^\eta_\iota = \alpha_\iota$ if $\iota\neq j$ and $\alpha^\eta_j = \alpha_j+\eta$.  Because $\eta\to \valpha^\eta$ is non-decreasing, $\eta\to \nu_i(\valpha^\eta, \vbeta)$ is non-decreasing by \cref{thm:monotone}.  In addition, $\Pr_{\valpha^\eta, \vbeta}[\rs_i\mid \rs_j = s] = \Pr_{\valpha, \vbeta}[\rs_i\mid \rs_j = s]$ for all $s$, and $\Pr_{\valpha^\eta, \vbeta}[\rs_j = -1] = O(e^{-2\eta})$, so 
    $$\nu_{i\mid \rs_j = 1}(\valpha, \vbeta) = \E_{\valpha, \vbeta}[\rs_i\mid \rs_j = 1] = \lim_{\eta\to +\infty}\nu_i(\valpha^\eta, \vbeta).$$
    Similarly, $$\nu_{i\mid \rs_j = 1}(\valpha, \vbeta') = \E_{\valpha, \vbeta'}[\rs_i\mid \rs_j = 1] = \lim_{\eta\to +\infty}\nu_i(\valpha^\eta, \vbeta').$$

    On the other hand, consider $\vbeta^\lambda$ so that $\beta^\lambda_e = \beta_e$ if $e\neq (i,j)$ and $\beta^\lambda_{i,j} = \beta_{i,j}+\lambda$.  By \cref{thm:monotone}, $\nu_i(\valpha^\eta, \vbeta^\lambda)$ is non-decreasing in $\lambda$ for all $\eta$.  Because $\vbeta^0 = \vbeta'$ and $\vbeta^1 = \vbeta$, we have 
    $$\nu_{i\mid \rs_j = 1}(\valpha, \vbeta')  = \lim_{\eta\to +\infty} \nu_i(\valpha^\eta, \vbeta')\le \lim_{\eta\to +\infty} \nu_i(\valpha^\eta, \vbeta) = \nu_{i\mid \rs_j = 1}(\valpha, \vbeta)$$
    Because $\rho_{i} = \frac{1+\nu_{i}}{1-\nu_{i}}$ is monotone in $\nu_i$, the second part follows. 
\end{proof}
Given $\vbeta'$ defined in \cref{lem:network_rhs}, by some direct computation with $\valpha = 0$
\begin{equation}\label{eq:rhs}
\rho_{i\mid \rs_j = 1}(\valpha, \vbeta)\ge \rho_{i\mid \rs_j = 1}(\valpha, \vbeta') = e^{2\alpha_i+2\beta_{i,j}}= e^{2\underline{\beta}}.
\end{equation}
\paragraph{An upper bound for RHS}
Now, we need to upper bound the correlation between non-adjacent $i$ and $k$ (non-friends).  We will use Weitz's self-avoiding walks reduction~\cite{weitz2006counting} to upper bound the correlation on general graph $G$ by the correlation on trees.

Given a general graph $G$, and an arbitrary node $i$, we can construct the Self Avoiding Walk Tree of $G$ rooted at $i$, denoted $T_{SAW}(G,i)$, so that $\Pr[\rs_i = 1\mid \rvs_U = \vs_U]$ is the same in $G$ as in the tree.  We outline the construction.  $T_{SAW}(G,i)$ enumerates all self-avoiding walks in $G$ starting at $i$ which terminates when it revisits a previous node (closes a cycle).  Then, $T_{SAW}(G,i)$ introduces a leaf with a certain boundary condition.\fang{more?}  The self-avoiding walk never revisits a node immediately, so there all the leaves with fixed boundary conditions are at least three hops away from node $i$.  Note that if $G$ has maximum degree $d$, $T_{SAW}$ is a $d$-ary tree.

\begin{theorem}[\cite{weitz2006counting}]
    For any $\valpha$, $\vbeta$, node $i\in V$, and configuration $\vs_U$ on $U\subset V$,
    $$\Pr_{\valpha, \vbeta}[\rs_i = 1\mid \rvs_U = \vs_U] = \Pr_{T_{SAW}(G,i)}[\rs_i = 1\mid \rvs_U = \vs_U].$$
\end{theorem}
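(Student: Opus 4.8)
The plan is to prove this classical identity of \citet{weitz2006counting} by a vertex-peeling recursion for the conditional marginal \emph{ratio} at the root, and then to show that iterating the recursion literally unfolds $G$ into the tree $T_{SAW}(G,i)$. Since $\Pr_{\valpha,\vbeta}[\rs_i = 1\mid \rvs_U = \vs_U] = \rho/(1+\rho)$ where $\rho = \rho_{i\mid \rvs_U = \vs_U}$ is the conditional ratio, it suffices to prove that this ratio is the same in $G$ as in $T_{SAW}(G,i)$. I would proceed by induction on the number of unpinned vertices, peeling off the root $i$ at each step, and treating the vertices of $U$ (with their spins $\vs_U$) as pinned base cases whose ratio is $0$ or $\infty$ in both $G$ and the tree.

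For the inductive step, let $j_1,\dots,j_d$ be the neighbors of $i$. Marginalizing $\rs_i$ against the local Ising field of its neighbors and telescoping over $(\rs_{j_1},\dots,\rs_{j_d})$ inside $G-i$, a direct computation of the Ising weight yields the product recursion
\begin{equation*}
    \rho_{i}^{G} = e^{2\alpha_i}\prod_{l=1}^{d}\frac{e^{2\beta_{i,j_l}}\,\rho_{j_l}^{(l)}+1}{\rho_{j_l}^{(l)}+e^{2\beta_{i,j_l}}},
\end{equation*}
where $\rho_{j_l}^{(l)}$ is the ratio at $j_l$ computed in $G-i$ with the earlier neighbors $j_1,\dots,j_{l-1}$ conditioned to fixed spins. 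Two facts make this useful: (i) each factor is exactly the edge-map of the two-spin tree recursion, and (ii) each subproblem lives on a strictly smaller instance ($i$ removed, additional vertices pinned), so the induction hypothesis applies to it.

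Unfolding the recursion then produces $T_{SAW}(G,i)$ directly: the sequence of vertices the peeling visits before it must pin is precisely a self-avoiding walk from $i$, and when the walk would revisit an already-removed vertex it terminates at a leaf carrying the pinned boundary spin — exactly Weitz's construction. Because the tree satisfies the identical product recursion with the same base cases (true leaves of $T_{SAW}$, the cycle-closing frozen leaves, and the $U$-pinned vertices, whose spins $\vs_U$ are carried through unchanged), the two root ratios coincide, and hence so do the marginals.

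The main obstacle is the boundary condition at the cycle-closing leaves. When a self-avoiding walk first revisits a vertex, whether the corresponding leaf of $T_{SAW}$ should be frozen to $+1$ or $-1$ is dictated by a canonical ordering of the edges at each vertex, and one must verify that the spin induced by the sequential conditioning order in the peeling recursion above agrees with the boundary spin assigned by the tree construction. Equivalently, one must check that the product recursion is invariant under the chosen processing order of the neighbors; establishing this order-independence and the exact matching of frozen spins is where the genuine work lies, the remainder being the routine Ising computation of the edge-map.
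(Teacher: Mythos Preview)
The paper does not prove this theorem; it is quoted as a result of \citet{weitz2006counting} and used as a black box inside the proof of \cref{thm:network}. So there is no in-paper proof to compare against.

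That said, your outline is essentially Weitz's original argument, with one inaccuracy worth flagging. When you telescope the ratio at $i$ over its neighbors, the subproblem ratio $\rho_{j_l}^{(l)}$ is \emph{not} the ratio at $j_l$ in $G-i$ with the earlier neighbors $j_1,\dots,j_{l-1}$ conditioned to fixed spins. The telescoping instead leaves behind an external field of strength $\pm\beta_{i,j_m}$ at \emph{every} other neighbor $j_m$ (for all $m\neq l$), with sign $+$ for $m<l$ and $-$ for $m>l$; equivalently, one attaches to each $j_m$ a pendant copy of $i$ frozen to $+1$ or $-1$. These frozen pendant copies become precisely the cycle-closing leaves of $T_{SAW}(G,i)$ when the recursion later returns to them along a self-avoiding walk, which is what makes the boundary spins match. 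Conditioning the $j_m$'s themselves would be a different (and incorrect) operation: it would disconnect parts of $G-i$ and lose the cycle information. Once this is corrected, the rest of your plan---induction on the number of unpinned vertices, the edge-map $\rho\mapsto (e^{2\beta}\rho+1)/(\rho+e^{2\beta})$, and matching base cases---is the standard route and goes through.
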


First, with the above theorem, we have $\nu_{i\mid \rs_k = 1}(\valpha,\vbeta) = \E_{\valpha,\vbeta}[\rs_i\mid \rs_k = 1] = \E_{T_{SAW}(G,i)}[\rs_i \mid \rs_k = 1]$.  By the monotone property in \cref{thm:monotone}, setting all two-hop neighbors $U$ in $T_{SAW}(G,i)$ to $1$ (recalled that any boundary conditions for $T_{SAW}(G,i)$ being at least three hops away) increases the conditional expectation, 
$$\ExpectC[T_{SAW}(G,i)]{\rs_i}{\rs_k = 1}\le \ExpectC[T_{SAW}(G,i)]{\rs_i}{\rvs_U = 1, \rs_k = 1}.$$
Let $T$ be the tree by truncating $T_{SAW}(G,i)$ at level $2$. By the Markov property of Ising models, the expectation is equal to the expectation on $T$.
 \begin{equation}\label{eq:lhs0}
     \ExpectC[\valpha,\vbeta]{\rs_i}{\rs_k = 1}\le \ExpectC[T_{SAW}(G,i)]{\rs_i}{\rvs_U = 1} = \ExpectC[T]{\rs_i}{\rvs_U = 1}.
 \end{equation}

Finally, we can recursively compute the probability ratio $\rho_i$ (and thus expectation $\nu_i$) on trees.  Specifically, given a rooted tree $T'$, we define $\rho_{T'}$ as the ratio of probabilities for the root to be $+1$ and $-1$ respectively, and $\rho_{T'\mid \rvs_U = \vs_U}$ for the ratio of conditional probabilities. 
As stated in the following lemma, it is well known (see, for example, \cite{georgii2011gibbs}) that the ratio of each node can be computed recursively over the children's ratio.
\begin{lemma}
    Given a tree $T$ rooted at $i$ with parameter $(\valpha, \vbeta)$ and boundary condition ${\vs_U}$, 
    $$\rho_{T\mid \rvs_U = \vs_U} = e^{2\alpha_i}\prod_{l = 1}^d \frac{\rho_{T_l\mid \rvs_U = \vs_U}e^{2\beta_{i,j_l}}+1}{e^{2\beta_{i,j_l}}+\rho_{T_l\mid \rvs_U = \vs_U}}$$
    where $j_1,\dots,j_d$ are children of $i$ and $T_l$ is the subtree rooted at $j_l$ for all $l$. 
\end{lemma}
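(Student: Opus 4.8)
The plan is to exploit the tree structure directly: conditioning on the root value $\rs_i=s$ decouples the $d$ pendant subtrees $T_1,\dots,T_d$, so the conditional law factorizes and each factor reduces to the root‑ratio of the corresponding subtree. Throughout I assume $i\notin U$, since otherwise $\rho_{T\mid\rvs_U=\vs_U}$ is degenerate ($0$ or $\infty$); the stated identity is a single recursive step, and the ``propagate ratios from the leaves'' recursion is obtained by iterating it.

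First I would regroup the energy $H(\vs)=\sum_{(a,b)\in E}\beta_{a,b}s_as_b+\sum_a\alpha_as_a$ according to the partition of $T$'s edge set into the $d$ root edges $\{(i,j_l)\}$ and the edges inside each $T_l$, and of $T$'s vertex set into $\{i\}$ and the vertices of each $T_l$. This rewrites $\exp(H(\vs))=e^{\alpha_is_i}\prod_{l=1}^d e^{\beta_{i,j_l}s_is_{j_l}}e^{H_{T_l}(\vs_{T_l})}$, where $H_{T_l}$ is the Ising energy of the standalone subtree $T_l$ (its internal $\beta$ terms plus the $\alpha$ terms on its vertices). Imposing $\rvs_U=\vs_U$ and summing over all configurations with $s_i=s$ fixed, the overall normalizing constant and the $\Pr[\rvs_U=\vs_U]$ factor are identical for $s=1$ and $s=-1$, hence cancel in the ratio, giving
$$\rho_{T\mid\rvs_U=\vs_U}=e^{2\alpha_i}\prod_{l=1}^d\frac{Z_l(1)}{Z_l(-1)},\qquad Z_l(s):=\sum_{\vs_{T_l}}e^{\beta_{i,j_l}s\,s_{j_l}}e^{H_{T_l}(\vs_{T_l})},$$
the inner sum over configurations of $T_l$ consistent with $\vs_U$.

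Second, for each $l$ I would split $Z_l(s)$ on the value $t=s_{j_l}$ of the subtree's root, $Z_l(s)=e^{\beta_{i,j_l}s}W_l(1)+e^{-\beta_{i,j_l}s}W_l(-1)$, where $W_l(t):=\sum_{\vs_{T_l}:s_{j_l}=t}e^{H_{T_l}(\vs_{T_l})}$. By definition of the subtree Ising measure, $\rho_{T_l\mid\rvs_U=\vs_U}=W_l(1)/W_l(-1)$. Writing $r:=\rho_{T_l\mid\rvs_U=\vs_U}$ and $\beta:=\beta_{i,j_l}$, and multiplying numerator and denominator by $e^{\beta}/W_l(-1)$, we get $Z_l(1)/Z_l(-1)=(e^{2\beta}r+1)/(r+e^{2\beta})$. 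Substituting into the displayed product yields exactly the claimed recursion.

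The computation is elementary, so there is no genuine obstacle—only care points of bookkeeping: getting the edge/vertex partition of $H$ exactly right, confirming that the common normalization truly cancels in the ratio, and noting that the resulting formula also covers the boundary cases $j_l\in U$, where $r\in\{0,\infty\}$ and the factor degenerates correctly to $e^{\pm2\beta_{i,j_l}}$. Iterating the identity with base case a single leaf ($\rho=e^{2\alpha_i}$, empty product) recovers the familiar recursive evaluation of $\rho$ over the tree.
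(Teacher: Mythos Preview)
Your argument is correct: the tree structure makes the configurations on the $d$ subtrees conditionally independent given $s_i$, the common normalizer cancels in the ratio, and splitting each $Z_l(s)$ on the value of $s_{j_l}$ yields exactly the displayed factor $(e^{2\beta_{i,j_l}}r+1)/(r+e^{2\beta_{i,j_l}})$. The boundary-case check for $j_l\in U$ and the base case are fine.

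There is nothing to compare against in the paper: the lemma is stated as well known with a reference to Georgii's book and is not proved in the paper. Your derivation is precisely the standard elementary one that underlies that citation.
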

By the monotone property in~\cref{thm:monotone}, the maximum of right-hand side of \cref{eq:lhs0} happens when $T$ is a complete $d$-ary tree with $\vbeta = \overline{\beta}$.  Therefore,
\begin{equation}\label{eq:lhs}
    \rho_{i\mid \rs_k = 1}(\valpha, \vbeta)\le \left(\frac{e^{2(d+1)\overline{\beta}}+1}{e^{2\overline{\beta}}+e^{2d\overline{\beta}}}\right)^d.
\end{equation}
Finally, with \cref{eq:rhs,eq:lhs}, we have 
$\rho_{i\mid \rs_j = 1}(\valpha, \vbeta)\ge e^{2\underline{\beta}}\ge \left(\frac{e^{2(d+1)\overline{\beta}}+1}{e^{2\overline{\beta}}+e^{2d\overline{\beta}}}\right)^d\ge \rho_{i\mid \rs_k = 1}(\valpha, \vbeta)$ which implies \cref{eq:network}.

\begin{remark}
Note that for any graph $G$ there exists small enough $\overline{\beta}, \underline{\beta}$ so that the condition in \cref{thm:network} is satisfied, because the inequality become equality when $\overline{\beta} = \underline{\beta} = 0$, and $\frac{\partial}{\partial{\beta}}\frac{2{\beta}}{d}>0 = \frac{\partial}{\partial{\beta}}\ln \frac{e^{2(d+1){\beta}}+1}{e^{2{\beta}}+e^{2d{\beta}}}$.

The bound between $\vbeta$ and $d$ is necessary as shown in \cref{fig:parallel}.  On the other hand, by the Markov property of the Ising model, the majority of all neighbor's signals is a sufficient statistic, and we can show the majority of all neighbor's signals are uniformly dominant to a non-neighbor's signal.  Therefore, we can get a symmetrically strongly truthful mechanism by replacing $j$'s reports with the majority of reports from $i$'s neighbors. 
\end{remark}
\begin{figure}[htbp]
    \centering
    \includegraphics[width = 0.5\textwidth]{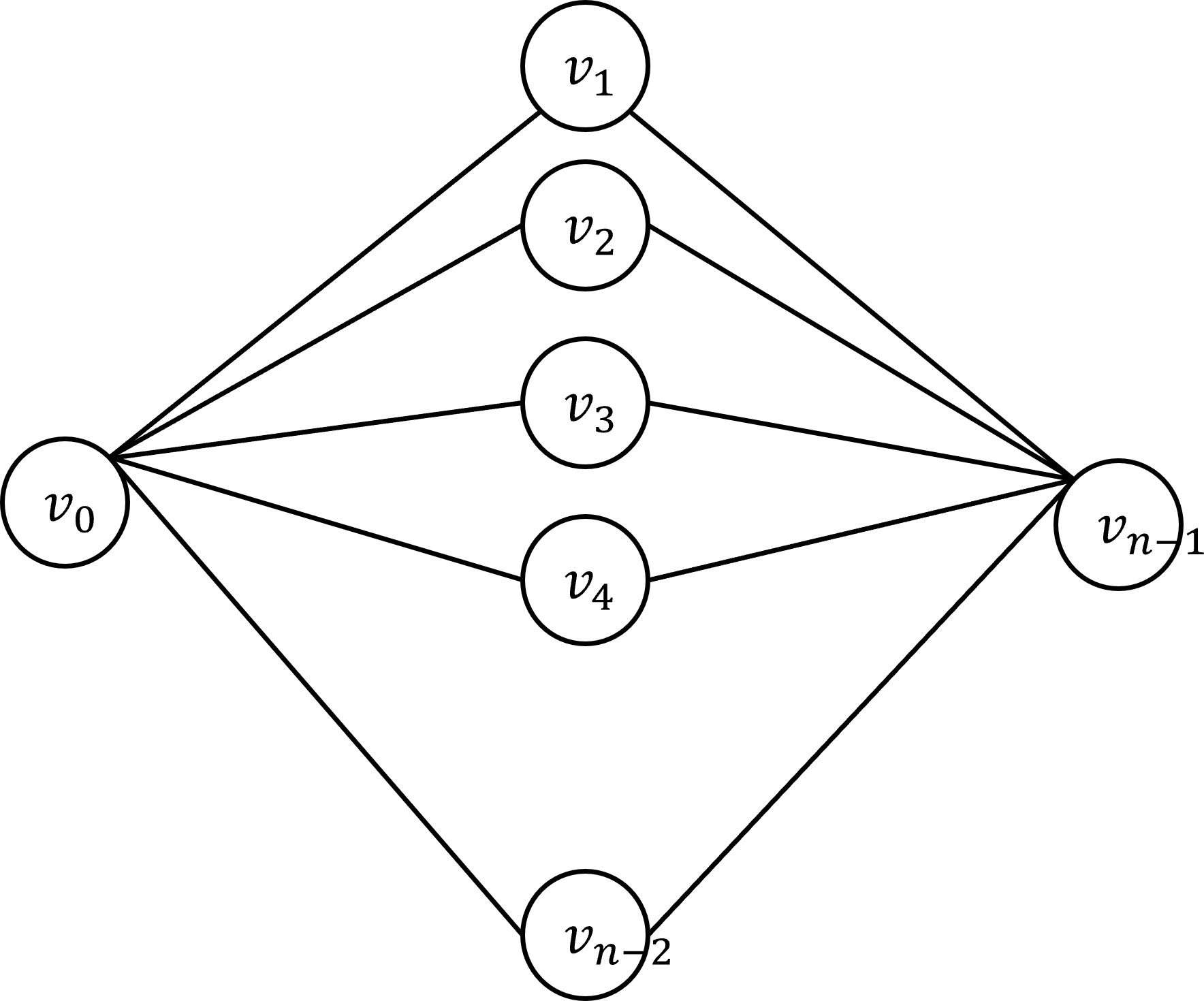}
    \caption{As fixing any $\underline{\beta}, \overline{\beta}$, we can construct a simple graph with $V = \{v_0,\dots, v_{n-1}\}$ and $E = \{(v_0,v_l), (v_l, v_{n-1}): l = 1,\dots, n-2\}$ where agent $v_0$ and $v_{n-1}$ are not connected but share $n-2$ common friends.  We can show that the correlation between $\rs_0$ and $\rs_{n-1}$ converge to $1$ as the number of common friends $d$ increases, while the correlation between $\rs_0$ and $\rs_1$ is bounded away from $1$.}
    \label{fig:parallel}
\end{figure}
\section{Proof of Theorem~\ref{thm:uniq}}
The sufficient condition is done by \cref{lem:ud}, because 
\begin{align*}
    &\argmax_{\hat{s}_i \in \{-1,1\}}\ExpectC{\lambda U^{BPP}(\hat{s}_i, \rs_j, \rs_k)+\mu(\rs_j, \rs_k)}{\rs_i = s_i}\\
    =&\argmax_{\hat{s}_i \in \{-1,1\}}\ExpectC{\lambda U^{BPP}(\hat{s}_i, \rs_j, \rs_k)}{\rs_i = s_i}\\
    =&\argmax_{\hat{s}_i \in \{-1,1\}}\ExpectC{U^{BPP}(\hat{s}_i, \rs_j, \rs_k)}{\rs_i = s_i}\tag{$\lambda >0$}\\
    =& s_i\tag{by \cref{lem:ud}}
\end{align*}
For the necessary, given $U$, define $D(s_j, s_k) = \frac{1}{2}\left(U(1,s_j, s_k)-U(-1,s_j, s_k)\right)$ and $\mu(s_j, s_k) = \frac{1}{2}(U(1,s_j,s_k)+U(-1,s_j, s_k))$ for all $s_j$ and $s_k$ in $\{-1, 1\}$.  Hence
\begin{equation}\label{eq:uniq4}
    U(s_i,s_j,s_k) = s_i\cdot D(s_j,s_k)+\mu(s_j,s_k), \forall s_i, s_j,s_k\in \{-1,1\}
\end{equation}
Given a joint distribution satisfying \cref{def:ud}, we let $p^{s_i}(s_j, s_k) = \Pr[\rs_j = s_j, \rs_k = s_k\mid \rs_i = s_i]$ and additionally write $p^{s_i} = \begin{bmatrix}
    p^{s_i}(1,1)&p^{s_i}(1,-1)\\
    p^{s_i}(-1,1)& p^{s_i}(-1,-1)
\end{bmatrix}$. 
Then \cref{def:ud} ensures that
$$p^1(1,-1)>p^1(-1,1)\text{ and }p^{-1}(1,-1)<p^{-1}(-1,1).$$

Because $U$ is truthful for all uniformly dominant tuples, we have
\begin{equation}\label{eq:uniq1}
    \begin{aligned}
        0<&\ExpectC{U(1,\rs_j,\rs_k)}{\rs_i = 1}-\ExpectC{U(-1,\rs_j,\rs_k)}{\rs_i = 1} =2\sum_{s_j, s_k} D(s_j, s_k)p^1(s_i, s_j)\\
    0>&\ExpectC{U(1,\rs_j,\rs_k)}{\rs_i = -1}-\ExpectC{U(-1,\rs_j,\rs_k)}{\rs_i = -1} = 2\sum_{s_j, s_k}D(s_j, s_k)p^{-1}(s_i, s_j).
    \end{aligned}
\end{equation}
Suppose the following are true \begin{align}
    &D(1,-1) = -D(-1,1)>0\label{eq:uniq2}\\
    &D(1,1) = D(-1,-1) = 0\label{eq:uniq3}
\end{align}
Let $\lambda = D(1,-1)>0$. By \cref{eq:uniq2,eq:uniq3}, we have
\begin{align*}
    U(s_i, s_j, s_k) =& s_i\cdot D(s_j, s_k)+\mu(s_j, s_k)\tag{by \cref{eq:uniq4}}\\
    =& \lambda\cdot s_i(s_j-s_k)+\mu(s_j, s_k)\tag{by \cref{eq:uniq2,eq:uniq4}}
\end{align*} which completes the proof.   Thus, we will construct three joint distributions satisfying \cref{def:ud} to prove \cref{eq:uniq2,eq:uniq3}.

The first joint distribution $p_1^{s_i}(s_j, s_k)$ with $0<\delta\le 1/2$
$$p^1 = \begin{bmatrix}0 & 1/2+\delta\\1/2-\delta& 0\end{bmatrix}\text{ and }p^{-1} = \begin{bmatrix}0 & 1/2-\delta\\1/2+\delta& 0\end{bmatrix}.$$
Then \cref{eq:uniq1} on the first distribution reduces to 
\begin{align*}
    0<&D(1,-1)p_1^1(1,-1)+D(-1,1)p_1^1(-1,1)
    =\frac{1}{2}(D(1,-1)+D(-1,1))+\delta(D(1,-1)-D(-1,1))\\
    0>&D(1,-1)p_1^{-1}(1,-1)+D(-1,1)p_1^{-1}(-1,1) = \frac{1}{2}(D(1,-1)+D(-1,1))-\delta(D(1,-1)-D(-1,1)).
\end{align*}
As we take $\delta$ to zero, we prove $D(1,-1) = -D(-1,1)$.  Then plugging in with nonzero $\delta$, we have $D(1,-1)>0$ and complete the proof of \cref{eq:uniq2}.

The second joint distribution $p_2^{s_i}(s_j, s_k)$ with $0\le \epsilon\le 1$ is 
$$p^1 = \begin{bmatrix}
    1-\epsilon & \frac{3}{4}\epsilon\\
    \frac{\epsilon}{4} & 0
\end{bmatrix}\text{ and } p^{-1} = \begin{bmatrix}
    1-\epsilon & \frac{\epsilon}{4}\\
    \frac{3\epsilon}{4} & 0
\end{bmatrix}.$$
With \cref{eq:uniq2}, \cref{eq:uniq1} reduces to 
\begin{align*}
    0<&(1-\epsilon)D(1,1)+\frac{\epsilon}{4}(D(1,-1)-D(-1,1))\\
    0>&(1-\epsilon)D(1,1)-\frac{\epsilon}{4}(D(1,-1)-D(-1,1)).
\end{align*}
By taking $\epsilon$ to zero, we prove $D(1,1) = 0$.  We can prove $D(-1,-1) = 0$ using the similar trick and complete the proof of \cref{eq:uniq3}.

\section{Additional empirical results}\label{app:exp}
\subsection{Comparison data}
\label{app:comparison}
Here we test if the dataset satisfy transitivity property.  We denote the proportion of rankings such that item $a$ is higher than item $a'$ in the dataset by $p_{a>a'}$. If $p_{a>a'}>1/2$, $p_{a'>a''}>1/2$, and $p_{a>a''}>1/2$, we say the triple of items $\{a,a',a''\}$ empirically satisfies transitivity. If $p_{a>a'}>1/2$, $p_{a'>a''}>1/2$, and $p_{a>a''}>\max\{p_{a>a'},p_{a'>a''}\}$, we say the triple of items $\{a,a',a''\}$ empirically satisfies strong transitivity. We first test the transitivity of the SUSHI subdataset selected in \cref{sec:comparisonexp}. We find that $100\%$ of the item triples empirically satisfy transitivity, and $69.17\%$ of the item triples empirically satisfy strong transitivity. This suggests that our transitivity assumption for the comparison data is mostly aligned.

Moreover, we conducted an experiment on the entire SUSHI dataset without any selection criteria and demonstrated the results in \cref{fig.alluser}. Observe that the ECDF of payments from original human users also dominates the payments under the uninformed strategy and the unilateral deviating strategy. This is consistent with our experimental results in \cref{sec:comparisonexp}.  However, there are two minor difference.  First the separation of truth-telling from the other two is slightly less prominent than \cref{fig.comparisonq} with the selection criteriaThis may be due to a slightly lower degree of transitivity across agents with different backgrounds. In particular, we found the average value of $p_{a>a''}-\max\{p_{a>a'}, p_{a'>a''}\}$ is $0.0559$ without the selection criteria which is less than $0.0604$ with the selection criteria in \cref{fig.comparisonq}.  Second, the fraction of agents receiving positive payments is slightly higher than in \cref{fig.comparisonq} ($0.785$ and $0.763$ respectively).  This aligned with or empirical (strong) transitively which are $1$ and $0.7667$ compared to the above $1$ and $0.69117$.\fang{take a look}\shi{looks good}
Furthermore, we also conducted experiments on other groups of users by changing the selection criteria. Those interested can refer to \cref{fig:comparison_all}, \cref{fig:comparison_all2} and \cref{tab:comparison_all} for the results, which further verify the effectiveness of our mechanism.\fang{we may also include table similar to \cref{tab:network_all}} \shi{added}\fang{Thanks.  I also recall we have additional simulation on stack exchange.  Do we want to include them?}\shi{That experiment may not be very persuasive since we treat each vote as an agent and SST is automatically satisfied.}

\begin{table}[ht]
\centering
\begin{tabular}{lccc}
\toprule
Selection criteria & Number of users & Average utility & Fraction of positive utility \\ \hline
All (No selection) & $5000$ & $0.138$ & $78.5\%$\\
Female, 30-49, Kanto/Shizuoka  & $249$   & $0.137$   & $76.3\%$   \\
Male, 30-49, Kanto/Shizuoka  & $185$   & $0.167$   & $82.2\%$   \\
Female, 5-29, Kanto/Shizuoka  & $146$   & $0.175$   & $84.2\%$   \\
Female, 50+, Kanto/Shizuoka  & $26$   & $0.13$   & $80.8\%$   \\
Female, 30-49, Tohoku  & $30$   & $0.174$   & $83.3\%$   \\
Female, 30-49, Hokuriku  & $23$   & $0.105$   & $69.6\%$   \\ \bottomrule
\end{tabular}
\caption{Summary of truth-telling utility in \cref{app:comparison}.}
\label{tab:comparison_all}
\end{table}

\begin{figure}[h]
    \centering
    \begin{subfigure}[b]{0.48\textwidth}
        \centering
        \includegraphics[width=\textwidth]{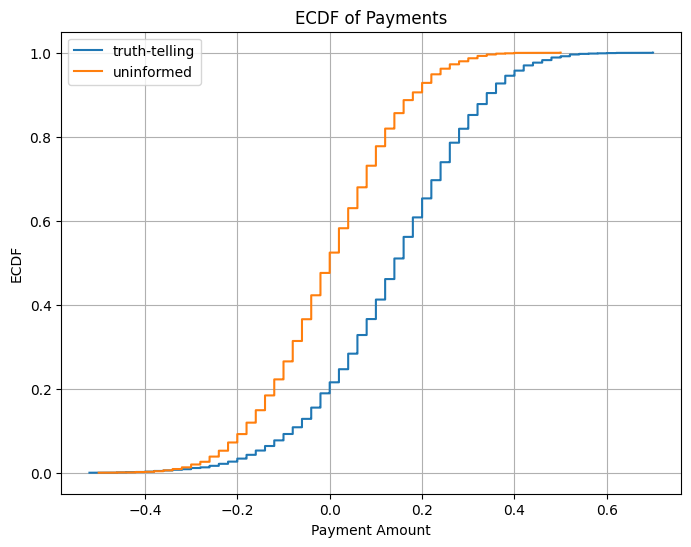}
    \end{subfigure}
    \hfill
    \begin{subfigure}[b]{0.48\textwidth}
        \centering
        \includegraphics[width=\textwidth]{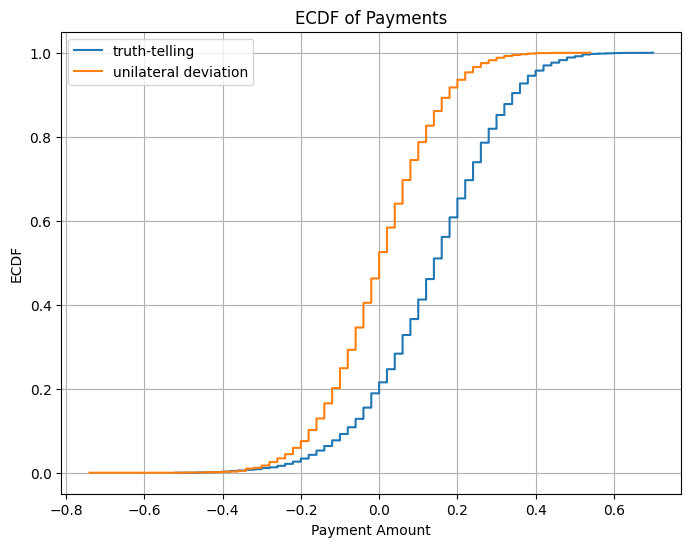}
    \end{subfigure}
    \caption{ECDF comparisons on all users without any selection.}
    \label{fig.alluser}
\end{figure}

\begin{figure}[ht]
    \centering
    \begin{minipage}[b]{0.41\textwidth}
        \centering
        \includegraphics[width=\textwidth]{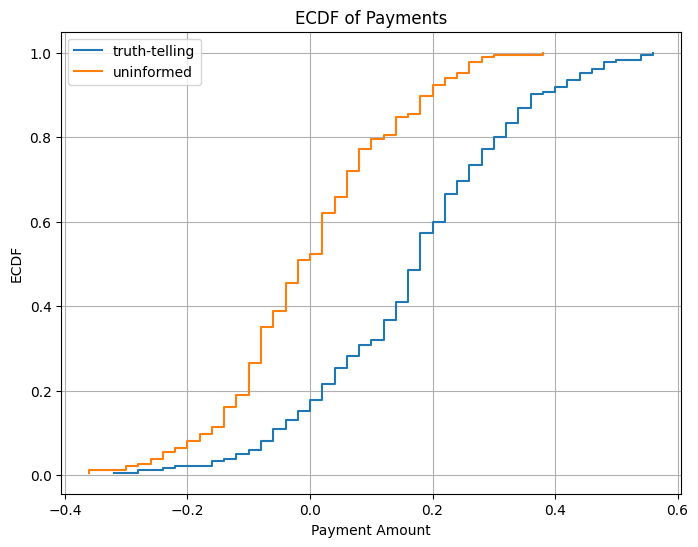}
    \end{minipage}
    \begin{minipage}[b]{0.41\textwidth}
        \centering
        \includegraphics[width=\textwidth]{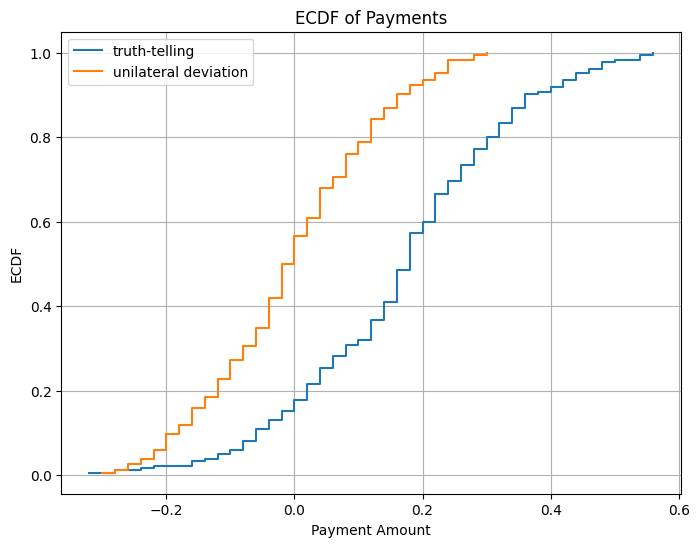}
    \end{minipage}
    
    
    \begin{minipage}[b]{0.41\textwidth}
        \centering
        \includegraphics[width=\textwidth]{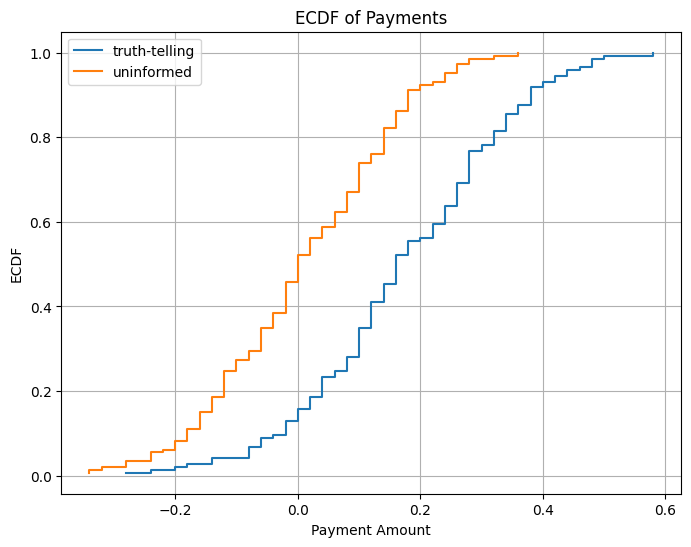}
    \end{minipage}
    \begin{minipage}[b]{0.41\textwidth}
        \centering
        \includegraphics[width=\textwidth]{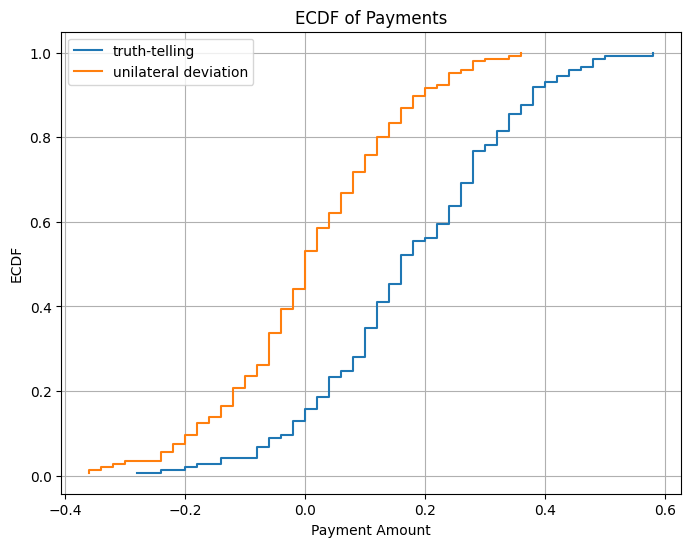}
    \end{minipage}


    \begin{minipage}[b]{0.41\textwidth}
        \centering
        \includegraphics[width=\textwidth]{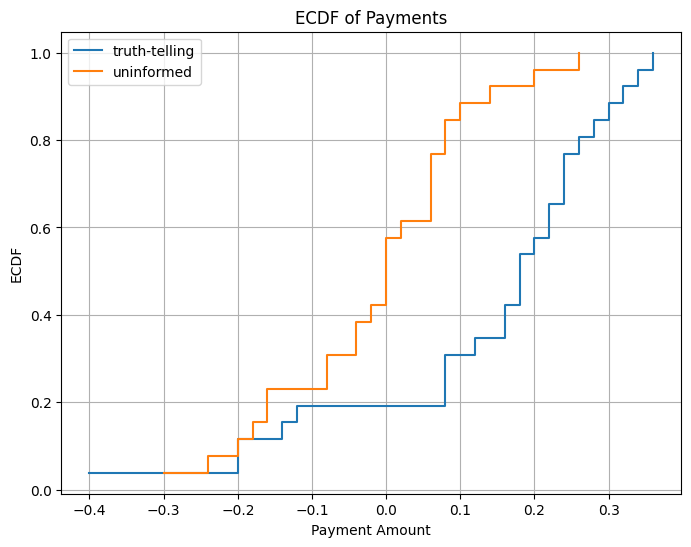}
    \end{minipage}
    \begin{minipage}[b]{0.41\textwidth}
        \centering
        \includegraphics[width=\textwidth]{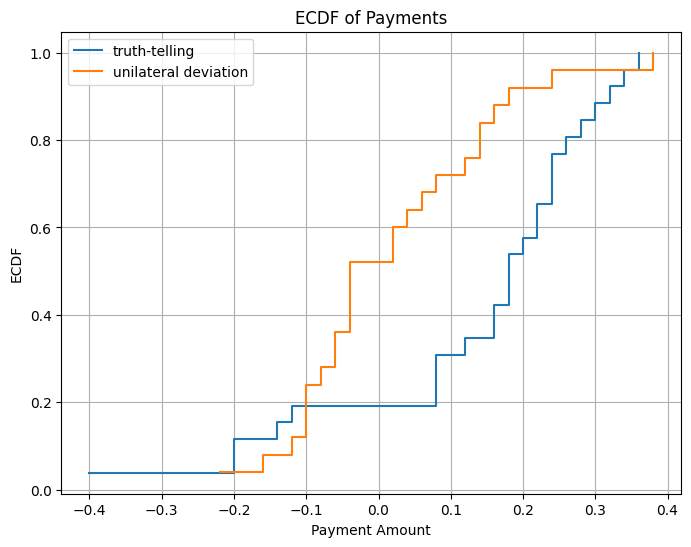}
    \end{minipage}
    
    
    \caption{In each of the rows, we present the ECDF comparisons after changing the selection criteria for the user group as follows: from female to male, from ages 30–49 to ages 5–29, from ages 30–49 to ages 50+, respectively.}
    \label{fig:comparison_all}
\end{figure}

\begin{figure}[ht]
    \centering
    \begin{minipage}[b]{0.41\textwidth}
        \centering
        \includegraphics[width=\textwidth]{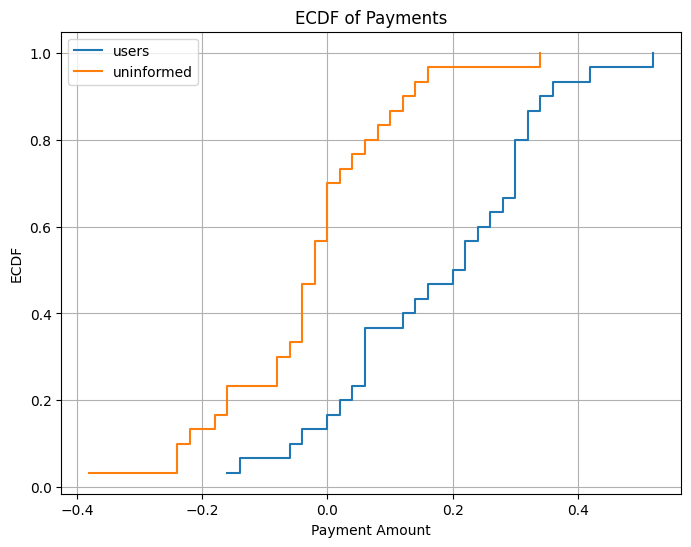}
    \end{minipage}
    \begin{minipage}[b]{0.41\textwidth}
        \centering
        \includegraphics[width=\textwidth]{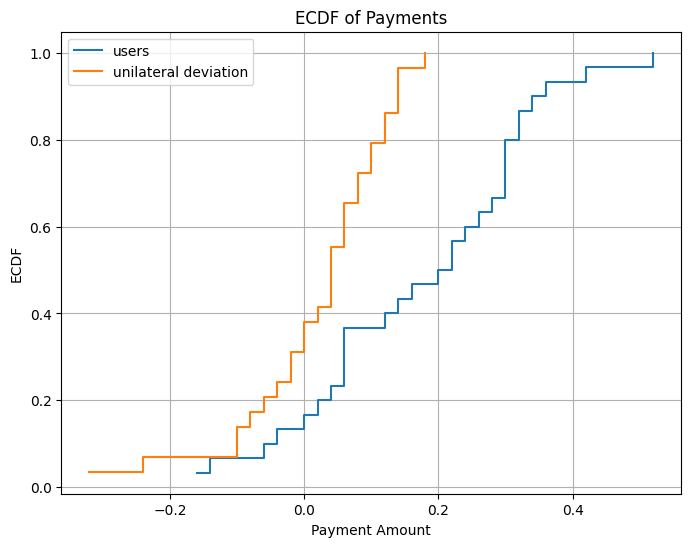}
    \end{minipage}

    \begin{minipage}[b]{0.41\textwidth}
        \centering
        \includegraphics[width=\textwidth]{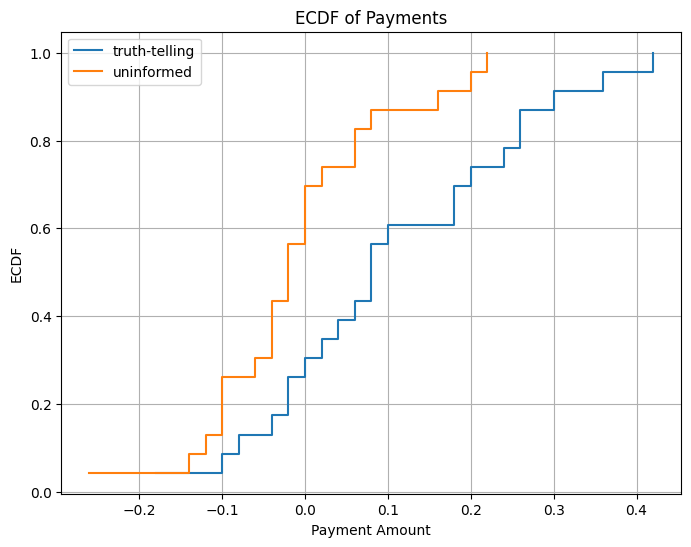}
    \end{minipage}
    \begin{minipage}[b]{0.41\textwidth}
        \centering
        \includegraphics[width=\textwidth]{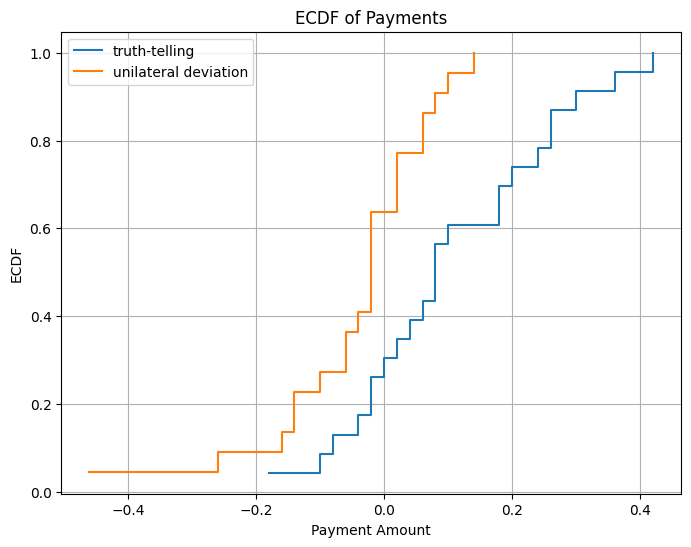}
    \end{minipage}
    \caption{In each of the rows, we present the ECDF comparisons after changing the location criteria for the user group as follows: from mostly living in Kanto or Shizuoka to Tohoku until age 15, and from mostly living in Kanto or Shizuoka to Hokuriku until age 15, respectively.}
    \label{fig:comparison_all2}
\end{figure}

\subsection{Networked data}\label{app:network}
Alongside \cref{fig:Last.fm dataset}, \Cref{fig:network_all,tab:network_all} present empirical results for the top five popular artists in the dataset, excluding Lady Gaga, who are Britney Spears, Rihanna, The Beatles, and Katy Perry.
All these settings show similar results.  However, the Beatles' data is less conclusive as the payment distribution under the uninformed strategy profile is close to the truth-telling.  This observation is also documented in \citet{daskalakis2017concentration} which notes that the Ising model performs much better for rock artists than for pop artists. The authors conjecture that this may be due to the highly divisive popularity of pop artists like Lady Gaga and Britney Spears, whose listeners may form dense cliques within the graph.

Note that there is a buck of agent with a payment of around $0.5$ under the truth-telling.  This is because many non-listeners have no listener friends, and payment is $1-[(1-p)-p] = 2p$ is twice the popularity $p\approx 0.25$.  Moreover, the jump is most minor for the Beatles, and indicates less agreement between non-listeners. 
Additionally, by the definition of bonus-penalty payment, we can see the payment of deviation is the minus of the truthful payment, so that the ECDF is symmetric around $(0,0.5)$. 
\begin{table}[ht]
\centering
\begin{tabular}{lccc}
\toprule
Artists & Fraction of listener & Average utility & Fraction of positive utility \\ \hline
Lady Gaga & $32.2\%$ & $0.37$ & $76\%$\\
Britney Spears & $27.6\%$ & $0.420$ & $82.6\%$\\
Rihanna & $25.6\%$ & $0.422$ & $83.4\%$\\
The Beatles & $25.4\%$ & $0.137$ & $68.5\%$\\
Katy Perry   & $25.0\%$   & $0.361$   & $79.9\%$   \\ \bottomrule
\end{tabular}
\caption{Summary of truth-telling utility in \cref{app:network}.}
\label{tab:network_all}
\end{table}

\begin{figure}[ht]
    \centering
    \begin{minipage}[b]{0.41\textwidth}
        \centering
        \includegraphics[width=\textwidth]{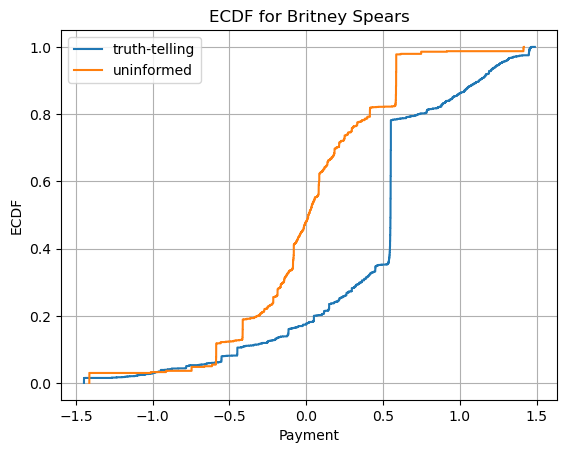}
    \end{minipage}
    \begin{minipage}[b]{0.41\textwidth}
        \centering
        \includegraphics[width=\textwidth]{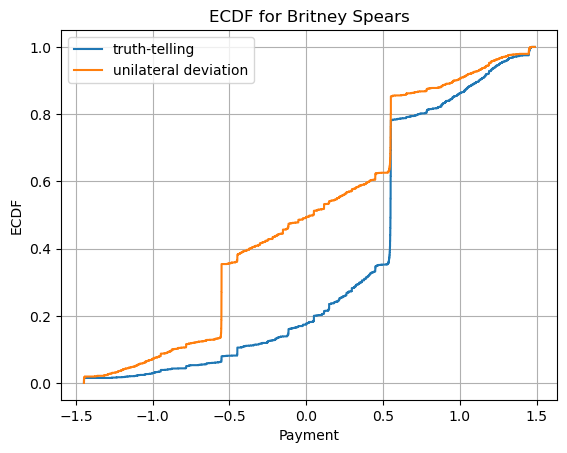}
    \end{minipage}
    
    
    \begin{minipage}[b]{0.41\textwidth}
        \centering
        \includegraphics[width=\textwidth]{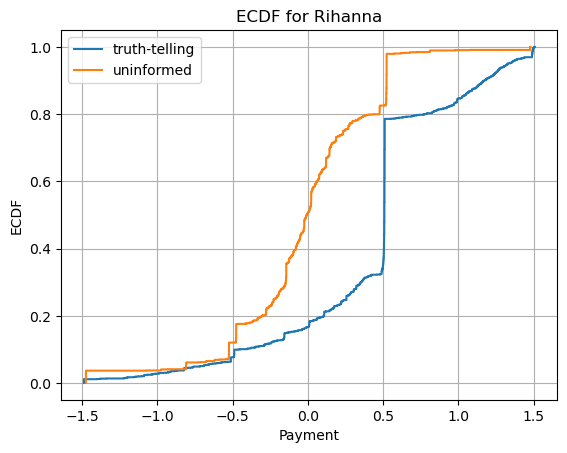}
    \end{minipage}
    \begin{minipage}[b]{0.41\textwidth}
        \centering
        \includegraphics[width=\textwidth]{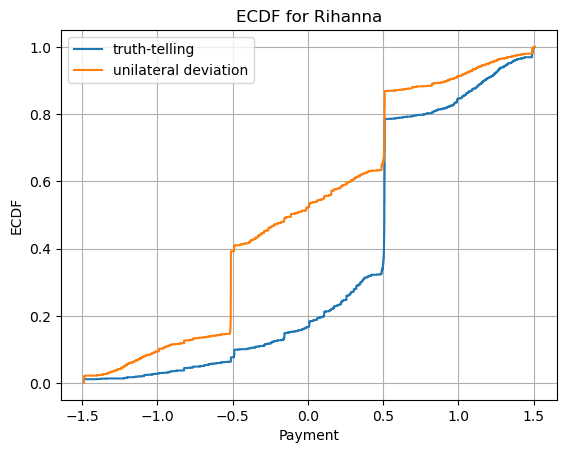}
    \end{minipage}


    \begin{minipage}[b]{0.41\textwidth}
        \centering
        \includegraphics[width=\textwidth]{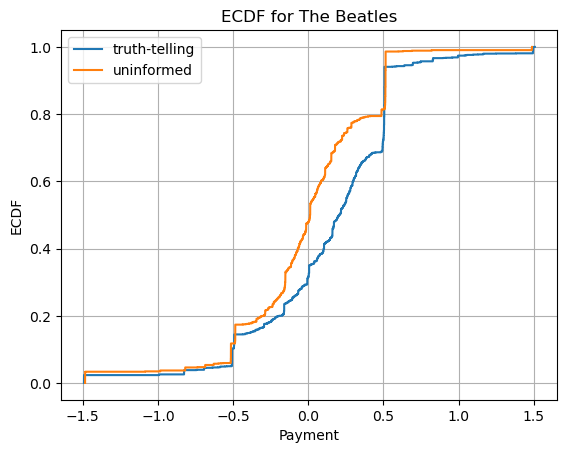}
    \end{minipage}
    \begin{minipage}[b]{0.41\textwidth}
        \centering
        \includegraphics[width=\textwidth]{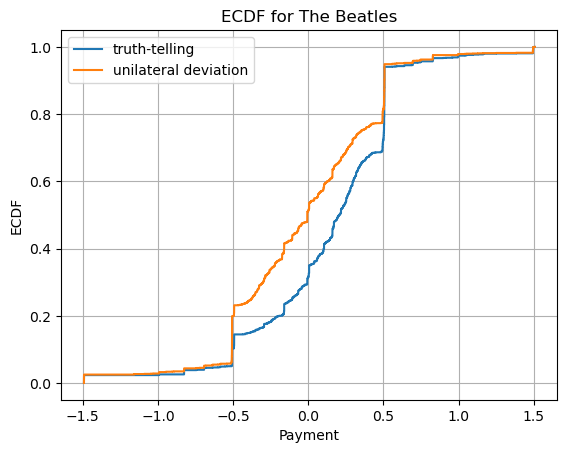}
    \end{minipage}
    

    \begin{minipage}[b]{0.41\textwidth}
        \centering
        \includegraphics[width=\textwidth]{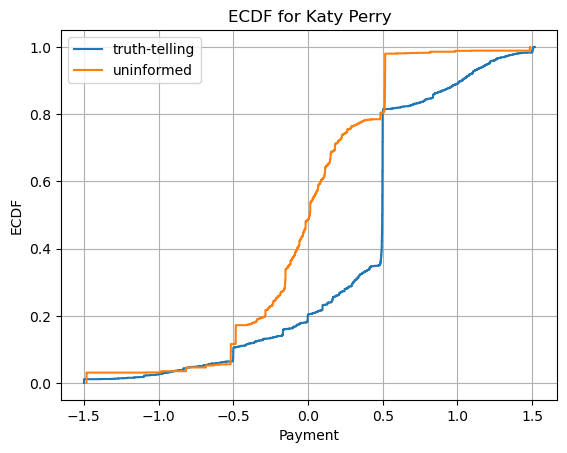}
    \end{minipage}
    \begin{minipage}[b]{0.41\textwidth}
        \centering
        \includegraphics[width=\textwidth]{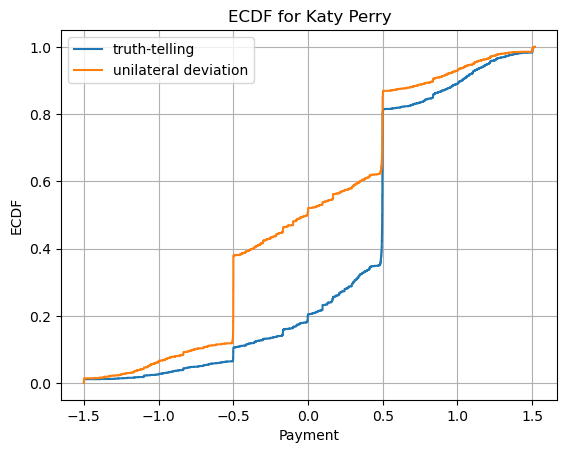}
    \end{minipage}
    
    \caption{Last.fm dataset for other top five popular artists excluding Lady Gaga.}
    \label{fig:network_all}
\end{figure}
\Cref{fig:uninformed_network_all} further shows the scatter plot of average payment and fraction of agents with positive payments across the top fifty popular artists where all settings have more than $60\%$ percent of agents get positive payment.  However, for less popular artists, the performance of our mechanism declines.  This is expected, as we cannot provide effective incentives when only one agent listens to an artist.
\begin{figure}
    \centering
    \includegraphics[width = 0.7\textwidth]{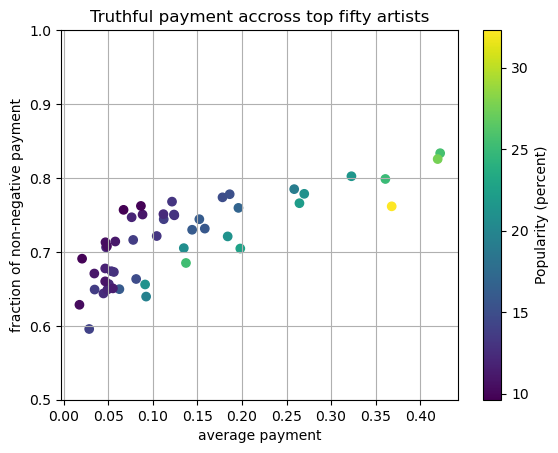}
    \caption{Average payment and fraction of positive payment under the truth-telling across top fifty popular artists.}\label{fig:uninformed_network_all}
\end{figure}

\clearpage

\end{document}